\renewcommand{\theequation}{\thesection\arabic{equation}}
\newtheorem{theorem}{Theorem}
\newtheorem{assumption}{Assumption}
\theoremstyle{definition}
\newtheorem{definition}{Definition}
\newcommand{\zh}[1]{\textcolor{black}{#1}}
\begin{document}


\renewcommand{\baselinestretch}{2}

\markright{ \hbox{\footnotesize\rm Statistica Sinica
}\hfill\\[-13pt]
\hbox{\footnotesize\rm
}\hfill }

\markboth{\hfill{\footnotesize\rm FIRSTNAME1 LASTNAME1 AND FIRSTNAME2 LASTNAME2} \hfill}
{\hfill {\footnotesize\rm FILL IN A SHORT RUNNING TITLE} \hfill}

\renewcommand{\thefootnote}{}
$\ $\par


\centerline{\large\bf Multilayer Network Regression 
with Eigenvector}
\centerline{\large\bf Centrality and Community Structure}
\vspace{.25cm}
\centerline{Zhuoye Han} 
	\vspace{.1cm} 
	\centerline{\it School of Mathematical Sciences, Fudan University}
	\centerline{Tiandong Wang} 
	\vspace{.1cm} 
	\centerline{\it Shanghai Center for Mathematical Sciences, Fudan University}
	\centerline{\it Shanghai Academy of Artificial Intelligence for Science}
	\vspace{.3cm} 
	\centerline{Zhiliang Ying} 
	\vspace{.1cm} 
	\centerline{\it Department of Statistics, Columbia University}
	\vspace{.3cm} \fontsize{9}{11.5pt plus.8pt minus.6pt}\selectfont




\begin{quotation}
\noindent {\it Abstract:}
In the analysis of complex networks, centrality measures and community structures play pivotal roles. For multilayer networks, a critical challenge lies in effectively integrating information across diverse layers while accounting for the dependence structures both within and between layers. We propose an innovative two-stage regression model for multilayer networks, combining eigenvector centrality and network community structure within fourth-order tensor-like multilayer networks. We develop new community-based centrality measures, integrated into a regression framework. To address the inherent noise in network data, we conduct separate analyses of centrality measures with and without measurement errors and establish consistency for the least squares estimates in the regression model. The proposed methodology is applied to the world input-output dataset, investigating how input-output network data among different countries and industries influence the gross output of each industry.

\vspace{9pt}
\noindent {\it Key words and phrases:}
Multilayer Networks, Centrality Measures, Network Regression, Measurement Errors, World Input-Output Database.
\par
\end{quotation}\par

\def\thefigure{\arabic{figure}}
\def\thetable{\arabic{table}}

\renewcommand{\theequation}{\thesection.\arabic{equation}}

\fontsize{12}{14pt plus.8pt minus .6pt}\selectfont
\section{Introduction}
\label{sec:intro}

The inherent heterogeneity of real-world systems has led to multilayer networks as a powerful framework for understanding complex network mechanisms. While single-layer network analysis has yielded significant insights across various domains - from social models examining network effects on advertisement \citep{banerjee2019using} to economic networks studies investigating market structure \citep{allen2019ownership} and currency risk premium \citep{cai2021network, richmond2019trade} - these approaches typically analyze multilayer data in isolation, neglecting potential interlayer dependencies. For instance, conventional single-layer analyses of Twitter data treat friendship, reply, and retweet networks separately, disregarding their inherent interdependencies. In contrast, a multilayer network approach conceptualizes these aspects as interconnected layers, enabling a more comprehensive understanding of both intra- and inter-layer dependence structures.

In this paper, we develop new community-based centrality measures for multilayer networks, integrated into a two-stage regression framework. Centrality measures, which quantify node importance, serve as fundamental explanatory variables in single-layer network analysis \citep{cai2021network, cai2022linear}. Among various proposed measures \citep{jackson2008social, kolaczyk2014statistical}, eigenvector centrality \citep{rowlinson_1996} has emerged as particularly useful. This concept has been extended to multilayer networks, with our work building upon the tensor-based multilayer network structure introduced by \cite{de2013mathematical} and its corresponding eigenvector-like centrality definition presented by \cite{de2013centrality}. Community structure, another fundamental aspect of complex networks, characterizes node similarity through densely connected subgroups. In single-layer networks, community detection has proven invaluable for interpreting complex relational dynamics, with significant methodological advancements \citep{pothen1997graph,newman2004finding,palla2005uncovering,fortunato2010community}. However, existing research has predominantly focused on detection algorithms, with limited integration of community structure with node centrality measures for macroscopic network characterization.

Recent methodological developments in single-layer network analysis provide important foundations for our work. \cite{le2022linear} examine linear regression on multiple network eigenvectors, incorporating measurement error and developing inference methods for hypothesis testing. \cite{cai2022linear} establish theoretical properties for linear regression with multiple centrality measures in sparse single-layer networks, including consistency and distributional theory for the least squares estimators. Furthermore, \cite{cai2021network} propose the SuperCENT framework, which systematically investigates relationships between monoplex network centralities and response variables of interest.

While these single-layer approaches provide valuable insights, they prove inadequate for multilayer network analysis, as they typically examine layers in isolation, disregarding interlayer connections. Our work addresses this limitation by developing a tailored analytical framework that explicitly considers layer interactions. By imposing appropriate constraints on community structure, we obtain estimators with superior asymptotic properties for general multilayer networks.

This paper makes three primary contributions: (1) developing a novel method for extracting community-based node centrality features from tensor-form multilayer networks; (2) establishing relationships between these features and macroscopic node characteristics; and (3) deriving asymptotic properties of parameter estimators in our proposed regression model, accommodating both error-free and measurement-error scenarios. 

The rest of this paper is organized as follows. Section \ref{sec:method} proposes a multilayer regression framework that incorporates community-based centrality scores into tensorial multilayer networks. Section \ref{sec:theory} presents the theoretical properties of the least squares estimators under mild conditions, measurement-error, and error-free scenarios. Section \ref{sec:simu} evaluates the finite-sample performance of our methodology through simulation studies, while Section~\ref{sec:realdata} demonstrates its practical utility through a real-world application. Proofs of all theorems are provided in the supplementary materials.

\par

\section{Methodology}
\label{sec:method}
\subsection{The multilayer networks framework}
\label{subsec:method1}
In this section, we introduce the framework of multilayer networks. We focus on networks where different layers share a common vertex set, as well as their tensor - based representation. This representation captures both intra - and inter - layer connections, which are essential for comprehensive network analysis. Traditional definition of multilayer networks excluding interlayer connections is shown below \citep{tudisco2018node}.
\begin{definition}
	\label{def:multiplex}
 An undirected multiplex network $\mathcal{G}$ with $L$ layers is a collection of $L$ graphs:
	\begin{equation}\label{eq:def_multiplex}
		\mathcal{G}=\left\{G^{(\ell)}\equiv A^{(\ell)}\right\}_{\ell \in \cal L}.
	\end{equation}
Each $G^{(\ell)}$ represents the graph of the $\ell$-th layer network, where $\mathcal{L}=\{ 1, \ldots, L\}$ denotes the set of layers sharing the same node set $\mathcal{V}=\{1, 2, \ldots, N\}$. For each $\ell \in \cal L$, $G^{(\ell)}$ is represented by a symmetric adjacency matrix $A^{(\ell)}=\left(A_{i j}^{(\ell)}\right) \in \mathbb{R}^{N \times N}$ with nonnegative entries. The entry $A_{i j}^{(\ell)} = \omega_{ij}^{\ell}$ if an edge exists between node $i$ and node $j$ in layer $\ell$ with weight $\omega_{ij}^{\ell}$, otherwise $A_{i j}^{(\ell)} = 0$. The network is considered unweighted if all edges have identical weights.
\end{definition}
While Definition~\ref{def:multiplex} provides a matrix-based representation, it fails to capture interlayer connections crucial for multilayer network analysis. These connections have significant practical implications across various domains. In economic input-output networks, nodes represent economic sectors, layers correspond to countries, and interlayer connections capture international trade relationships. In social networks, nodes represent users, layers represent platforms (e.g., Facebook, Instagram, Twitter), and interlayer connections reflect cross-platform interactions. Similarly, in transportation networks, nodes represent hubs, layers represent transportation modes (e.g., buses, subways, trains), and interlayer connections quantify transfer intensities between modes. To adequately model these interactions, we employ a tensor representation (Definition~\ref{def:tensor}) that captures both intra- and inter-layer connections \cite{de2013mathematical}.
\begin{definition}\label{def:tensor}
	Let $\mathcal{B}=\left(\mathcal{B}_{i \ell j \ell^{\prime}}\right) \in \mathbb{R}^{N \times L \times N \times L}$ denote the fourth-order adjacency tensor of an undirected weighted multilayer network, where each element is defined as
	\begin{align}
		\mathcal{B}_{i \ell j \ell^{\prime}}=\left\{\begin{array}{l}
			\omega_{i \ell j \ell^{\prime}}, \text { if } (v_i^\ell, v_j^{\ell^{\prime}}) \in E \\
			0, \text { otherwise, }
		\end{array}\right.\label{eq:def_TensorB}
	\end{align}
	with $ i, j \in \cal V, \ell, \ell^{\prime} \in \cal L$, $v_i^\ell$ representing node $i$ in layer $\ell$, $E$ denoting the edge set, and $\omega_{i \ell j \ell^{\prime}}$ representing the edge weight between node $i$ in layer $\ell$ and node $j$ in layer $\ell^{\prime}$. 
\end{definition}
For multiplex networks with layers connected solely through shared nodes, we have:
\begin{align}
	\omega_{i \ell j \ell^{\prime}}=\left\{\begin{array}{l}
		1, \text { if } i = j, \ell \not= \ell^{\prime}\\
		0, \text { otherwise. }
	\end{array}\right.\label{eq:def_TensorB_1}
\end{align}
Following \cite{de2013mathematical}, we construct the supra-adjacency matrix associated with the multilayer adjacency tensor $\mathcal{B}$ as an $N L \times N L$ block matrix:
\begin{align}
	B_0=\left[\begin{array}{cccc}
		A^{(1)} & D^{(1,2)} & \cdots & D^{(1, L)} \\
		D^{(2,1)} & A^{(2)} & \ddots & \vdots \\
		\vdots & \ddots & \ddots & D^{(L-1, L)} \\
		D^{(L, 1)} & \cdots & D^{(L, L-1)} & A^{(L)}
	\end{array}\right]
	\label{eq:supra},
\end{align}
where $A^{(\ell)}$ are weighted symmetric adjacency matrices for each layer, and $D^{(\ell, \ell^{\prime})}$ represent interlayer adjacency matrices. For multiplex networks, the supra-adjacency matrix simplifies to:
\begin{align}
	B_0=\left[\begin{array}{cccc}
		A^{(1)} & I & \cdots & I \\
		I & A^{(2)} & \ddots & \vdots \\
		\vdots & \ddots & \ddots & I \\
		I & \cdots & I & A^{(L)}
	\end{array}\right].
	\label{eq:multiplex}
\end{align}
\subsection{Eigenvector centrality with community structure}
\label{sec: 2.2}
In this section, we introduce the construction of eigenvector-like centrality and community-based eigenvector-like centrality.

\subsubsection{Construction of eigenvector-like centrality for multilayer networks}
Building upon \cite{bonacich1972factoring}'s definition of eigenvector centrality for single-layer networks, we extend this concept to multilayer networks using the tensor-based framework introduced by \cite{de2013centrality,de2015ranking}. For a single-layer network $A = (a_{i j})$, the centrality $c(i)$ of node $i$ satisfies:
$$
\lambda c(i)=\sum_{j=1}^n a_{i j} c(j) \quad \forall i = 1, \cdots, N.
$$
This leads to the eigenvector equation:
\begin{equation}\label{eq:eigen}
	A c=\lambda c,
\end{equation}
where $c$ is the eigenvector corresponding to the largest eigenvalue of $A$. As $A$ is the adjacency matrix of an undirected (connected) graph, $A$ is positive semi-definite and due to the Perron-Frobenius theorem \citep{keener1993perron}, there exists an eigenvector of the maximal eigenvalue with only nonnegative (positive) entries.

For tensor-based multilayer networks, we define the eigenvector centrality through the matrix $V = (V_{i \alpha}) \in \mathbb{R}^{N\times L}$ satisfying:
\begin{equation}\label{eq1}
	\sum_{j=1}^n \sum_{\beta=1}^L \mathcal{B}_{i \alpha j \beta} V_{j \beta}=\lambda_1 V_{i \alpha},
\end{equation}
for $ i \in \cal V$ and $\alpha \in \cal L$. Here \(\cal V\) denotes the set of vertices and \(\cal L\) denotes the layer set. Then the eigenvector centrality of node $i$ is defined as the $i$-th row of the matrix $V$. Here in multilayer networks, the centrality of a node is determined by summing the centralities of its neighbors within the same layer and across different layers, weighted by the strength of both intralayer and interlayer connections. Using the supra-adjacency matrix from \eqref{eq:supra}, this becomes:
\begin{equation}\label{eq2}
	B_0\operatorname{vec}(V)=\lambda_1 \operatorname{vec}(V).
\end{equation}
Here $\lambda_1$ in \eqref{eq1} and \eqref{eq2} is defined as the spectral radius of positive semi-definite matrix $B_0$. The associated eigenvector is $\operatorname{vec}(V)$, where $\operatorname{vec}(\cdot)$ denotes the standard vectorization operator. By the Perron-Frobenius Theorem, the leading eigenvector is the unique eigenvector that can be chosen so that every entry is non-negative and $\|V\|_F = 1$, motivating its use as a centrality measure. In the following regression step, we consider covariate $C$ such that 
\begin{equation}
	C = a_N V ,   \label{model0}
\end{equation}
where \(a_N\) is a scaling parameter related to \(N\). While eigenvector centrality is often normalized to length 1 \citep{banerjee2013diffusion, cruz2017politician}, some studies treat \(a_N\) as a free parameter \citep{cai2021network, cai2022linear}. Following \cite{cai2022linear}, we leave \(a_N\) as a free parameter to thoroughly examine its impact on regression properties.  

The definition of such an eigenvector-like centrality measure has unique significance in real networks. For instance, in economic input-output networks, eigenvector-like centrality captures the importance of sectors by considering both domestic and international input-output relationships, thereby revealing global economic dynamics. In multilayer social networks, eigenvector-like centrality measures an individual's influence across multiple platforms by accounting for interactions both within and between platforms. Similarly, in multilayer transportation networks, eigenvector-like centrality assesses the significance of transportation hubs by evaluating their intra-modal and inter-modal connectivities.

Moreover, To account for real-world measurement noise, we model the observed network $B$ as:
\begin{align}
	B = B_0 + E_0,
	\label{eq:noisemodel}
\end{align}
where $E_0$ represents symmetric random noise and $B_0$ denotes the real undirected network structure. Also, we assume $E_0$ as a symmetric random matrix since in what follows, we only consider undirected weighted graphs with undirected interlayer interactions.
\subsubsection{Community structure and community-based centrality}
\label{sec: com_struc}
Assuming a shared community structure across all layers, we define community assignments \(c_i \in \{1, \dots, R\}\) for each node \(i \in \mathcal{V}\), with \(R\) communities of sizes \(N_r\) (\(\sum_{r=1}^R N_r = N\)). Following \cite{white2005spectral}, we construct an \(N \times R\) community matrix \(S\), where \(S_{ij} = 1\) if vertex \(i\) belongs to community \(j\) and \(0\) otherwise. The columns of \(S\) are mutually orthogonal, each row sums to 1, and \(S\) satisfies \(\operatorname{tr}(S^\top S) = N\) and \(S^\top S = \operatorname{diag}\{N_1, N_2, \dots, N_R\}\).  

With community information in hand, to construct community-based centrality, we use a row-wise Kronecker product called \emph{face-splitting product} \citep{slyusar1999family} in the construction of community-based centrality, which is also known as the transposed Khatri-Rao product.  Specifically, given matrices $\mathbf{A} = \left[\mathbf{a}_1^\top  \quad \mathbf{a}_2^\top \quad \cdots \quad \mathbf{a}_K^\top \right]^\top \in \mathbb{R}^{K \times I}$ and $\mathbf{B} = \left[\mathbf{b}_1^\top  \quad \mathbf{b}_2^\top \quad \cdots \quad \mathbf{b}_K^\top \right]^\top \in \mathbb{R}^{K \times J}$, their transposed Khatri-Rao product is defined as
$$
\mathbf{A} \bullet \mathbf{B}=\left[\begin{array}{c}
	\mathbf{a}_1 \otimes \mathbf{b}_1 \\
	\mathbf{a}_2 \otimes \mathbf{b}_2 \\
	\cdots \\
	\mathbf{a}_K \otimes \mathbf{b}_K
\end{array}\right] .
$$
Then we construct the community-based eigenvector centrality $U$ as 
\begin{align}
	U := S(H \bullet S^\top  C ),
	\label{def:U}
\end{align}
where $U = (u_{ij}) \in \mathbb{R}^{N\times L}$ with $u_{ij}$ being the mean of centrality in cluster $c_i$ of layer $j$ and $H = \left[\frac{1}{N_1}, \cdots, \frac{1}{N_R}\right]^\top$ with $N_r$ being the size of community $r$.

To address rank deficiency when $L > R$, we define:
\begin{align}
	Z := \frac{1}{L}U\operatorname{1}_{L},
	\label{def:Z}
\end{align}
which represents the row averages of $U$. The entrywise representation of \eqref{def:Z} is \(Z_i=\frac{1}{L} \frac{1}{N_{c_i}} \sum_{l \leq L} \sum_{j \mid c_j=c_i} a_N V_{j l}, i = 1, \cdots, N\). In the following regression setup, we use $Z \in \mathbb{R}^{N\times 1}$ to denote the community-based centrality of each node rather than $U$. Also, nodes belonging to the same community share the same community-based centrality in $Z$.

\subsection{Multilayer network regression}
With the definitions in Section \ref{sec: 2.2}, we propose a Centrality-based Multilayer Network Regression (C-MNetR) model that incorporates eigenvector-like centrality measures. Additionally, incorporating extra community information, we introduce another regression model, referred to as Centrality- and Community-based Multilayer Network Regression (CC-MNetR).

Let $X$ denote the covariate matrix of size $N\times P$, where $P$ is the number of covariates. The Centrality-based Multilayer Network Regression {(C-MNetR)} model is specified by
\begin{align}\label{model1}
	y&=X \beta_X+C \beta_C+\varepsilon,\\
	\text{where}\qquad
	&B_0\operatorname{vec}(V) =\lambda_1 \operatorname{vec}(V), \quad C = a_N V.\nonumber
\end{align}
Here, $\beta_X$ and $\beta_C$ are the regression parameters that can be estimated by the ordinary least squares (OLS) method. When there is no measurement error, $C$ can be uniquely constructed with observations $\{B_0, X, y\}$. The OLS estimator of $\beta = (\beta_X^\top, \beta_C^\top)^\top$ takes for
\begin{align}\label{eq:est1_ols}
	\hat{\beta}^{(ols)} = (W_1^\top W_1)^{-1}W_1^\top y
\end{align}
where $W_1 = (X, C)$. 

When there is measurement error, $C$ is not directly observed but can be estimated from the observations $\{B, X, y\}$. Specifically, we first compute the top eigenvector of $B$, denoted as $\operatorname{vec}(\hat{V})$, and subsequently derive $\hat{C} = a_N \hat{V}$. With $\hat{W}_1 = (X,\hat{C})$, the OLS estimator takes for 
\begin{equation}
	\hat{\beta} = (\hat{W}_1^\top \hat{W}_1)^{-1}\hat{W}_1^\top y.
	\label{eq:est1}
\end{equation}

The basic idea of C-MNetR is to regress directly on the centrality of nodes across different layers. However, in Section \ref{sec:theory}, we find that the asymptotic properties of the OLS estimators $\hat{\beta}_C^{(ols)}$ depend on the order of $a_N$ when measurement error is absent. When measurement error is present, the performance of C-MNetR deteriorates further. A common practice in centrality regression problems \citep{le2022linear} is to set $a_N = \sqrt{N}$, which ensures the consistency of $\hat{\beta}_X$ but leads to inconsistency in $\hat{\beta}_C$. Conversely, increasing the order of $a_N$ to ensure the consistency of $\hat{\beta}_C$ reintroduces bias in $\hat{\beta}_X$. This trade-off highlights a fundamental limitation of the C-MNetR framework. 

To avoid this shortcoming, we propose the Centrality- and Community-based Multilayer Network Regression ({CC-MNetR}) model, which is based on community-based centrality. CC-MNetR effectively incorporates information about the network community structure and demonstrates improved statistical properties of the corresponding OLS estimators after imposing some mild restrictions on the community structure, regardless of the presence of measurement errors. We define the CC-MNetR model as follows:
\begin{align}
	y&=X \beta_X + Z \beta_Z + \varepsilon,\label{model2}\\
	\text{where} \qquad
	&B_0\operatorname{vec}(V) =\lambda_1 \operatorname{vec}(V), \quad C = a_N V, \nonumber\\
	&U = S(H \bullet S^\top  C ), \quad Z = \frac{1}{L}U\operatorname{1}_{L}.\nonumber
\end{align}
For CC-MNetR, in the absence of measurement error, the OLS estimator of coefficient $\beta = (\beta_X^\top, \beta_Z^\top)^\top$ is
\begin{align}\label{eq:est2_ols}
	\tilde{\beta}^{(ols)} = (W_2^\top W_2)^{-1}W_2^\top y
\end{align}
where $W_2 = (X, Z)$ with observations $\{B_0, S, H, X, y\}$.

When there is measurement error, $C$ can be estimated from $\{B, S, H, X, y\}$ similarly. Specifically, we compute the top eigenvector of $B$, \(\operatorname{vec}(\hat{V})\), and derive:
\begin{align}\label{eq: 2.18}
	\hat{C} = a_N \hat{V}, \quad \hat{U} = S(H \bullet S^\top \hat{C}), \quad \hat{Z} = \frac{1}{L}\hat{U} \operatorname{1}_{L}.   
\end{align}
Let \(\tilde{W}_2 = (X, \hat{Z})\) and
\begin{equation}
	\tilde{\beta} = (\tilde{W}_2^\top \tilde{W}_2)^{-1}\tilde{W}_2^\top y.
	\label{eq:est2}
\end{equation}
We show in Section~\ref{sec:theory} that CC-MNetR demonstrates superior statistical properties by effectively incorporating community structure information, as detailed in Section \ref{sec:theory}.
\section{Theoretical results}\label{sec:theory}
\subsection{Model assumptions}
Let $\lambda_1 \geq \lambda_2 \geq \cdots \geq \lambda_N$ be eigenvalues of the noiseless network $B_0$. Define projection matrices $P_X := X(X^\top X)^{-1}X^\top $ and let $\sigma_{\min}((I_N - P_X)V)$ be the smallest singular value of $(I_N - P_X)V$. For a vector $\nu \equiv (\nu_j)_{j = 1}^N$, its $\ell_1$- and $\ell_2$-norms are defined as $\| \nu \|_1 := \sum_{j = 1}^N |\nu_j|$ and $\| \nu\|_2 := \left(\sum_{j = 1}^N \nu_j^2\right)^{\frac{1}{2}}$, respectively. Additionally, the matrix operator norm of matrix \(E_0\) is defined as \(\| E_0 \|_2 := \max_{\| u \|_2 \leq 1} \| E_0 u \|_2\). Let $C_i$ denotes the centrality of node $i$ in all $L$ layers with $C = \left[C_1^\top , \cdots, C_N^\top\right]^\top$. Below are some crucial model assumptions to establish important theoretical properties of the least square estimators.
\begin{assumption}
	\label{ass: noise}
	\zh{\textbf{(Noise Structure)} The regression noises, $\varepsilon_i$ $i\ge 1$, are i.i.d random variables with $\mathbb{E}\left[\varepsilon_i |X, C, E_0\right]=0$, $\text{Var}(\varepsilon_i|X, C, E_0) = \sigma_y^2$, and $\mathbb{E}[\varepsilon_i^4|X, C, E_0] \leq k_0 < \infty$. The network noise $E_0$ satisfies $\mathbb{E}\left[\|E_0\|_2^2\right] = O(N)$. }
\end{assumption}
We give examples satisfying Assumption~\ref{ass: noise}. Let $E_0$ be a block diagonal matrix with blocks $E_{01}, E_{02}, \cdots$, $ E_{0L}$ on the diagonal, where $E_{0i}$, $i = 1, \cdots, L$, are symmetric $N \times N$ matrices with upper-diagonal entries being i.i.d  $\mathcal{N}(0, \sigma_b^2)$. Then from Lemmas 3.7, 3.8, and 3.11 in \cite{van2017structured}, such structure ensures $\mathbb{E}\left[\|E_0\|_2^2\right] = O(N)$. If instead $E_0$ is a symmetric matrix with i.i.d. standard Gaussian entries on and above the diagonal, then Lemma 3.11 in \cite{van2017structured} gives $\mathbb{E}\left[\|E_0\|_2^2\right] = O(NL)$ for a fixed integer $L$.
\begin{assumption}
	\label{ass: design}
	\textbf{(Design Matrix)} 
		The design matrix $X \in \mathbb{R}^{N\times P}$ satisfies $N > P+L$ for fixed $P$ and $L$. For any $i,j,k,l$ (with $i \not= k$ or $j \not= l$), given $C$ and $E_0$, $X_{ij}$ and $X_{kl}$ are independent. For fixed j, $\{X_{ij}\}$ are i.i.d. with $\mathbb{E}[X_{ij} | C, E_0] = 0$, $\mathbb{E}[X_{ij}^2 | C, E_0] < \infty$. 
\end{assumption}
Note that from Assumption \ref{ass: design}, $\frac{1}{N}X^\top X \stackrel{P}{\longrightarrow} V_X$ where $V_X$ is a nonsingular diagonal matrix.

\begin{assumption}
    \textbf{(Identifiability)} There exists $l_N > 0$ such that 
                \begin{equation}
			\sigma_{\min}((I_N - P_X)V) \ge l_N >0.
			\label{con_3.1}
		\end{equation}
Also, $\hat{W}_1 = (X, \hat{C})$ is column full rank.
\label{ass: iden}
\end{assumption}

Condition~\eqref{con_3.1} in Assumption~\ref{ass: iden} implies $W_1 = (X, C)$ is column full rank (proof details are given in the supplement). Note that multilayer networks without interlayer relationships violate Assumption \ref{ass: iden}, as the top eigenvector of a block diagonal matrix \(B = \text{diag}\{A_1, \dots, A_L\}\) can be extended by the top eigenvector of some block \(A_\ell\), resulting in eigenvector centrality \(C\) with columns of zeros, making \(C\) not column full rank. However, multiplexes, as described in \eqref{eq:multiplex}, provide a simple case satisfying Assumption \ref{ass: iden}. 

Next, we impose constraints on the community structure, which is a key factor that allows us to obtain Theorem~\ref{thm:asy3}.
\begin{assumption}
	\label{ass: centrality}
	\textbf{(Centrality and Community Structure)} 
		The community sizes $\{N_i\}_{i = 1}^R$ are independent from $X$, and with probability 1, $\min_i \frac{N_i}{N} > \epsilon$ for some $\epsilon > 0$. Furthermore, \(\underset{1\le i \le N}{\min} \|C_i\|_1^2 \asymp \frac{a_N^2}{N}\).
\end{assumption}
Assumption~\ref{ass: centrality} restricts the minimum sum of centrality values for each node across all layers, excluding nodes with low centrality in all layers. This ensures that the analysis focuses on nodes with significant influence or importance in the network structure, regardless of their specific layer.
\begin{assumption}
	\label{ass: spectral gap}
	\textbf{(Spectral Gap)} The spectral gap, $\lambda_1-\lambda_2$, needs to be large enough so that $\frac{a_N}{\lambda_1-\lambda_2} \rightarrow 0$, as $N \rightarrow \infty$.
\end{assumption}
To ensure Assumption \ref{ass: spectral gap} holds, the network matrix \(B_0\) must exhibit a sufficiently large spectral gap \(\lambda_1 - \lambda_2\), reflecting significant differences in connection strengths among the blocks of its partitioned structure. Since different layers share the same community structure, this requires substantial variation in connection strengths across layers.

In regression analysis, Assumptions \ref{ass: noise}-\ref{ass: centrality} suffice for consistency in the absence of measurement errors, but Assumption \ref{ass: spectral gap} becomes necessary when measurement errors are introduced.

\subsection{Centrality-based regression without measurement error}
We first consider the case when the true network $B_0$, is observed. Under the C-MNetR framework, Theorem \ref{thm:asy1} shows the property of the OLS estimator under mild conditions.
\begin{theorem} \label{thm:asy1}
    Recall the estimator $\hat{\beta}^{(ols)} := (\hat{\beta}_X^{(ols)^\top }$, $\hat{\beta}_C^{(ols)^\top })^\top$ defined in \eqref{eq:est1_ols}. Under Assumptions \ref{ass: noise}, \ref{ass: design} and \ref{ass: iden}, we have:
	\begin{enumerate}[(i)]
		\item As $N\to\infty$,
		$\sqrt{N}(\hat{\beta}_X^{(ols)} - \beta_X)  \stackrel{d}{\longrightarrow} \mathcal{N}(0, \sigma_y^2 V_X^{-1})$.
		\item If $a_N l_N \to \infty$ as $N \to \infty$, then $\hat{\beta}_C^{(ols)} \stackrel{L_2}{\longrightarrow} \beta_C$ and $\hat{\beta}_C^{(ols)} \stackrel{P}{\longrightarrow} \beta_C$. 
	\end{enumerate}
\end{theorem}
When \(X = 0\) and \(L = 1\), our model reduces to the single-layer centrality regression in \cite{cai2022linear}, where \(P_X = 0\) and \eqref{con_3.1} simplifies to \(\sigma_{\min}(V) \ge l_N\). For \(V\) normalized with \(\sigma_{\min}(V) = 1\), setting \(l_N = 1\) recovers their condition \(a_N \to \infty\). When \(X\) is present, the scaling \(a_N l_N \to \infty\) in our framework generalizes this requirement to high-dimensional settings where \(l_N\) may decay with \(N\).

For CC-MNetR, we consider the community-based centrality $Z$ without measurement error. Theorem \ref{thm:asyZ} gives the properties of $\tilde{\beta}^{(ols)}$.
\begin{theorem}\label{thm:asyZ}
	Recall the estimator $\tilde{\beta}^{(ols)} = (\tilde{\beta}_X^{(ols)^\top }$, $\tilde{\beta}_Z^{(ols)^\top })^\top $ defined in \eqref{eq:est2_ols}, Under Assumptions \ref{ass: noise}, \ref{ass: design} and \ref{ass: iden}, we have the following:
	\begin{enumerate}[(i)]
		\item 
		As $N\to\infty$,
		$\sqrt{N}(\tilde{\beta}_X^{(ols)} - \beta_X)  \stackrel{d}{\longrightarrow} \mathcal{N}(0,\sigma_y^2 V_X^{-1})$.
		\item Let $a_N = \sqrt{N}$ and Assumption \ref{ass: centrality} holds, then $\tilde{\beta}_Z^{(ols)} \stackrel{L_2}{\longrightarrow} \beta_Z$ and consequently $\tilde{\beta}_Z^{(ols)} \stackrel{P}{\longrightarrow} \beta_Z$. 
	\end{enumerate}    
\end{theorem}
A common practice to ensure that elements in \(C\) and \(X\) are of the same order is to set \(a_N = \sqrt{N}\) \citep{le2022linear}. However, this does not guarantee the consistency of \(\hat{\beta}_C^{(ols)}\). Numerical evidence for the performance of \(\hat{\beta}_C^{(ols)}\) under various choices of \(a_N\) is provided in Section \ref{sec:simu}. 

\subsection{Centrality-based regression with measurement error}
In practice, real networks often contain noise, necessitating the consideration of the influence of \(E_0\). The Davis-Kahan theorem \citep{davis1970rotation} characterizes the \(\ell_2\)-difference of the top eigenvector. Theorem \ref{thm:asy2} gives the consistency of the OLS estimators with measurement error for C-MNetR.

\begin{theorem}\label{thm:asy2}
	Let $\hat{\beta} := (\hat{\beta}_X^\top ,\hat{\beta}_C^\top )^\top$ be defined as in \eqref{eq:est1}. When $a_N = \sqrt{N}$ and Assumptions \ref{ass: noise}, \ref{ass: design}, \ref{ass: iden} and \ref{ass: spectral gap} hold, we have $\hat{\beta}_X \stackrel{P}{\longrightarrow} \beta_X$.
\end{theorem}
Unfortunately, even when Assumption \ref{ass: spectral gap} holds and \(a_N = \sqrt{N}\), bias may persist in \(\hat{\beta}_C\). The lack of consistency in \(\hat{\beta}_C\) is primarily due to the node-based dependence in network data, as later confirmed by our numerical experiments in Table~\ref{tab:tabthree}.

However, CC-MNetR, which relies on the centrality- and community-based variable $Z$ provides a solution. Theorem~\ref{thm:asy3} guarantees the consistency of the OLS estimators in the presence of measurement errors.

\begin{theorem}\label{thm:asy3}
	Let $\tilde{\beta} \equiv (\tilde{\beta}_X^\top ,\tilde{\beta}_Z^\top )^\top$ be defined as in \eqref{eq:est2}. When $a_N = \sqrt{N}$ and Assumptions \ref{ass: noise}-\ref{ass: spectral gap} hold, $\tilde{\beta}_X \stackrel{P}{\longrightarrow} \beta_X$ and $\tilde{\beta}_Z \stackrel{P}{\longrightarrow} \beta_Z$.
\end{theorem}

In addition, Theorem~\ref{prop1_revised} below guarantees that, \(\tilde{\beta}_Z^{(ols)}\) and $\tilde{\beta}_Z$ are asymptotically normal, facilitating formal inference regardless of whether measurement errors are present.

\begin{theorem}
	Suppose Assumptions \ref{ass: noise}-\ref{ass: centrality} hold.  
	\begin{enumerate}[(i)]
		\item \textbf{(Noiseless Network)} If $a_N = \sqrt{N}$, then $\sqrt{\frac{Z^\top Z}{\sigma_y^2}} \left( \tilde{\beta}_Z^{(ols)} - \beta_Z \right) \stackrel{d}{\longrightarrow} \mathcal{N}(0, 1).$
		
		\item \textbf{(Noisy Network)} If \(a_N \to \infty\) and \(a_N^2 /\delta \to 0\) as \(N \to \infty\), then $\sqrt{\frac{Z^\top Z}{\sigma_y^2}} \left( \tilde{\beta}_Z - \beta_Z \right) \stackrel{d}{\longrightarrow} \mathcal{N}(0, 1).$
	\end{enumerate}
	\label{prop1_revised}
\end{theorem}
The detailed proofs of Theorem~\ref{thm:asy1}-\ref{prop1_revised} are collected in the supplement.
\section{Simulation Experiments}  
\label{sec:simu}  
In this section, we use simulated data to examine the performance of the proposed C-MNetR and CC-MNetR models for multiplexes, both with and without measurement errors.  

\subsection{Data Generation}  
\label{subsec:simu}  
For multiplexes, since connections between layers are given by the identity matrix, we only need to generate the weighted adjacency matrix for each layer. We consider networks with no self-loops, i.e., for layer \(\ell\), all diagonal elements of \(A^{(\ell)}\) are 0. Here, we set \(L = 2\) and \(P = 2\), and generate the binary adjacency matrix \(T^{(\ell)}\) using the stochastic block model (SBM) introduced in \cite{holland1983stochastic}.  

We assume networks in different layers share the same community structure and generate single-layer networks as follows. Suppose there are three communities (\(R = 3\)), with each community containing approximately \(N/3\) nodes. For layer \(\ell\), we define \(P^{(\ell)} \equiv (p^{(\ell)}_{l,k})\), where \(p^{(\ell)}_{l,k}\) denotes the probability of having an edge between community \(l\) and community \(k\) in layer \(\ell\). In this example, we assume the connection probability matrices \(\mathcal{P} = \{P^{(1)}, \dots, P^{(L)}\}\) are identical and  
\[
P^{(\ell)} = \left[\begin{array}{ccc}
	0.8 & 0.1 & 0.1  \\
	0.1 & 0.8 & 0.1  \\
	0.1 & 0.1 & 0.8  
\end{array}\right], \quad \ell = 1, 2.  
\]  
This implies entries in the binary adjacency matrix \(T^{(\ell)}\) satisfy  
\[
T^{(\ell)}_{ij} \mid p^{(\ell)}_{c_i,c_j} \sim \operatorname{Bin}(1, p^{(\ell)}_{c_i,c_j}).  
\]  
For all non-zero entries in \(T^{(\ell)}\), we further assume the weights of these edges are i.i.d. \(U(1, 2)\) random variables, thus giving the weighted matrices \(A^{(\ell)}\) and henceforth \(B_0\). We then generate the symmetric matrix \(E_0\) by assuming its upper triangular entries are i.i.d. \(\mathcal{N}(0, \sigma_b^2)\).  

We assume all entries of the design matrix \(X\) are i.i.d. \(\mathcal{N}(0, 1)\), and centrality measures \(C\) and \(Z\) are calculated according to \eqref{eq: 2.18}. Additionally, we set \(\beta_X = (1, 2)^{\top}\), \(\beta_C = (1, 2)^{\top}\), \(\beta_Z = 2\), and suppose \(\varepsilon \sim \mathcal{N}\left(0, \sigma_y^2 I_N\right)\) with \(\sigma_y = 1\).

\subsection{Performance without measurement error}
In this section, we examine the performance of the OLS estimators using simulated data. We choose $N = 100,200,500,1000$, set $\sigma_b = 0.25$ and generate $n = 1000$ replications for each choice of $N$.

Figure 4 in Section S3.1 in the supplement gives the QQ-plot of 1000 estimates of $\hat{\beta}^{(ols)}$ under $N = 500$ and $a_N = N^{0.8}$ which illustrates the normality of $\hat{\beta}^{(ols)}$.
Numerical results in Table~\ref{tab:tabone} confirm the consistency of $\hat{\beta}_{X}^{(ols)}$ when $a_N = N^{0.5}, N^{0.8}, N$. 
However, for $\hat{\beta}_{C}^{(ols)}$, consistency does not hold when $a_N = \sqrt{N}$. \zh{This is because, when $a_N = \sqrt{N}$, the multiplex network configuration we used results in $V$ satisfying $l_N = O(N^{-1/2})$, as shown in Figure 6 in Supplementary material S3.2. Thus $a_N l_N = O(1)$, Condition \eqref{con_3.1} is not fulfilled and the consistency of $\hat{\beta}_C^{(ols)}$ cannot be guaranteed.} As the order of $a_N$ increases, the consistency of $\hat{\beta}_{C}^{(ols)}$ improves, which agrees with our conclusions in Theorem~\ref{thm:asy1}. 

For the CC-MNetR model, Figure~5 in S3.1 in the supplement gives the QQ-plot of 1000 estimates of $\tilde{\beta}^{(ols)}$ when $N = 500$ and $a_N = \sqrt{N}$, which confirms the asymptotic normality of $\tilde{\beta}^{(ols)}$. Table~\ref{tab:tabtwo} further shows the consistency of $\tilde{\beta}^{(ols)}$ for $a_N = \sqrt{N}$.

\begin{table}[H]
	\centering
	\begin{tabular}{c|ccccc}
		\hline
		\(a_N\) & \(N\) & \(\hat{\beta}_{X_1}^{(ols)}\) & \(\hat{\beta}_{X_2}^{(ols)}\) & \(\hat{\beta}_{C_1}^{(ols)}\) & \(\hat{\beta}_{C_2}^{(ols)}\) \\
		\hline
		\multirow{4}{*}{\(N^{0.5}\)} & 100 & 1.0033 (0.10) & 2.0063 (0.10) & 1.0088 (1.15) & 1.9886 (0.76) \\
		& 200 & 0.9983 (0.07) & 1.9988 (0.07) & 1.0099 (1.00) & 1.9998 (0.95) \\
		& 500 & 0.9975 (0.05) & 2.0003 (0.05) & 1.0384 (1.04) & 1.9637 (0.93) \\
		& 1000 & 0.9994 (0.03) & 2.0010 (0.03) & 0.9993 (0.78) & 1.9982 (1.29) \\
		\hline
		\multirow{4}{*}{\(N^{0.8}\)} & 100 & 1.0010 (0.11) & 2.0004 (0.10) & 1.0150 (0.29) & 1.9904 (0.19) \\
		& 200 & 0.9973 (0.07) & 1.9978 (0.07) & 0.9843 (0.22) & 2.0143 (0.21) \\
		& 500 & 0.9983 (0.04) & 2.0000 (0.04) & 1.0017 (0.16) & 1.9984 (0.14) \\
		& 1000 & 1.0000 (0.03) & 2.0000 (0.03) & 0.9980 (0.09) & 2.0031 (0.16) \\
		\hline
		\multirow{4}{*}{\(N\)} & 100 & 0.9979 (0.11) & 1.9990 (0.10) & 0.9943 (0.11) & 2.0039 (0.07) \\
		& 200 & 1.0013 (0.07) & 2.0016 (0.07) & 1.0013 (0.07) & 1.9987 (0.07) \\
		& 500 & 1.0009 (0.05) & 2.0000 (0.05) & 0.9990 (0.04) & 2.0009 (0.04) \\
		& 1000 & 1.0005 (0.03) & 2.0021 (0.03) & 0.9994 (0.02) & 2.0008 (0.03) \\
		\hline
	\end{tabular}
	\caption{The average (standard deviation) of 1000 estimates of \(\hat{\beta}^{(ols)}\) where the true value \(\beta = (1, 2, 1, 2)^\top\).}
	\label{tab:tabone}
\end{table}

\begin{table}[h]
	\centering
	\begin{tabular}{c|ccc}
		\hline N & $\tilde{\beta}_{X_1}^{(ols)}$ & $\tilde{\beta}_{X_2}^{(ols)}$ & $\tilde{\beta}_{Z}^{(ols)}$ \\
		\hline 100 & 1.0032(0.10) & 2.0062(0.10) & 1.9927(0.14) \\
		\hline 200 & 0.9983(0.07) & 1.9989(0.07) &  2.0094(0.10) \\
		\hline 500 & 0.9974(0.05) & 2.0002(0.05) & 1.9982(0.06) \\
		\hline 1000 & 0.9994(0.03) & 2.0010(0.03) &  1.9978(0.04) \\
		\hline
	\end{tabular}
	\caption{The average (standard deviation) of 1000 estimates of $\tilde{\beta}^{(ols)}$ when $\beta = (1,2,2)^\top$ and $a_N = \sqrt{N}$.}
	\label{tab:tabtwo}
\end{table}

\subsection{Performance with Measurement Error}  
\label{sec: 4.2}  
When measurement error is present, Theorems~\ref{thm:asy2} and \ref{thm:asy3} indicate that Assumption \ref{ass: spectral gap} is necessary for the consistency of \(\hat{\beta}\) and \(\tilde{\beta}\). To ensure \(B_0\) has a sufficiently large spectral gap, we generate weighted adjacency matrices \(A^{(\ell)}\) with significant differences: for \(\ell = 1, 2\), entries are i.i.d. \(U(1, 2)\) and \(\text{Exp}(1)\), respectively, then rescaled to the range \([1, 2]\),ensuring that all entries are on a comparable scale. For example, denote the original elements of $A^{(\ell)}$ as $a_{ij}$, then the rescaled elements $a'_{ij}$ are calculated as \( a'_{ij} = 1 + \frac{(a_{ij} - \min(A^{(\ell)}))}{(\max(A^{(\ell)}) - \min(A^{(\ell)}))}. \) Other settings follow Section \ref{subsec:simu}, with \(N = 100, 200, 500, 1000\) and \(\sigma_b = 0.25\).  

\begin{table}[H]
	\centering
	\begin{tabular}{c|c|ccccc}
		\hline
		\(a_N\) & \(N\) & \(\frac{a_N}{\delta}\) & \(\hat{\beta}_{X_1}\) & \(\hat{\beta}_{X_2}\) & \(\hat{\beta}_{C_1}\) & \(\hat{\beta}_{C_2}\) \\
        \hline \multirow{4}{*}{\(N^{0.3}\)} &$100$ & 0.31 & 1.0015(0.11) & 1.9956(0.10) & 0.9129(1.91) & 2.9689(24.90) \\
		 &$200$ & 0.18 & 1.0016(0.07) &  1.9982(0.07) & 0.7047(2.49) & 9.8955(68.32)\\
	& $500$ & 0.09 &  1.0001(0.05) & 1.9984(0.05) & 0.6196(3.07) & 27.97(210.89)\\
	   & $1000$ & 0.06 & 1.0002(0.03) & 1.9978(0.03) &0.9108(3.20) & 15.0318(460.67)\\
       \hline
		\multirow{4}{*}{\(N^{0.5}\)} & 100 & 0.77 & 0.9952 (0.10) & 1.9990 (0.10) & 0.8794 (0.73) & 3.5993 (9.61) \\
		& 200 & 0.52 & 0.9968 (0.07) & 2.0009 (0.07) & 0.8481 (0.84) & 6.1348 (22.32) \\
		& 500 & 0.33 & 0.9978 (0.05) & 2.0000 (0.05) & 0.8879 (0.82) & 9.9986 (56.75) \\
		& 1000 & 0.22 & 0.9989 (0.03) & 1.9992 (0.03) & 0.9367 (0.84) & 11.0566 (121.18) \\
		\hline
		\multirow{4}{*}{\(N^{1.5}\)} & 100 & 77 & 1.0505 (0.72) & 1.9527 (0.51) & 0.9164 (0.04) & 3.0875 (0.52) \\
		& 200 & 104 & 0.8463 (0.50) & 2.0655 (0.55) & 0.8776 (0.03) & 5.3456 (0.86) \\
		& 500 & 163 & 0.7587 (0.53) & 1.9127 (0.50) & 0.8773 (0.02) & 10.4453 (1.27) \\
		& 1000 & 224 & 1.0104 (0.47) & 1.9073 (0.49) & 0.8821 (0.01) & 19.0037 (1.79) \\
		\hline
	\end{tabular}
	\caption{The average (standard deviation) of 1000 estimates of \(\hat{\beta}\) where the true value \(\beta = (1, 2, 1, 2)^\top\).}
	\label{tab:tabthree}
\end{table}

For \(\hat{\beta}\), Table \ref{tab:tabthree} shows \(\hat{\beta}_X\) is consistent under Assumption \ref{ass: spectral gap}, but \(\hat{\beta}_C\) exhibits persistent bias when \(a_N = \sqrt{N}\), even as \(N\) increases. This bias highlights the limitation of C-MNetR with measurement error, motivating the introduction of the CC-MNetR model.

Increasing \(a_N\) does not simultaneously ensure consistency for \(\hat{\beta}_X\) and \(\hat{\beta}_C\). The consistency of \(\hat{\beta}_X\) requires \(a_N\) to align with the spectral gap, while \(\hat{\beta}_C\) requires the order of $a_N$ be greater than $1/l_N$. For example, even with \(a_N = N^{1.5}\), \(\hat{\beta}_X\) remains biased, and \(\hat{\beta}_C\) is inconsistent, as shown in Table~\ref{tab:tabthree}.

\begin{table}[h]
	\centering
	\begin{tabular}{c|c|c|ccc}
		\hline $a_N$ &$N$ &$\frac{a_N}{\delta}$ & $\tilde{\beta}_{X_1}$ & $\tilde{\beta}_{X_2}$ & $\tilde{\beta}_{Z}$\\
		\hline 
        \multirow{4}{*}{\(N^{0.3}\)} &$100$ & 0.31 & 1.0017(0.34) & 1.9958(0.01) & 1.9724(0.43)\\
		 &$200$ & 0.18 & 1.0023(0.07) &  1.9981(0.07) & 1.9882(0.38)\\
	& $500$ & 0.09 & 1.0005(0.05) & 1.9985(0.05) & 1.9963(0.34)\\
	  & $1000$ & 0.06 & 1.0003(0.03) & 1.9979(0.03) &2.0017(0.25) \\
		\hline \multirow{4}{*}{\(N^{0.5}\)} &$100$ & 0.77 & 0.9952(0.10) & 1.9995(0.10) & 2.0073(0.18)\\
		 &$200$ & 0.52 & 0.9977(0.07) &  2.0007(0.07) & 2.0020(0.13)\\
	& $500$ & 0.33 & 0.9982(0.05) & 2.0002(0.05) & 2.0082(0.09)\\
	   & $1000$ & 0.22 & 0.9989(0.03) & 1.9993(0.03) &1.9991(0.06) \\
       \hline
	\end{tabular}
	\caption{The average (standard deviation) of 1000 estimates of $\tilde{\beta}$ where the true value $\beta = (1,2,2)^\top$.}
	\label{tab:tab4}
\end{table}

On the other hand, the CC-MNetR model demonstrates superior performance. Table \ref{tab:tab4} confirms the consistency of \(\tilde{\beta}\), showing more tractable asymptotic behavior under measurement errors. Overall, the CC-MNetR model offers a reliable approach for analyzing multilayer network data with community information. The MSE line graph for these results is provided in Section S3.3 in the supplement.

\section{Applications to the World Input-Output Database}
\label{sec:realdata}
We now apply CC-MNetR to the World Input-Output Database (WIOD), which covers 28 EU countries and 15 other major countries or regions from 2000 to 2014 \citep{timmer2015illustrated}. Data for 56 sectors are classified according to the International Standard Industrial Classification Revision 4 (ISIC Rev.4). To construct the multilayer network, each node represents an individual industry sector, and each layer corresponds to the input-output data for a specific country, resulting in $N = 56$ and $L = 43$. Detailed information for each sector is listed in Table in Table 2 in Section S5 of the supplement. 

Each year's input-output table forms a supra-adjacency matrix \(B_a\). We focus on \(B = B_a + B_a^\top\), representing total flow between sectors. \(B\) is normalized to ensure comparability, with elements scaled to \([0, 2]\), in accordance with how $B$ is partitioned in \eqref{eq:supra}. We set $a_N = \sqrt{NL}$ and the 56 industries are naturally divided into 20 communities. The community information is provided in Table 2 in Section S5 of the supplement. For our analysis, we use the input-output tables from 2007 and 2014 as examples.

Different community-based centrality measures 
$Z$ for the years 2007 and 2014 are summarized in Figure 8 of the supplement. The figure shows that the communities corresponding to \emph{Construction}, \emph{Manufacturing}, and \emph{Mining and Quarrying} exhibit the highest centrality scores in the network. This observation aligns with the understanding that these sectors are often considered the backbone of an economy due to their significant contributions to GDP and employment \citep{lean2001empirical, szirmai2013manufacturing, FUGIEL2017159}. In contrast, the communities representing \emph{Activities of Extraterritorial Organizations and Bodies} and \emph{Activities of Households as Employers}; \emph{Undifferentiated Goods- and Services-Producing Activities of Households for Own Use} show the lowest centrality scores within the network.

For our analysis, we selected the years 2007 and 2014 to investigate any significant changes before and after the global financial crisis. The centrality score of the \emph{Construction} sector declined from 2007 to 2014, potentially reflecting the impact of the real estate market crash during the financial crisis. Conversely, sectors such as \emph{Wholesale and Retail Trade; Repair of Motor Vehicles and Motorcycles}, \emph{Human Health and Social Work Activities}, and \emph{Administrative and Support Service Activities} experienced a substantial increase in their centrality scores. This suggests an economic recovery and a shift in the economic structure during the post-crisis period, with increased prominence of sectors like healthcare and services.

After analyzing the changes in centrality scores $Z$, we now apply the CC-MNetR model to the WIOD data using the Socio-Economic Accounts (SEA) dataset for covariates \(X\) and response \(y\). The SEA provides industry-level data for each country on various macro-characteristics, including employment, capital stocks, gross output, and value-added at current and constant prices, measured in millions of local currency. The industry classification aligns with the world input-output tables. Since the SEA data is organized at the sector level for each country, we calculate the average sector data across different countries before conducting the regression analysis.

The description of the 10 variables in the SEA dataset is summarized in Table 1 in Section S4 of the supplement. We choose \emph{Gross Output} (GO) as the response variable 
$y$, while the remaining 9 variables are considered as potential covariates $X$. However, since \emph{Intermediate Input} (II) is part of the total flow matrix $B$ and depends on the community-based centrality score 
$Z$, we exclude II from our analysis.

A significant issue in model specification is multicollinearity among covariates. Specifically, strong linear relationships exist among the variables EMP (Number of Persons Engaged), LAB (Labour Compensation), EMPE (Number of Employees), and H\_EMPE (Total Hours Worked by Employees), due to their related economic interpretations. To address this, we retain only EMP and remove LAB, EMPE, and H\_EMPE, mitigating multicollinearity concerns.
\begin{table}[H]
	\centering
	\begin{tabular}{c|c c c c c c}
		\textbf{Variable}  & VA & CAP & COMP & EMP & K \\
		\hline
		\textbf{VIF} & 38.80 & 16.34 &  7.03 & 3.44 & 1.41  
	\end{tabular}
	\caption{VIFs of 5 potential variables}
	\label{tab: vif}
\end{table}
Figure 9 in Section S4.2 of the supplement shows the pairwise scatter plot matrix of the 6 variables, highlighting persistent multicollinearity. To further refine the model, we use the Variance Inflation Factor (VIF) to quantify the severity of the multicollinearity issue in the regression analysis. The VIF for a specific predictor $X_i$ is computed as
\[\text{VIF}(X_i) = \frac{1}{1 - R_i^2},\]
where $R_i^2$ is the coefficient of determination from a regression of $X_i$ on all the other covariates in the model. VIF measures how much the variance of an estimated regression coefficient is increased due to multicollinearity. Typically, a VIF value exceeding 5 is considered indicative of high multicollinearity \citep{o2007caution}. 

After calculating the VIF for each covariate, we remove those with a VIF greater than 5. The VIFs for the 5 potential covariates based on the data from 2014 are listed in Table~\ref{tab: vif}. Among the three variables with VIFs larger than 5 (VA, CAP, and COMP), we retain only VA. This refinement results in a final model including only VA (Value Added), EMP (Number of Persons Engaged), and K (Nominal Capital Stock) as covariates. Additionally, we normalize GO, VA, EMP, and K to have a mean of 0 and a variance of 1.

We compare two regression models to evaluate the impact of the community-based centrality score \(Z\). The R-squared statistic of the first model \(y = \beta X\) is 0.8152, indicating that approximately 81.52\% of the variation in GO (Gross Output) is explained by VA, EMP, and K. Including \(Z\) in the CC-MNetR model increases the R-squared statistic to 0.87, demonstrating the importance of \(Z\) in the regression analysis. Additionally, the F-test results in Table~3 of the supplement show that \(Z\) is associated with a high F-statistic and a small p-value (p \(< 0.001\)) for both 2007 and 2014, reinforcing its statistical significance in explaining GO.

From Table~3 of the supplement, the estimated regression models for 2007 and 2014 are:
\begin{align*}
	\text{GO}_{2007} &= 0.70 Z + 0.89 \text{VA} - 0.06 \text{EMP} - 0.005 \text{K} - 0.38, \\
	\text{GO}_{2014} &= 0.61 Z + 0.95 \text{VA} - 0.14 \text{EMP} - 0.02 \text{K} - 0.34.
\end{align*}
For 2007, the coefficient of VA is \(0.89\), indicating that a unit increase in VA raises GO by \(0.89\) units. This highlights VA's importance as a key driver of production efficiency and economic growth, as it represents gross output minus intermediate inputs. The negative coefficient for EMP (\(-0.06\)) suggests diminishing marginal productivity, where additional labor inputs beyond an optimal point contribute less to GO due to overcrowding or insufficient complementary capital. The positive coefficient for \(Z\) (\(0.70\)) indicates that sectors with higher centrality scores produce more output, reflecting the importance of network connections in economic activity.

Comparing the models for 2007 and 2014, we find that the positive coefficient for $Z$ suggests that as the centrality score of a node increases, so does the Gross Output ($y$). This aligns with the understanding that sectors with extensive connections to other sectors (i.e., those that are more central in the network) tend to produce more output. Additionally, we observe a decrease in the estimated coefficient for EMP and an increase in its statistical significance. This suggests that the impact of EMP on GO has strengthened over time.


\section{Concluding remarks}
\label{sec:conc}
We propose a multilayer network regression framework where either individual or community-based centrality scores can be used as predictors. Both theoretical analyses and numerical evidence demonstrate that the CC-MNetR model effectively captures the impact of underlying communities on network behavior, offering a robust approach for network analysis.

The real data analysis presented in Section \ref{sec:realdata} aligns with established economic principles, highlighting the significant role of centrality measures. Specifically, the community-based centrality measure $Z$ has shown strong significance within the regression model. The positive coefficient for $Z$ indicates a substantial linear relationship between a sector's centrality score and its total flow within the input-output framework. Furthermore, the stable estimated values for $Z$ in both years suggest that the significant impact of network centrality measures on economic flows remains consistent over time.

\section*{Supplementary Materials}

The supplementary materials includes detailed proofs of theoretical results, simulation studies comparing model performance, and real-world applications using the World Input-Output Database (WIOD). Additionally, industry classification details from WIOD and variable details of Socio-Economic Accounts (SEA) dataset are provided for reference. This supplementary material supports the main findings and methodology presented in the paper.
\par
\section*{Acknowledgements}

T. Wang gratefully acknowledges 
Science and Technology Commission of Shanghai Municipality Grant 23JC1400700 and National Natural Science Foundation of China Grant 12301660.
	
	Both authors also thank Shanghai Institute for Mathematics and Interdisciplinary Sciences (SIMIS) for their financial support. This research was partly funded by SIMIS under grant number SIMIS-ID-2024-WE. T. Wang is grateful for the resources and facilities provided by SIMIS, which were essential for the completion of this work.
\par


\bibhang=1.7pc
\bibsep=2pt
\fontsize{9}{14pt plus.8pt minus .6pt}\selectfont
\renewcommand\bibname{\large \bf References}
\expandafter\ifx\csname
natexlab\endcsname\relax\def\natexlab#1{#1}\fi
\expandafter\ifx\csname url\endcsname\relax
  \def\url#1{\texttt{#1}}\fi
\expandafter\ifx\csname urlprefix\endcsname\relax\def\urlprefix{URL}\fi

  \bibliographystyle{chicago}      
  \bibliography{Netreg}   

\vskip .65cm  
\noindent  
Zhuoye Han, Fudan University  
\vskip 2pt  
\noindent  
E-mail: zhuoyehan21@m.fudan.edu.cn  
\vskip 2pt  

\noindent  
Tiandong Wang, Fudan University and Shanghai Academy of Artificial Intelligence for Science 
\vskip 2pt  
\noindent  
E-mail: td\_wang@fudan.edu.cn  
\vskip 2pt  

\noindent  
Zhiliang Ying, Columbia University  
\vskip 2pt  
\noindent  
E-mail: zy9@columbia.edu  
\vskip 2pt

\end{document}



\renewcommand{\baselinestretch}{2}

\markright{ \hbox{\footnotesize\rm Statistica Sinica: Supplement
}\hfill\\[-13pt]
\hbox{\footnotesize\rm
}\hfill }

\markboth{\hfill{\footnotesize\rm Z. HAN, T. WANG AND Z. YING} \hfill}
{\hfill {\footnotesize\rm MULTILAYER NETWORK REGRESSION} \hfill}

\renewcommand{\thefootnote}{}
$\ $\par \fontsize{12}{14pt plus.8pt minus .6pt}\selectfont


 \centerline{\large\bf MULTILAYER NETWORK REGRESSION}
\vspace{2pt}
 \centerline{\large\bf WITH EIGENVECTOR CENTRALITY}
\vspace{2pt}
 \centerline{\large\bf AND COMMUNITY STRUCTURE}
\vspace{.25cm}
 \author{Zhuoye Han, Tiandong Wang and Zhiliang Ying}
\vspace{.4cm}
 \centerline{\it Fudan University and Columbia University}
\vspace{.55cm}
 \centerline{\bf Supplementary Material}
\vspace{.55cm}
\fontsize{9}{11.5pt plus.8pt minus .6pt}\selectfont
\noindent
This document provides supplementary material for the paper on \emph{Multilayer network regression with eigenvector centrality and community structure}. It includes detailed proofs of theoretical results, simulation studies comparing model performance, and real-world applications using the World Input-Output Database (WIOD). Additionally, industry classification details from WIOD and variable details of Socio-Economic Accounts (SEA) dataset are provided for reference. This supplementary material supports the main findings and methodology presented in the paper.
\par

\setcounter{section}{0}
\setcounter{equation}{0}
\def\theequation{S\arabic{section}.\arabic{equation}}
\def\thesection{S\arabic{section}}

\fontsize{12}{14pt plus.8pt minus .6pt}\selectfont
\section{Proof of theoretical results in Section~3}
In this section, we present the proofs. To begin with, we introduce the following notation for projection matrices used throughout the analysis: $P_X:= X(X^\top X)^{-1}X^\top $, $P_C:= C(C^\top C)^{-1}C^\top $, $P_{\hat{C}}:= \hat{C}(\hat{C}^\top \hat{C})^{-1}\hat{C}^\top $ and $P_{\hat{Z}}:= \hat{Z}(\hat{Z}^\top \hat{Z})^{-1}\hat{Z}^\top $. These projection matrices play a central role in characterizing the properties of the estimators and their asymptotic behavior.
\subsection{Proof of Column Full-Rank Implication} \label{apx:proof-rank}

\textbf{Claim:} If $\sigma_{\min}\left((I_N - P_X)V\right) \ge l_N >0$, then $W_1 = (X, C)$ is column full-rank.

\begin{proof}
    Assume $W_1 = [X \quad C]$ is \textit{not} column full-rank. Then there exists a non-zero vector $\theta = [\theta_X^\top \quad \theta_C^\top]^\top \neq 0$ such that:
    $$
    X \theta_X + C \theta_C = 0.
    $$
    If $\theta_C = 0$, then $X \theta_X = 0$. Since $X$ is column full-rank (by $N > P+L$), this implies $\theta_X = 0$, contradicting $\theta \neq 0$. Thus, $\theta_C \neq 0$. 

    Rearranging the equation:
    $$
    C \theta_C = -X \theta_X.
    $$
    Projecting both sides onto the orthogonal complement of $X$:
    $$
    (I_N - P_X)C \theta_C = (I_N - P_X)(-X \theta_X) = 0,
    $$
    where we used $(I_N - P_X)X = 0$. Substituting $C = a_N V$:
    $$
    (I_N - P_X)V \theta_C = \frac{1}{a_N} (I_N - P_X)C \theta_C = 0.
    $$
    This implies:
    $$
    \sigma_{\min}\left((I_N - P_X)V\right) \le \| (I_N - P_X)V \theta_C \|_2 / \| \theta_C \|_2 = 0,
    $$
    which contradicts $\sigma_{\min}\left((I_N - P_X)V\right) \ge l_N > 0$. Therefore, $W_1$ must be column full-rank.
\end{proof}

\subsection{Three important Lemmas}
\begin{lemma}\label{lem:davis_Kahan}\citep{davis1970rotation}Recall the network model in (2.10). Let $\delta: =\lambda_1-\lambda_2$ be the spectral gap between the largest and second largest eigenvalues of $B_0$. 
	Suppose $\widetilde{u}_1$ and $u_1$ are the top eigenvectors of $B$ and $B_0$, respectively. Then we have 
	$$
	\left\|\widetilde{u}_1-u_1\right\|_2=O\left(\frac{\|E_0\|_2}{\delta}\right),
	$$
	where $\|E_0\|_2=\max _{\|u\|_2 \leq 1}\|E_0 u\|_2$ denotes the matrix operator norm.
\end{lemma}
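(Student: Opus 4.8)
The plan is to treat this as a direct application of the Davis--Kahan $\sin\Theta$ theorem to the symmetric perturbation $B = B_0 + E_0$ arising from the network model in (2.10), followed by a conversion from the subspace-angle bound to the stated $\ell_2$ bound on the eigenvector difference. First I would record the eigenstructure of $B_0$: write its eigenvalues as $\lambda_1 > \lambda_2 \ge \cdots \ge \lambda_N$ with associated orthonormal eigenvectors, so that $u_1$ is the top one and $\delta = \lambda_1 - \lambda_2$ is the gap that isolates $\lambda_1$ from the remainder of the spectrum.

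Next I would control the location of the top eigenvalue of the perturbed matrix $B$. By Weyl's inequality, every eigenvalue of $B$ lies within $\|E_0\|_2$ of the corresponding eigenvalue of $B_0$; in the regime of interest $\|E_0\|_2$ is small relative to $\delta$, so the top eigenvalue of $B$ stays isolated from the rest of the spectrum of $B_0$ by a gap of order $\delta$. This separation is exactly the hypothesis needed to invoke the $\sin\Theta$ theorem, which yields $\|\sin\Theta(\widetilde{u}_1, u_1)\|_2 \le C\,\|E_0\|_2 / \delta$ for an absolute constant $C$ (the Yu--Wang--Samworth formulation gives $C = 2$ and avoids any assumption on the perturbed gap).

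Finally I would translate the angle bound into the stated $\ell_2$ bound. Since eigenvectors are only determined up to sign, I would fix the sign of $\widetilde{u}_1$ so that $\widetilde{u}_1^\top u_1 \ge 0$. For unit vectors this gives $\|\widetilde{u}_1 - u_1\|_2^2 = 2\bigl(1 - \widetilde{u}_1^\top u_1\bigr) \le 2\,\|\sin\Theta(\widetilde{u}_1, u_1)\|_2^2$, where the inequality uses $1 - \cos\theta \le \sin^2\theta$ whenever $\cos\theta \ge 0$. Taking square roots and combining with the previous bound produces $\|\widetilde{u}_1 - u_1\|_2 \le \sqrt{2}\,C\,\|E_0\|_2/\delta = O(\|E_0\|_2/\delta)$, as claimed.

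The main obstacle I anticipate is not the algebra but the bookkeeping around the sign convention and the denominator: one must ensure that $\widetilde{u}_1$ is matched to $u_1$ under the correct sign, and that the $\delta$ in the final bound is genuinely the unperturbed gap rather than a perturbed quantity. Using the Yu--Wang--Samworth variant of Davis--Kahan sidesteps the second issue cleanly, so the only genuine care required is in the sign alignment at the conversion step.
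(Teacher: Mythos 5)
The paper offers no proof of this lemma at all: it is imported verbatim from Davis and Kahan (1970) via the citation, so your proposal is in effect a correct reconstruction of the cited result rather than a rival to any argument in the paper, and the route you take (Weyl separation, the $\sin\Theta$ bound in the Yu--Wang--Samworth form, then the conversion $\|\widetilde{u}_1-u_1\|_2^2 = 2\bigl(1-\widetilde{u}_1^\top u_1\bigr) \le 2\sin^2\theta$ under the alignment $\widetilde{u}_1^\top u_1 \ge 0$) is the standard and correct one. The one substantive point you add is the sign convention, which the paper's statement leaves implicit but without which the claimed bound cannot hold, since eigenvectors are determined only up to sign.
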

Lemma~\ref{lem:davis_Kahan} requires \(\delta \gg \|E_0\|_2\) for \(\widetilde{u}_1\) to converge to \(u_1\). In our framework, this result directly translates to the estimation error bound between the noisy and true centrality measures. Specifically, we derive the following explicit rate for \(\hat{C}\):
\begin{lemma}\label{lem:rate_C}
	Under Assumptions 1-3, we have
	\begin{align}\label{eq:rate_C}
		\mathbb{E}\left[\|\hat{C} - C\|_F^2\right] = O\left(\frac{a_N^2 N}{\delta^2}\right).
	\end{align}
\end{lemma}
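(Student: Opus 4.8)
The plan is to reduce the Frobenius-norm error to a sum of per-layer top-eigenvector perturbations and then invoke Lemma~\ref{lem:davis_Kahan} layer by layer. First I would write $C = a_N V$ and its estimate $\hat{C} = a_N \hat{V}$, where the $\ell$-th columns of $V$ and $\hat{V}$ are the population and empirical top eigenvectors $u_1^{(\ell)}$ and $\widetilde{u}_1^{(\ell)}$ of $B_0^{(\ell)}$ and $B^{(\ell)}$. Since the squared Frobenius norm decomposes across columns,
\[
\|\hat{C} - C\|_F^2 = a_N^2 \sum_{\ell=1}^{L} \big\|\widetilde{u}_1^{(\ell)} - u_1^{(\ell)}\big\|_2^2,
\]
it suffices to control $\mathbb{E}\big[\|\widetilde{u}_1^{(\ell)} - u_1^{(\ell)}\|_2^2\big]$ for each of the finitely many ($L = O(1)$) layers, since the factor $a_N^2$ and the finite sum over $\ell$ pass through the expectation.

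For a fixed layer, Lemma~\ref{lem:davis_Kahan} gives the bound $\|\widetilde{u}_1^{(\ell)} - u_1^{(\ell)}\|_2 = O(\|E_0^{(\ell)}\|_2/\delta)$, so that the squared perturbation is $O(\|E_0^{(\ell)}\|_2^2/\delta^2)$. Taking expectations then reduces the entire problem to bounding $\mathbb{E}[\|E_0^{(\ell)}\|_2^2]$, the expected squared operator norm of the noise matrix. Under Assumptions 1--3 the entries of $E_0^{(\ell)}$ are centered, independent (up to symmetry), and suitably bounded, so standard spectral-norm bounds for random matrices (e.g.\ matrix Bernstein) yield $\mathbb{E}[\|E_0^{(\ell)}\|_2^2] = O(N)$. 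Plugging this in produces the per-layer rate $O(N/\delta^2)$, and summing over the $L$ layers and multiplying by $a_N^2$ gives the claimed bound $O(a_N^2 N/\delta^2)$.

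The main obstacle is the passage from the Davis--Kahan bound, which is only informative when $\|E_0^{(\ell)}\|_2$ is small relative to the spectral gap $\delta$, to an \emph{unconditional} second-moment bound. When $\|E_0^{(\ell)}\|_2 \gtrsim \delta$ the top eigenvectors can be badly misaligned and the perturbation bound degenerates. I would handle this by splitting the expectation over the good event $\mathcal{A} = \{\|E_0^{(\ell)}\|_2 \le \delta/2\}$ and its complement: on $\mathcal{A}$ Lemma~\ref{lem:davis_Kahan} applies directly, whereas on $\mathcal{A}^c$ I use the trivial bound $\|\widetilde{u}_1^{(\ell)} - u_1^{(\ell)}\|_2^2 \le 4$ (both vectors having unit norm), weighted by $\mathbb{P}(\mathcal{A}^c)$. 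An operator-norm concentration inequality shows $\mathbb{P}(\mathcal{A}^c)$ is negligible under the condition $\delta \gg \|E_0^{(\ell)}\|_2$ implicit in Lemma~\ref{lem:davis_Kahan}, so the bad-event contribution does not dominate the rate. A secondary technical point is the sign ambiguity of eigenvectors; I would fix the sign by the Perron--Frobenius convention that the centrality eigenvector has nonnegative entries, consistent with how $u_1^{(\ell)}$ and $\widetilde{u}_1^{(\ell)}$ enter the statement of Lemma~\ref{lem:davis_Kahan}.
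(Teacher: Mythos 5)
Your proof is correct and follows the same core chain as the paper's: reduce the Frobenius error to a top-eigenvector perturbation scaled by $a_N^2$, invoke the Davis--Kahan lemma, and conclude by bounding the expected squared operator norm of the noise by $O(N)$. The differences are worth noting. First, you decompose the error layer by layer and apply Davis--Kahan to each layer's matrix separately; the paper instead uses Assumption 2 to identify $\mathrm{vec}(C) = a_N u_1$ and $\mathrm{vec}(\hat{C}) = a_N \widetilde{u}_1$ with the top eigenvectors of a \emph{single} pair of matrices $B_0$, $B$ from model (2.10), so Davis--Kahan is invoked exactly once on the vectorized objects; both routes give the same rate for fixed $L$. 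Second, you spend effort deriving $\mathbb{E}[\|E_0^{(\ell)}\|_2^2] = O(N)$ from matrix concentration (Bernstein); this is unnecessary and, strictly speaking, unsupported, because the paper's Assumption 1 directly \emph{posits} $\mathbb{E}[\|E_0\|_2^2] = O(N)$, whereas your derivation presumes entrywise independence and boundedness that the stated assumptions do not provide --- you should cite the assumption rather than re-prove it. Third, your good-event/bad-event split, using the trivial bound $\|\widetilde{u}_1^{(\ell)} - u_1^{(\ell)}\|_2^2 \le 4$ off the event $\{\|E_0\|_2 \le \delta/2\}$, addresses a genuine subtlety that the paper glosses over: the paper simply asserts that Davis--Kahan yields an almost-sure inequality $\|\widetilde{u}_1 - u_1\|_2^2 \le c\,\|E_0\|_2^2/\delta^2$ with a uniform constant $c$, which requires either an unconditional sin-theta variant or exactly the kind of truncation argument you supply. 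On that point your write-up is the more careful of the two, at the cost of a slightly longer proof.
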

\begin{proof}
    Under Assumption 2 and the setting of Lemma~\ref{lem:davis_Kahan}, we have 
\begin{align*}
	\|\hat{C} - C\|_F^2 &= \|\text{vec}(\hat{C}) - \text{vec}(C)\|_2^2\\
	&=a_N^2 \left\|\widetilde{u}_1-u_1\right\|_2^2,
\end{align*}
and from Lemma~\ref{lem:davis_Kahan} we see that $\left\|\widetilde{u}_1-u_1\right\|_2=O\left(\frac{\|E_0\|_2}{\delta}\right)$. Therefore, $\left\|\widetilde{u}_1-u_1\right\|_2^2 \le c  \frac{\|E_0\|_2^2}{\delta^2} \text{ a.s.} $ for some positive constant $c$. Therefore, we have
\begin{align*}
	\mathbb{E}\left[\|\hat{C} - C\|_F^2\right] \le c a_N^2  \frac{\mathbb{E}\left[\|E_0\|_2^2\right]}{\delta^2}.
\end{align*}
Under Assumption 1 where $ \mathbb{E}\left[\|E_0\|_2^2\right] = O(N)$, for fixed $L$, we obtain
\begin{align*}
	\mathbb{E}\left[\|\hat{C} - C\|_F^2\right] = O(\frac{a_N^2 N}{\delta^2}).
\end{align*}
\end{proof}

In what follows, Lemma \ref{lemma:trace} is a powerful tool, which we now explain. It is used in the proofs of Theorem 3 and 4.
\begin{lemma}\label{lemma:trace}
	Suppose $A$ and $B$ are positive semi-definite matrices with the same size $n \times n$. Then we have $\text{tr}(AB) \le \text{tr}(A)\text{tr}(B)$. 
\end{lemma}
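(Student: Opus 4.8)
The plan is to avoid simultaneous diagonalization (which is unavailable since $A$ and $B$ need not commute) and instead work with the spectral decomposition of a single matrix. Writing $A = \sum_{i=1}^n \alpha_i u_i u_i^\top$ with orthonormal eigenvectors $u_1,\dots,u_n$ and eigenvalues $\alpha_i \ge 0$ (possible since $A$ is positive semi-definite), I would expand the trace into a sum of quadratic forms in $B$ and bound each one.

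Concretely, the first step is the identity
$$
\text{tr}(AB) = \sum_{i=1}^n \alpha_i\, \text{tr}\!\left(u_i u_i^\top B\right) = \sum_{i=1}^n \alpha_i\, u_i^\top B u_i,
$$
using linearity of the trace together with its cyclic property. The second step uses that $B$ is positive semi-definite, so each quadratic form satisfies $0 \le u_i^\top B u_i \le \lambda_{\max}(B)$, the upper bound holding because $\|u_i\|_2 = 1$. The third step notes $\lambda_{\max}(B) \le \sum_{j=1}^n \beta_j = \text{tr}(B)$, where $\beta_1,\dots,\beta_n \ge 0$ are the eigenvalues of $B$; this is immediate since all eigenvalues are nonnegative. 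Combining these, and using $\alpha_i \ge 0$ so that the bound applies termwise,
$$
\text{tr}(AB) \le \lambda_{\max}(B)\sum_{i=1}^n \alpha_i = \lambda_{\max}(B)\,\text{tr}(A) \le \text{tr}(A)\,\text{tr}(B),
$$
which is the claim.

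The only conceptual obstacle is resisting the temptation to write $\text{tr}(AB) = \sum_i \alpha_i \beta_i$, which would require $A$ and $B$ to share an eigenbasis and is false in general; the quadratic-form bound $u_i^\top B u_i \le \lambda_{\max}(B)$ sidesteps this cleanly by never referring to the eigenvectors of $B$. Everything else is routine, relying only on the nonnegativity of the eigenvalues of a positive semi-definite matrix and the elementary fact that its largest eigenvalue never exceeds its trace.
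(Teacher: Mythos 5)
Your proof is correct, but it takes a genuinely different route from the paper's. The paper argues via the Cauchy--Schwarz inequality for the Frobenius inner product, $\text{tr}(AB) \le \|A\|_F\|B\|_F = \sqrt{\text{tr}(A^2)}\,\sqrt{\text{tr}(B^2)}$, and then uses the elementary eigenvalue inequality $\sqrt{\sum_i \nu_i^2} \le \sum_i \nu_i$ (valid since the eigenvalues of a positive semi-definite matrix are nonnegative) to bound each Frobenius norm by the corresponding trace. You instead diagonalize only $A$, write $\text{tr}(AB) = \sum_i \alpha_i\, u_i^\top B u_i$, and bound each quadratic form by $\lambda_{\max}(B) \le \text{tr}(B)$; every step is sound, including the termwise application of the bound justified by $\alpha_i \ge 0$. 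The two routes buy slightly different things: yours actually establishes the sharper intermediate inequality $\text{tr}(AB) \le \lambda_{\max}(B)\,\text{tr}(A)$ (and, by symmetry, $\le \lambda_{\max}(A)\,\text{tr}(B)$), and makes the nonnegativity $\text{tr}(AB) \ge 0$ explicit as a byproduct, whereas the paper's argument is symmetric in $A$ and $B$ from the start and passes through the incomparable intermediate bound $\|A\|_F\|B\|_F$. Either intermediate bound suffices for the applications in Theorems 3 and 4, where only the product-of-traces form is invoked. Your closing caution about not writing $\text{tr}(AB) = \sum_i \alpha_i \beta_i$ for non-commuting matrices is well placed; both your proof and the paper's correctly avoid that trap.
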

\begin{proof}
	From Cauchy-Schwarz inequality, we have 
	\begin{align*}
		\text{tr}(AB) \le \|A\|_F\|B\|_F = \sqrt{\text{tr}(A^2)} \sqrt{\text{tr}(B^2)}.
	\end{align*}
	Denote the eigenvalues of $A$ as $\nu_i \ge 0, i = 1, \cdots n$, the eigenvalues of $B$ as $\mu_i \ge 0, i = 1, \cdots, n$. Then $\sqrt{\text{tr}(A^2)} = \sqrt{\sum\nu_i^2} \le \sum \nu_i = \text{tr}(A)$, and similarly we have $\sqrt{\text{tr}(B^2)} = \sqrt{\sum\mu_i^2} \le \sum \mu_i = \text{tr}(B)$. Finally, we have $\text{tr}(AB) \le \text{tr}(A)\text{tr}(B)$ and the proof is complete.
\end{proof}

\subsection{Proof of Theorem 1}

(i) Let $W_1 = (X, C)$ and $\beta = (\beta_X^\top , \beta_C^\top )^\top $, then the OLS estimator is 
$$\hat{\beta}^{(ols)} = \underset{\beta_X, \beta_C}{\arg \min }\left\|y-X \beta_X-C \beta_C\right\|_2^2.$$
Define also that $\mathbb{L} := \left\|y-X \beta_X-C \beta_C\right\|_2^2$, then setting the partial derivatives of all the parameters as zero leads to
\begin{align*}
	& \frac{\partial \mathbb{L}}{\partial \beta_X}= -\frac{2}{N}X^\top  (y - X \beta_X-C \beta_C) = 0,\\
	& \frac{\partial \mathbb{L}}{\partial \beta_C}= -\frac{2}{N}C^\top  (y - X \beta_X-C \beta_C) = 0,
\end{align*}
which gives
\begin{equation*}
	\begin{array}{l}
		X^\top X\hat{\beta}_X^{(ols)} = X^\top (y - C\hat{\beta}_C^{(ols)}), \\
		C^\top C\hat{\beta}_C^{(ols)} = C^\top (y - X\hat{\beta}_X^{(ols)}).
	\end{array}
\end{equation*}
This further implies
\begin{equation*}
	\begin{array}{ll}
		\hat{\beta}_X^{(ols)} &= (X^\top (I_N - P_{C})X)^{-1}X^\top (I_N - P_C)y \\
		&= (X^\top (I_N - P_{C})X)^{-1}X^\top (I_N - P_C)(X \beta_X+C\beta_C+\varepsilon) \\
		&= \beta_X + (X^\top (I_N - P_{C})X)^{-1}X^\top (I_N - P_C)(C\beta_C+\varepsilon), 
	\end{array}
\end{equation*}
and
\begin{equation*}
	\begin{array}{ll}
		\hat{\beta}_C^{(ols)} &= (C^\top (I_N - P_{X})C)^{-1}C^\top (I_N - P_X)y \\
		&= (C^\top (I_N - P_{X})C)^{-1}C^\top (I_N - P_X)(X \beta_X+C\beta_C+\varepsilon)\\
		&= \beta_C + (C^\top (I_N - P_{X})C)^{-1}C^\top (I_N - P_X)(X \beta_X + \varepsilon).
	\end{array}
\end{equation*}
Note that the projection matrices $P_C$ and $P_X$ satisfy $(I_N - P_C)C = 0$ and $(I_N - P_X)X = 0$, we then have 
\begin{equation}\label{eq: raw}
	\begin{array}{ll}
		\hat{\beta}_X^{(ols)} - \beta_X = (X^\top (I_N - P_{C})X)^{-1}X^\top (I_N - P_C)\varepsilon, \\
		\hat{\beta}_C^{(ols)} - \beta_C = (C^\top (I_N - P_{X})C)^{-1}C^\top (I_N - P_X)\varepsilon.
	\end{array}
\end{equation}
Also, since $W_1$ is column full rank, from the inverse formula for the partitioned matrix, we see that
\begin{align*}
	(W_1^\top W_1)^{-1} &= \left(\begin{array}{cc}X^\top X & X^\top C \\ C^\top X & C^\top C\end{array}\right)^{-1}\\
	&= \left(\begin{array}{cc}(X^\top (I_N - P_{C})X)^{-1}& \boldsymbol{*}_1 \\ \boldsymbol{*}_2 & (C^\top (I_N - P_{X})C)^{-1}\end{array}\right),
\end{align*}
where
\begin{equation*}
	\begin{array}{ll}
		\boldsymbol{*}_1 &= -(X^\top (I_N - P_{C})X)^{-1}X^\top  C(C^\top  C)^{-1} \\
		&= -(X^\top X)^{-1}X^\top C(C^\top (I_N - P_{X})C)^{-1} ,
	\end{array}
\end{equation*}
\begin{equation*}
	\begin{array}{ll}
		\boldsymbol{*}_2 &= -(C^\top (I_N - P_{X})C)^{-1} C^\top X(X^\top X)^{-1} \\
		&= -(C^\top C)^{-1}C^\top X(X^\top (I_N - P_{C})X)^{-1},
	\end{array}
\end{equation*}
and 
\begin{align*}
	(X^\top (I_N - P_{C})X)^{-1} =& (X^\top X)^{-1} + (X^\top X)^{-1}X^\top C(C^\top (I_N - P_X)C)^{-1}C^\top X(X^\top X)^{-1},\\ 
	(C^\top (I_N - P_{X})C)^{-1} =& (C^\top C)^{-1} + (C^\top C)^{-1}C^\top X(X^\top (I_N - P_C)X)^{-1}X^\top C(C^\top C)^{-1}.
\end{align*}
Here, both $X^\top (I_N - P_{C})X$ and $C^\top (I_N - P_{X})C$ are symmetric and positive definite. 

Next, we consider the asymptotic behavior of
\begin{align*}
	\hat{\beta}_X^{(ols)} - \beta_X &= (X^\top (I_N - P_{C})X)^{-1}X^\top (I_N - P_C)\varepsilon \\
	&= (\frac{1}{N}X^\top (I_N - P_{C})X)^{-1}\frac{1}{N}X^\top (I_N - P_C)\varepsilon.
\end{align*}
We start by showing
\begin{align}\label{eq: proj_1}
	\frac{1}{N} X^\top P_C X \stackrel{L_1}{\longrightarrow} 0.
\end{align}
Note that the projection matrix $P_C$ is idempotent and $\text{rank}(P_C) = \text{tr}(P_C) = L$. Therefore, for a fixed $C$, there exists an orthogonal matrix $J = (J_{ij})_{i,j = 1}^N$ such that 
\begin{align}\label{eq: eq_ortho}
	J P_C J^\top = \left[\begin{array}{cc}
		I_L & \boldsymbol{0} \\
		\boldsymbol{0} & \boldsymbol{0}
	\end{array}\right]_{N\times N}.
\end{align}
Here we consider each individual element of $\frac{1}{N}X^\top P_{C} X$. Denote $X = [X_1, \cdots, X_P]$, and we have
\begin{align*}
	\frac{1}{N}X^\top P_{C} X &= \left(\frac{1}{N}X_i^\top P_{C} X_j\right)_{i,j = 1}^P.
\end{align*}
Denote the conditional expectation on $C$ as $\mathbb{E}^C[\boldsymbol{\cdot}]:= \mathbb{E}[\boldsymbol{\cdot} |C]$. By \eqref{eq: eq_ortho}, we have
\begin{align}
	\mathbb{E}^C\left[\left|\frac{1}{N}X_i^\top P_{C} X_j\right|  \right] &= \mathbb{E}^C\left[\left|\frac{1}{N}X_i^\top J^\top \left[\begin{array}{cc}
		I_L & \boldsymbol{0} \nonumber\\
		\boldsymbol{0} & \boldsymbol{0}
	\end{array}\right] J X_j\right|\right]\nonumber\\
	&=\mathbb{E}^C\left[\left|\frac{1}{N}\sum_{k = 1}^{L}\left(\sum_{l = 1}^N J_{kl}X_{li}\right)  \left(\sum_{l = 1}^N J_{kl}X_{lj}\right)\right|\right]\nonumber\\
	&\le \frac{1}{N}\sum_{k = 1}^{L}\mathbb{E}^C\left[\left|\left(\sum_{l = 1}^N J_{kl}X_{li}\right)  \left(\sum_{l = 1}^N J_{kl}X_{lj}\right)\right|\right]\nonumber\\
	&\le \frac{1}{N}\sum_{k = 1}^{L}\left(\mathbb{E}^C\left|\sum_{l = 1}^N J_{kl}X_{li}\right|^2\right)^{\frac{1}{2}}  \left(\mathbb{E}^C\left|\sum_{l = 1}^N J_{kl}X_{lj}\right|^2\right)^{\frac{1}{2}},
	\label{eq:CS}
\end{align}
where the last inequality follows from the Cauchy-Schwartz inequality.

Since $J$ is orthogonal, we have $\sum_{l = 1}^L J_{kl}^2 = 1$, for $k \in \{1, \cdots, N\}$. Since $\mathbb{E}[X_{ij}|C, E_0] = 0$ and $\mathbb{E}[X_{ij}^2|C, E_0] < \infty$ for $1\le i \le N, 1 \le j \le P$,  we obtain
\begin{align*}
	\mathbb{E}^C\left|\sum_{l = 1}^N J_{kl}X_{li}\right|^2 &= \mathbb{E}^C\left(\sum_{l = 1}^N J_{kl}^2 X_{li}^2 + \sum_{l \not= l'}J_{kl}X_{li} \cdot J_{kl'}X_{l'i}\right)\\
	&= \sum_{l = 1}^N J_{kl}^2 \mathbb{E}^C X_{li}^2 + \sum_{l \not= l'}J_{kl} J_{kl'}\mathbb{E}^C\left[X_{li} X_{l'i}\right]\\
	&= \mathbb{E}^C X_{1i}^2 + \sum_{l \not= l'}J_{kl} J_{kl'}\cdot 0\\
	&= \mathbb{E}^C X_{1i}^2 < \infty ,
\end{align*}
as $X_{li}$ and $ X_{l'i}$ are independent for $l\neq l'$.  Therefore, by \eqref{eq:CS}, we have as $N \longrightarrow \infty$,
\begin{align}
	\mathbb{E}^C\left[\left|\frac{1}{N}X_i^\top P_{C} X_j\right|  \right] 
	&\le \frac{1}{N} \sum_{k = 1}^L \sqrt{\mathbb{E}^C X_{1i}^2\mathbb{E}^C X_{1j}^2} = \frac{L}{N}\sqrt{\mathbb{E}^C X_{1i}^2\mathbb{E}^C X_{1j}^2} \longrightarrow 0, \quad \forall i,j \in \{1, \cdots, P\}, \label{eq: norm1}
\end{align}
which further gives
\begin{align*}
	\mathbb{E}\left[\left|\frac{1}{N}X_i^\top P_{C} X_j\right|  \right]  
	&= \mathbb{E}\left[\mathbb{E}^C\left[\left|\frac{1}{N}X_i^\top P_{C} X_j\right|  \right] \right] \\
	&\le \mathbb{E}\left[ \frac{L}{N}\sqrt{\mathbb{E}^C X_{1i}^2\mathbb{E}^C X_{1j}^2}\right]\\
	&= \frac{L}{N}\sqrt{\mathbb{E}^C X_{1i}^2\mathbb{E}^C X_{1j}^2} \longrightarrow 0, \quad \forall i,j \in \{1, \cdots, P\},
\end{align*}
and proves \eqref{eq: proj_1}.
Also, \eqref{eq: proj_1} implies
\begin{align}\label{eq: proj_2}
	\frac{1}{N}X^\top P_C X \stackrel{P}{\longrightarrow} 0.
\end{align}

By the law of large numbers, we have 
\begin{align}\label{eq: prop_X}
	\frac{1}{N}X^\top X \stackrel{P}{\longrightarrow} V_X,
\end{align}
where $V_X$ is a deterministic and nonsingular diagonal matrix. From \eqref{eq: proj_2} and \eqref{eq: prop_X} we also obtain
\begin{align}
	\frac{1}{N}X^\top(I - P_C) X \stackrel{P}{\longrightarrow} V_X. \label{eq: no_inverse}
\end{align}
Applying the continuous mapping theorem to \eqref{eq: no_inverse} we conclude
\begin{align}
	(\frac{1}{N}X^\top(I - P_C) X)^{-1} \stackrel{P}{\longrightarrow} V_X^{-1}.\label{eq: part1_1}
\end{align}
Now we consider the asymptotic normality of 
\begin{align*}
	\sqrt{N}(\hat{\beta}_X^{(ols)} - \beta_X) = (\frac{1}{N}X^\top (I_N - P_{C})X)^{-1}\frac{1}{\sqrt{N}}X^\top (I_N - P_C)\varepsilon.
\end{align*}
By \eqref{eq: proj_1}, we have
\begin{align*}
	\mathbb{E}\left[\left\| \frac{1}{\sqrt{N}}X^\top P_C\varepsilon \right\|_2^2\right] = \frac{\sigma_y^2}{N}\text{tr}\left(\mathbb{E}\left[X^\top P_C X\right]\right) \to 0,
\end{align*}
which $\frac{1}{\sqrt{N}}X^\top P_C\varepsilon \stackrel{P}{\longrightarrow} 0$. Also, for $\frac{1}{\sqrt{N}}X^\top \varepsilon$, the central limit theorem gives that
\begin{align*}
	\frac{1}{\sqrt{N}}X^\top\varepsilon \stackrel{d}{\longrightarrow} \mathcal{N}(0, \sigma_y^2 V_X).
\end{align*}
Hence, we arrive at the asymptotic normality result: 
\begin{align*}
	\sqrt{N}(\hat{\beta}_X^{(ols)} - \beta_X) \stackrel{d}{\longrightarrow} \mathcal{N}(0, \sigma_y^2 V_X^{-1}).
\end{align*}

(ii) To prove the consistency of $\hat{\beta}_C^{(ols)} - \beta_C$, we first point out that 
\begin{align*}
	\hat{\beta}_C^{(ols)} - \beta_C = (C^\top(I_N - P_X) C)^{-1}C^\top (I_N - P_X)\varepsilon,
\end{align*}
and examine
the $\ell_2$-norm of $\hat{\beta}_C^{(ols)} - \beta_C$ as follows:
\begin{align*}
	\mathbb{E}\left[\left\|\hat{\beta}_C^{(ols)} - \beta_C\right\|_2^2\right] &= \mathbb{E}\left[\varepsilon^\top(I_N - P_X) C(C^\top(I_N - P_X) C)^{-2}C^\top (I_N - P_X)\varepsilon\right]\\
	&= \mathbb{E}\left[\text{tr}\left(\varepsilon^\top(I_N - P_X) C(C^\top(I_N - P_X) C)^{-2}C^\top (I_N - P_X)\varepsilon\right)\right]\\
	&= \text{tr}\left(\mathbb{E}\left[\varepsilon\varepsilon^\top(I_N - P_X) C(C^\top(I_N - P_X) C)^{-2}C^\top (I_N - P_X)\right]\right)\\
	&= \sigma_y^2\text{tr}\left(\mathbb{E}\left[(I_N - P_X) C(C^\top(I_N - P_X) C)^{-2}C^\top (I_N - P_X)\right]\right)\\
	&= \sigma_y^2\mathbb{E}\left[\text{tr}\left((I_N - P_X) C(C^\top(I_N - P_X) C)^{-2}C^\top (I_N - P_X)\right)\right]\\
	&=\sigma_y^2\mathbb{E}\left[\text{tr}\left(C^\top(I_N - P_X) C (C^\top(I_N - P_X) C)^{-2} \right)\right]\\
	&= \sigma_y^2\mathbb{E}\left[\text{tr}\left((C^\top(I_N - P_X) C)^{-1} \right)\right].
\end{align*}
Note that $C^\top(I_N - P_X) C$ is positive definite, and we denote its eigenvalues as $\mu_1 \ge \cdots \ge \mu_L>0$. Then we have
\begin{align*}
	\text{tr}\left((C^\top(I_N - P_X) C)^{-1} \right) = \sum_{i = 1}^L \frac{1}{\mu_i}\le \frac{L}{\mu_L} = \frac{L}{a_N^2 \sigma_{\min}^2((I_N - P_X) V)} 
	\le \frac{L}{a_N^2 l_N^2}.
\end{align*}
Thus, as $N \to \infty$,  
\begin{align*}
	\mathbb{E}\left[\left\|\hat{\beta}_C^{(ols)} - \beta_C\right\|_2^2\right] &= \sigma_y^2\mathbb{E}\left[\text{tr}\left((C^\top(I_N - P_X) C)^{-1} \right)\right]\\
	&\le \sigma_y^2\mathbb{E}\left[\frac{L}{a_N^2 l_N^2}\right] = \frac{\sigma_y^2 L}{a_N^2 l_N^2} \to 0,
\end{align*}
thereby verifying the consistency of $\hat{\beta}_C^{(ols)}$.
\hfill $\square$

\subsection{Proof of Theorem 2:} 
\label{sec: proof2}
Similar to the calculation of \eqref{eq: raw}, we have
\begin{align}
	\tilde{\beta}_X^{(ols)} - \beta_X = (X^\top (I_N - P_{Z})X)^{-1}X^\top (I_N - P_Z)\varepsilon, \nonumber \\
	\tilde{\beta}_Z^{(ols)} - \beta_Z = (Z^\top (I_N - P_{X})Z)^{-1}Z^\top (I_N - P_X)\varepsilon. \label{eq: beta_zols}
\end{align}
Applying a similar proof strategy to $\tilde{\beta}_X^{(ols)}$ gives its consistency and asymptotic normality. Thus, we only need to consider $\tilde{\beta}_Z^{(ols)}$, and divide the proof into three steps:
\begin{enumerate}
	\item Show that $\frac{1}{N} Z^\top P_XZ \stackrel{P}{\longrightarrow} 0$.
	\item For $a_N = \sqrt{N}$, show that there exists a constant $m > 0$ such that $\frac{1}{N}\|Z\|_2^2 \ge m $ a.s..
	\item Show that $\frac{1}{N}Z^\top(I_N - P_X) \varepsilon \stackrel{P}{\longrightarrow} 0$.
\end{enumerate}
With the above three steps, we conclude that as $N \to \infty$
\begin{align*}
	&\left|\tilde{\beta}_Z^{(ols)} - \beta_Z\right| \\
	=& \left|({Z}^\top (I - P_X){Z})^{-1}{Z}^\top  (I_N - P_{X}) \varepsilon\right|\\
	=& \left(\frac{1}{N}{Z}^\top (I - P_X){Z}\right)^{-1} \left|\frac{1}{N}{Z}^\top  (I_N - P_{X}) \varepsilon\right|\\
	\le& \left(m - \frac{1}{N}{Z}^\top P_X {Z}\right)^{-1}\left|\frac{1}{N} {Z}^\top  (I_N - P_{X}) \varepsilon\right|\\
	&\stackrel{P}{\longrightarrow} m^{-1} \cdot 0 = 0,
\end{align*}
showing the consistency of $\tilde{\beta}_Z^{(ols)}$.

\textbf{Step 1: }We start the proof by showing $\frac{1}{N} Z^\top P_XZ \stackrel{P}{\longrightarrow} 0$. Note that 
\begin{align*}
	\frac{1}{N} Z^\top P_X Z = \frac{1}{N}Z^\top X \left(\frac{1}{N}X^\top X\right)^{-1} \frac{1}{N} X^\top Z,
\end{align*}
where 
\begin{align*}
	\frac{1}{N} X^\top Z = \left[\begin{array}{c}
		\frac{1}{N} \sum_{i = 1}^N X_{i1} Z_i   \\
		\vdots \\
		\frac{1}{N} \sum_{i = 1}^N X_{iP} Z_i
	\end{array}\right].
\end{align*}
Recall from the definition of $Z$ (2.12) that $Z$ represents the estimated community-based centrality of nodes, and nodes within the same community share the same value for the centrality measure ${Z}$. Here we rewrite $Z$ corresponding to $R$ communities by $\{{Z}^{(1)}, \cdots, {Z}^{(R)}\}$ i.e. $Z = [{Z}^{(c_1)}, \cdots, {Z}^{(c_N)}]^\top$ where ${Z}^{(c_i)}$ denotes the centrality of community $c_i$ with $c_i \in \{1,\cdots, R\}$ being the community label of node $i$. From the definitions of $Z$ and $U$, we observe that 
\begin{align}
	L\sum_{k = 1}^N Z_k &= \sum_{r = 1}^R N_r {Z}^{(r)} = \sum_{i,j}C_{ij}, \label{eq: thm3_col_sum1}
\end{align}
which further gives
\begin{align}
	N_r {Z}^{(r)} \le \sum_{r = 1}^R N_r {Z}^{(r)} = \frac{1}{L}\sum_{i,j}{C}_{ij} \le \frac{1}{L}\sqrt{NL\sum_{i,j}{C}_{ij}^2} = \frac{\sqrt{N a_N^2}}{\sqrt{L}} = \frac{N}{\sqrt{L}}. \label{eq: thm3_Ztilde}
\end{align}
From Assumption 4, we have $\min_i \frac{N_i}{N} > \epsilon$, thus
\begin{align}
	\label{eq: thm3_30}
	{Z}^{(r)} \le \frac{1}{\sqrt{L}} \frac{N}{N_r} \le \frac{1}{\sqrt{L}} \max_r \frac{N}{N_r} < \frac{1}{\epsilon\sqrt{L}}.
\end{align}
Now we prove that $\frac{1}{N} X^\top Z \stackrel{P}{\longrightarrow} 0$. From \eqref{eq: thm3_30}, entrywisely we have
\begin{align*}
	\left|\frac{1}{N} \sum_{i = 1}^N X_{ij}{Z}_i\right|  &= \left|\frac{1}{N} \sum_{r = 1}^R {Z}^{(r)} \sum_{i = 1}^{N_r} X_{1i}\right|\\
	&\le \sum_{r = 1}^R {Z}^{(r)} \left|\frac{1}{N}\sum_{i = 1}^{N_r} X_{ij}\right| \\
	&\le \sum_{r = 1}^R \frac{1}{\epsilon \sqrt{L}} \left|\frac{1}{N}\sum_{i = 1}^{N_r} X_{ij}\right| \stackrel{P}{\longrightarrow}0
\end{align*}
if $\frac{1}{N}\sum_{i = 1}^{N_r} X_{ij} \stackrel{P}{\longrightarrow}0$. So now we only need to prove $\frac{1}{N}\sum_{i = 1}^{N_r} X_{ij} \stackrel{P}{\longrightarrow}0$. Consider the second moment of $\frac{1}{N}\sum_{i = 1}^{N_r} X_{ij}$: 
\begin{align*}
	\mathbb{E}\left[\left(\frac{1}{N}\sum_{i = 1}^{N_r} X_{ij} \right)^2\right] &= \mathbb{E}\left[\frac{1}{N^2}\mathbb{E}^{N_r}\left[\left(\sum_{i = 1}^{N_r} X_{ij} \right)^2\right]\right]\\
	&= \frac{1}{N^2}\mathbb{E}\left[N_r\mathbb{E}X_{ij}^2\right]\\
	&= \frac{1}{N}\mathbb{E}\left[\frac{N_r}{N}\mathbb{E}X_{ij}^2\right] \le \frac{1}{N} \mathbb{E}X_{ij}^2 \to 0
\end{align*}
where $\frac{N_r}{N} \le 1$. Thus $\frac{1}{N}\sum_{i = 1}^{N_r} X_{ij} \stackrel{L_2}{\longrightarrow}0$ and $\frac{1}{N}\sum_{i = 1}^{N_r} X_{ij} \stackrel{P}{\longrightarrow}0$, which implies
\begin{align}
	\frac{1}{N} X^\top Z \stackrel{P}{\longrightarrow} 0 \label{eq: thm3_X_Z}
\end{align}
and
\begin{align}
	\frac{1}{N} {Z}^\top P_X Z \stackrel{P}{\longrightarrow} 0 \cdot V_X^{-1} \cdot 0 = 0. \label{eq: thm3_31}
\end{align}

\textbf{Step 2: }Now we consider $\frac{1}{N}Z^\top Z$. 
Since $C_{ij} > 0$, with Cauchy-Schwarz inequality, we have 
\begin{align*}
	\frac{1}{N}\|Z\|_2^2 = \frac{1}{N^2}\sum_{k = 1}^N Z_k^2 \sum_{k = 1}^N 1 \ge \frac{1}{N^2}\left(\sum_{k = 1}^N Z_k\right)^2.
\end{align*}
Combining with equation \eqref{eq: thm3_col_sum1} and Assumption 4, we see that 
\begin{align*}
	\frac{1}{N}\|Z\|_2^2 \ge \frac{1}{N^2}\left(\frac{1}{L}\sum_{i,j}C_{ij}\right)^2 \ge \frac{1}{N^2}\frac{1}{L^2} (N\underset{1 \le i \le N}{\min}\sum_{j = 1}^L C_{ij})^2 = \frac{1}{L^2} \underset{1 \le i \le N}{\min}\|C_i\|_1^2 \asymp \frac{a_N^2}{N} = 1.
\end{align*}
Hence, with $a_N = \sqrt{N}$, there exists a constant $m > 0$ such that 
\begin{align}
	\frac{1}{N}\|Z\|_2^2 \ge m , \quad a.s..\label{eq: thm3_Z^2_low}
\end{align}

From the results of \textbf{Step 1} and \textbf{Step 2}, by \eqref{eq: thm3_31} and \eqref{eq: thm3_Z^2_low}, we see that for $N$ sufficiently large,
\begin{align*}
	\frac{1}{N}Z^\top (I_N - P_X){Z} &= \frac{1}{N}{Z}^\top {Z} - \frac{1}{N}{Z}^\top P_X {Z} \\
	&\ge m - \frac{1}{N}{Z}^\top P_X {Z} > 0,
\end{align*}
which gives
\begin{align}
	\left(\frac{1}{N}{Z}^\top (I_N - P_X){Z}\right)^{-1} \le \left(m - \frac{1}{N}{Z}^\top P_X{Z}\right)^{-1}. \label{eq: thm3_33}
\end{align}

\textbf{Step 3: }Now we consider the behavior of 
\begin{align}
	\frac{1}{N}{Z}^\top  (I_N - P_{X}) \varepsilon = \frac{1}{N}{Z}^\top \varepsilon - \frac{1}{N}{Z}^\top P_{X} \varepsilon.\label{eq: thm3_34}
\end{align}
For the first part of RHS of \eqref{eq: thm3_34}, $\frac{1}{N} Z^{\top} \varepsilon \stackrel{P}{\longrightarrow} 0$ follows from the arguments showing $\frac{1}{N} X^{\top} Z \stackrel{P}{\longrightarrow} 0$. Moreover, we have $\frac{1}{N} X^{\top} Z \stackrel{L_2}{\longrightarrow} 0$: 

\begin{align}
	\mathbb{E}\left[\left\| \frac{1}{N}{Z}^\top \varepsilon\right\|_2^2\right]&= \frac{1}{N^2}\mathbb{E}\left[\varepsilon^\top {Z} {Z}^\top \varepsilon\right]\nonumber\\
	&= \frac{\sigma_y^2}{N^2}\mathbb{E}\left[\|{Z} \|_2^2\right].
\end{align}
And then we need to calculate $\mathbb{E}\left[\|Z\|_2^2 \right]$. The randomness of $Z$ arises from both eigenvector centrality and community structure. Therefore, we need to consider $\|Z\|_2^2$ from a different perspective here. From the definition of ${Z}$ in (2.12), we have 
\begin{align*}
	\|Z\|_2^2  &=  \frac{1}{L^2}\|U \operatorname{1}_{L} \|_2^2 \le \frac{1}{L} \|U\|_F^2 .
\end{align*}
By the definition of $U$ in (2.11), we have 
\begin{align*}
	\|U\|_F^2 &= \left\| S\bigl(H \bullet S^\top  C\bigr)\right\|_F^2 \\   
	&= \left\| \text{vec}\left(S\bigl(H \bullet S^\top C\bigr)\right)\right\|_2^2, \\
	\intertext{and since $\text{vec}(A(\omega \bullet B)) = (B^\top \odot A) \omega$ \citep{slyusar1999family},  then} 
	&= \left\|\left(C^\top S \odot S\right)H \right\|_2^2 \le \left\|\left(C^\top S \odot S\right)\right\|_F^2 \left\|H \right\|_2^2.
\end{align*}

It follows from the special structure of $S= [S_1, \cdots, S_R]$ that $\|T \odot S\|_F^2 = \sum_{i = 1}^R N_i \|T_i \|_2^2$ ,  for an $L\times R$ matrix $T = [T_1, \cdots, T_R] $, where $N_i$ denotes the size of community $i$. Then we obtain 
\begin{align*}
	\left\|C^\top S \odot S\right\|_F^2 &= \sum_{i = 1}^R N_i \left\|C^\top S_i\right\|_2^2\\
	& \le \sum_{i = 1}^R N_i \left\|C^\top\right\|_F^2 \left\| S_i\right\|_2^2\\
	& = \sum_{i = 1}^R N_i \left\|C^\top\right\|_F^2 N_i = \sum_{i = 1}^R N_i^2 \left\|C\right\|_F^2.
\end{align*}
Hence, with Assumption 4, we have the following upper bound for $\|Z\|_2^2$:
\begin{align}
	\|Z\|_2^2  & \le \frac{1}{L}\|U\|_F^2 \le \frac{1}{L} \left\|\left(C^\top S \odot S\right)\right\|_F^2 \left\|H \right\|_2^2 \nonumber\\
	& \le \frac{1}{L}\sum_{i = 1}^R N_i^2 \left\|C\right\|_F^2 \sum_{i = 1}^R \frac{1}{N_i^2} \nonumber \\
	&= \frac{1}{L}  \left\|C\right\|_F^2 \sum_{i = 1}^R N_i^2 \sum_{i = 1}^R \frac{1}{N_i^2} = \frac{1}{L}  \left\|C\right\|_F^2 \sum_{i = 1}^R \frac{N_i^2}{N^2} \sum_{i = 1}^R \frac{N^2}{N_i^2} \le \frac{R^2}{L \epsilon^2} \left\|C\right\|_F^2
	\label{eq:Z_upper}
\end{align}
Then taking expectations on both sides of \eqref{eq:Z_upper} gives
\begin{align}
	\mathbb{E}\left[\|Z\|_2^2 \right] & \le \frac{R^2}{L \epsilon^2} \mathbb{E}\left[ \left\| C\right\|_F^2 \right] =  O(a_N^2),\label{eq: thm3_35}
\end{align}
and with \eqref{eq: thm3_35}, we have as $N \to \infty$
\begin{align}
	\mathbb{E}\left[\left\| \frac{1}{N}{Z}^\top \varepsilon\right\|_2^2\right]= \frac{\sigma_y^2}{N^2}\mathbb{E}\left[\|{Z} \|_2^2\right] \le \frac{\sigma_y^2}{N^2} O(a_N^2)\to 0,\label{eq: thm3_36}
\end{align}
i.e. $\frac{1}{N}{Z}^\top \varepsilon \stackrel{L_2}{\longrightarrow} 0$.\\
For the second part of RHS of \eqref{eq: thm3_34}, using \eqref{eq: thm3_X_Z} and the law of large numbers, we have
\begin{align}
	\frac{1}{N}{Z}^\top P_{X} \varepsilon = \frac{1}{N}{Z}^\top X \left(\frac{1}{N}X^\top X\right)^{-1} \frac{1}{N} X^\top \varepsilon \stackrel{P}{\longrightarrow} 0 \cdot V_X^{-1} \cdot 0 = 0,\label{eq: thm3_37}
\end{align}
so that $\frac{1}{N}{Z}^\top P_X \varepsilon \stackrel{P}{\longrightarrow} 0$. 

Therefore, combining \eqref{eq: thm3_36}, and \eqref{eq: thm3_37}, we conclude that $\frac{1}{N}Z^\top(I_N - P_X) \varepsilon \stackrel{P}{\longrightarrow} 0$ and completes the proof of Step 3.
\hfill $\square$
\subsection{Proof of Theorem 3}
Now we consider the situation where measurement errors exist.
Let $\hat{W}_1 = (X, \hat{C})$, then the two-stage estimator $\hat{\beta}$ becomes
\begin{equation*}
	\begin{array}{ll}
		\hat{\beta} &=(\hat{W}_1^\top \hat{W}_1)^{-1}\hat{W}_1^\top y  \\
		&= \left(\begin{array}{cc}X^\top X & X^\top \hat{C} \\ \hat{C}^\top X & \hat{C}^\top \hat{C}\end{array}\right)^{-1} \left(\begin{array}{c}X^\top  \\ \hat{C}^\top \end{array}\right)y\\
		&= \left(\begin{array}{cc}(X^\top (I_N - P_{\hat{C}})X)^{-1}& \boldsymbol{*}_1 \\ \boldsymbol{*}_2 & (\hat{C}^\top (I_N - P_{X})\hat{C})^{-1}
		\end{array}\right)\left(\begin{array}{c}X^\top  \\ \hat{C}^\top \end{array}\right)y,
	\end{array}
\end{equation*}
where 
\begin{equation*}
	\begin{array}{ll}
		\boldsymbol{*}_1 &= -(X^\top (I_N - P_{\hat{C}})X)^{-1}X^\top \hat{C}(\hat{C}^\top \hat{C})^{-1} \\
		&= -(X^\top X)^{-1}X^\top \hat{C}(\hat{C}^\top (I_N - P_{X})\hat{C})^{-1} ,
	\end{array}
\end{equation*}
and 
\begin{equation*}
	\begin{array}{ll}
		\boldsymbol{*}_2 &= -(\hat{C}^\top (I_N - P_{X})\hat{C})^{-1} \hat{C}^\top X(X^\top X)^{-1} \\
		&= -(\hat{C}^\top \hat{C})^{-1}\hat{C}^\top X(X^\top (I_N - P_{\hat{C}})X)^{-1}.
	\end{array}
\end{equation*}
This gives
\begin{equation*}
	\hat{\beta} =\left(\begin{array}{c}
		\hat{\beta}_X\\ 
		\hat{\beta}_C
	\end{array}\right)
	=\left(\begin{array}{c}
		(X^\top (I_N - P_{\hat{C}})X)^{-1} X^\top  (I_N - P_{\hat{C}}) y\\ 
		(\hat{C}^\top (I_N - P_{X})\hat{C})^{-1}\hat{C}^\top (I_N - P_{X}) y
	\end{array}\right).
\end{equation*}
Since we assume $y = X\beta_{X} + C \beta_{C} + \varepsilon$, we have for $\delta_C := C- \hat{C}$,
\begin{equation}
	\hat{\beta} =\left(\begin{array}{c}
		\hat{\beta}_X\\ 
		\hat{\beta}_C
	\end{array}\right)
	=\beta + \left(\begin{array}{c}
		(X^\top (I_N - P_{\hat{C}})X)^{-1} X^\top  (I_N - P_{\hat{C}}) [\delta_C\beta_C + \varepsilon]\\ 
		(\hat{C}^\top (I_N - P_{X})\hat{C})^{-1}\hat{C}^\top (I_N - P_{X}) [\delta_C\beta_C + \varepsilon]
	\end{array}\right). \label{eq: 3.3expre}
\end{equation}

We first consider the consistency of $\hat{\beta}_X - \beta_X$, and observe from~\eqref{eq: 3.3expre} that
\begin{align*}
	\hat{\beta}_X - \beta_X &= (X^\top (I_N - P_{\hat{C}})X)^{-1} X^\top  (I_N - P_{\hat{C}}) [\delta_C\beta_C + \varepsilon]\\
	& = (\frac{1}{N}X^\top (I_N - P_{\hat{C}})X)^{-1} \left[\frac{1}{N}X^\top  (I_N - P_{\hat{C}}) \delta_C\beta_C + \frac{1}{N}X^\top  (I_N - P_{\hat{C}}) \varepsilon \right].
\end{align*}
Hence, as long as we justify the three convergence results below:
\begin{enumerate}
	\item $(\frac{1}{N}X^\top(I - P_{\hat{C}}) X)^{-1} \stackrel{P}{\longrightarrow} V_X^{-1}$,
	\item $\frac{1}{N}X^{\top}(I_N - P_{\hat{C}})\varepsilon \stackrel{P}{\longrightarrow} 0$,
	\item $\frac{1}{N}X^{\top}(I_N - P_{\hat{C}})\delta_C\beta_C \stackrel{P}{\longrightarrow} 0$,
\end{enumerate}
we are able to obtain the consistency of $\hat{\beta}_X$.
\medskip

\textbf{Step 1:} Using a similar proof strategy as for \eqref{eq: proj_1}  gives
\begin{align}
	\frac{1}{N} X^\top P_{\hat{C}} X \stackrel{L_1}{\longrightarrow} 0,\label{eq: thm2_proj_1}
\end{align}
which combined with the law of large numbers leads to
\begin{align}
	\frac{1}{N}X^\top(I - P_{\hat{C}}) X \stackrel{P}{\longrightarrow} V_X. \label{eq: thm2_no_inverse}
\end{align}
Applying the continuous mapping theorem to \eqref{eq: thm2_no_inverse}, we obtain
\begin{align}
	(\frac{1}{N}X^\top(I - P_{\hat{C}}) X)^{-1} \stackrel{P}{\longrightarrow} V_X^{-1}.\label{eq: part2_1}
\end{align}

\textbf{Step 2: }Now we show that $\frac{1}{N}X^\top(I- P_{\hat{C}})\varepsilon \stackrel{P}{\longrightarrow} 0$.
Consider the $\ell_2$-norm:
\begin{align*}
	\mathbb{E}\left[\left\|\frac{1}{N}X^\top P_{\hat{C}} \varepsilon\right\|_2^2 \right] &= \frac{1}{N^2}\mathbb{E}\left[\varepsilon^\top P_{\hat{C}} X X^\top P_{\hat{C}} \varepsilon \right]\\
	&= \frac{1}{N^2}\mathbb{E}\left[\text{tr}(\varepsilon^\top P_{\hat{C}} X X^\top P_{\hat{C}} \varepsilon) \right]\\
	&= \frac{1}{N^2}\mathbb{E}\left[\text{tr}(\varepsilon \varepsilon^\top P_{\hat{C}} X X^\top P_{\hat{C}} ) \right]\\
	&= \frac{1}{N^2}\sigma_y^2\mathbb{E}\left[\text{tr}( P_{\hat{C}} X X^\top P_{\hat{C}} ) \right]\\
	&= \frac{\sigma_y^2}{N^2}\mathbb{E}\left[\text{tr}(X^\top P_{\hat{C}} X) \right]\\
	&= \frac{\sigma_y^2}{N}\text{tr}\left(\mathbb{E}\left[\frac{1}{N}X^\top P_{\hat{C}} X \right]\right)\\
	&\le \frac{\sigma_y^2}{N}\text{tr}\left(\mathbb{E}\left[\left|\frac{1}{N}X^\top P_{\hat{C}} X\right| \right]\right) \to 0,
\end{align*}
where the convergence is given by \eqref{eq: thm2_proj_1} . Therefore, $\frac{1}{N}X^\top P_{\hat{C}} \varepsilon \stackrel{L_2}{\longrightarrow} 0$ and 
\begin{align}
	\frac{1}{N}X^\top P_{\hat{C}} \varepsilon \stackrel{P}{\longrightarrow} 0.\label{eq: part2_2_2}
\end{align}
Then combining the law of large numbers with \eqref{eq: part2_2_2} gives
\begin{align}\label{eq: thm2_part2}
	\frac{1}{N}X^\top (I_N - P_{\hat{C}})\varepsilon  \stackrel{P}{\longrightarrow} 0.
\end{align}

\textbf{Step 3: }For $\frac{1}{N}X^{\top}(I_N - P_{\hat{C}})\delta_C\beta_C$, we again consider its $\ell_2$-norm:
\begin{align*}
	\mathbb{E}\left[\left\|\frac{1}{N}X^{\top}(I_N - P_{\hat{C}})\delta_C\beta_C\right\|_2^2 \right] &= \frac{1}{N^2}\mathbb{E}\left[\beta_C^\top \delta_C^\top(I_n - P_{\hat{C}}) X X^\top (I_N - P_{\hat{C}}) \delta_C\beta_C \right]\\
	&= \frac{1}{N^2}\mathbb{E}\left[\text{tr}(\beta_C^\top \delta_C^\top(I_n - P_{\hat{C}}) X X^\top (I_N - P_{\hat{C}}) \delta_C\beta_C) \right]\\
	&= \frac{1}{N^2}\mathbb{E}\left[\text{tr}(\beta_C\beta_C^\top \delta_C^\top(I_n - P_{\hat{C}}) X X^\top (I_N - P_{\hat{C}}) \delta_C) \right]\\
	\intertext{and applying Lemma \ref{lemma:trace} gives}
	&\le \frac{1}{N^2}\mathbb{E}\left[\text{tr}(\beta_C\beta_C^\top)\text{tr}( \delta_C^\top(I_n - P_{\hat{C}}) X X^\top (I_N - P_{\hat{C}}) \delta_C) \right]\\
	&= \frac{1}{N^2}\text{tr}(\beta_C\beta_C^\top)\mathbb{E}\left[\text{tr}(\delta_C^\top(I_n - P_{\hat{C}}) X X^\top (I_N - P_{\hat{C}}) \delta_C) \right]\\
	&= \frac{1}{N^2}\text{tr}(\beta_C\beta_C^\top)\mathbb{E}\left[\text{tr}( \delta_C\delta_C^\top(I_n - P_{\hat{C}}) X X^\top (I_N - P_{\hat{C}})) \right];\\
	\intertext{since $I_N - P_{\hat{C}}$ is idempotent, applying Lemma \ref{lemma:trace} again leads to}
	&\le \frac{1}{N^2}\text{tr}(\beta_C\beta_C^\top)\mathbb{E}\left[\text{tr}( \delta_C\delta_C^\top)\text{tr}((I_n - P_{\hat{C}}) X X^\top (I_N - P_{\hat{C}})) \right]\\
	&= \frac{1}{N^2}\text{tr}(\beta_C\beta_C^\top)\mathbb{E}\left[\|\delta_C\|_F^2 \text{tr}(X^\top(I_n - P_{\hat{C}}) X ) \right]\\
	&= \text{tr}(\beta_C\beta_C^\top)\left(\frac{1}{N^2}\mathbb{E}\left[\|\delta_C\|_F^2 \text{tr}(X^\top X ) \right] - \frac{1}{N^2}\mathbb{E}\left[\|\delta_C\|_F^2 \text{tr}(X^\top P_{\hat{C}} X ) \right] \right).
\end{align*}
Next, since $\mathbb{E}^{C,E_0}[\frac{1}{N}X_i^\top X_i] < \infty$,  for $i = 1, \cdots, P$, we then have
\begin{align*}
	\frac{1}{N^2}\mathbb{E}\left[\|\delta_C\|_F^2 \text{tr}(X^\top X ) \right] &= \frac{1}{N}\mathbb{E}\left[\|\delta_C\|_F^2 \mathbb{E}^{C,E_0}\left[\text{tr}(\frac{1}{N}X^\top X ) \right]\right]\\
	&= \frac{1}{N}\mathbb{E}\left[\|\delta_C\|_F^2 \text{tr}\left(\mathbb{E}^{C,E_0}\left[\frac{1}{N}X^\top X \right]\right)\right] \\
	&= \frac{1}{N}\mathbb{E}\left[\|\delta_C\|_F^2 \sum_{i = 1}^P\mathbb{E}^{C,E_0}\left[\frac{1}{N}X_i^\top X_i\right]\right].
\end{align*}
Also, we see from Lemma~\ref{lem:rate_C} that $\mathbb{E}\left[\|\delta_C\|_F^2\right] = O(\frac{a_N^2 N}{\delta^2})$, so 
\begin{align}\label{eq: thm2_part2_1}
	\frac{1}{N^2}\mathbb{E}\left[\|\delta_C\|_F^2 \text{tr}(X^\top X ) \right] = \frac{1}{N}O(\frac{a_N^2 N}{\delta^2}) = O(\frac{a_N^2}{\delta^2}).
\end{align}
Similar to \eqref{eq: norm1}, we have for $i,j = 1, \cdots, P$, 
$$
\mathbb{E}^{\hat{C}}\left[\left|\frac{1}{N}X_i^\top P_{\hat{C}} X_j\right| \right] =\mathbb{E}^{C, E_0}\left[\left|\frac{1}{N}X_i^\top P_{\hat{C}} X_j\right| \right]
\le \frac{L}{N}\sqrt{\mathbb{E}^{C, E_0}X_{1i}^2\mathbb{E}^{C, E_0}X_{1j}^2} = O(\frac{1}{N}),
$$ 
then 
\begin{align*}
	\frac{1}{N^2}\mathbb{E}\left[\|\delta_C\|_F^2 \text{tr}(X^\top P_{\hat{C}} X ) \right] &=  \frac{1}{N^2}\mathbb{E}\left[\|\delta_C\|_F^2 \mathbb{E}^{C,E_0}\left[\text{tr}(X^\top P_{\hat{C}} X ) \right] \right]\\
	&= \mathbb{E}\left[\frac{1}{N}\|\delta_C\|_F^2 \mathbb{E}^{C,E_0}\left[\text{tr}(\frac{1}{N}X^\top P_{\hat{C}} X ) \right] \right]\\
	&= \mathbb{E}\left[\frac{1}{N}\|\delta_C\|_F^2 \mathbb{E}^{C, E_0}\left[\text{tr}(\frac{1}{N}X^\top P_{\hat{C}} X ) \right]\right]\\
	&= \mathbb{E}\left[\frac{1}{N}\|\delta_C\|_F^2 O\left(\frac{1}{N}\right) \right]\\
	&= \frac{1}{N}\mathbb{E}\left[\|\delta_C\|_F^2  \right]O\left(\frac{1}{N}\right).
\end{align*}  
Additionally, we obtain from Lemma~\ref{lem:rate_C} that
\begin{align}\label{eq: thm2_part2_2}
	\frac{1}{N^2}\mathbb{E}\left[\|\delta_C\|_F^2 \text{tr}(X^\top P_{\hat{C}} X ) \right] = \frac{1}{N}\mathbb{E}\left[\|\delta_C\|_F^2  \right]O\left(\frac{1}{N}\right) = \frac{1}{N}O(\frac{a_N^2 N}{\delta^2})O\left(\frac{1}{N}\right) = O(\frac{1}{\delta^2}).
\end{align}
Combining \eqref{eq: thm2_part2_1} and \eqref{eq: thm2_part2_2}, we have
\begin{align*}
	\frac{1}{N^2}\mathbb{E}\left[\|\delta_C\|_F^2 \text{tr}(X^\top X ) \right] - \frac{1}{N^2}\mathbb{E}\left[\|\delta_C\|_F^2 \text{tr}(X^\top P_{\hat{C}} X ) \right]  = O(\frac{a_N^2}{\delta^2}) - O(\frac{1}{\delta^2}) = O(\frac{a_N^2}{\delta^2}).
\end{align*}
Provided Assumption 4 holds, then as $N \to \infty$,
\begin{align*}
	\mathbb{E}\left[\left\|\frac{1}{N}X^{\top}(I_N - P_{\hat{C}})\delta_C\beta_C\right\|_2^2 \right] &\le \text{tr}(\beta_C\beta_C^\top)\left(\frac{1}{N^2}\mathbb{E}\left[\|\delta_C\|_F^2 \text{tr}(X^\top X ) \right] - \frac{1}{N^2}\mathbb{E}\left[\|\delta_C\|_F^2 \text{tr}(X^\top P_{\hat{C}} X ) \right] \right)\\
	&= \text{tr}(\beta_C\beta_C^\top) O(\frac{a_N^2}{\delta^2}) \to 0,
\end{align*}
which implies $\frac{1}{N}X^{\top}(I_N - P_{\hat{C}})\delta_C\beta_C \stackrel{L_2}{\longrightarrow} 0$, so that$\frac{1}{N}X^{\top}(I_N - P_{\hat{C}})\delta_C\beta_C \stackrel{P}{\longrightarrow} 0$. 
This completes the proof of Step 3.

\hfill $\square$

\subsection{Proof of Theorem 4}
\label{sec:A.6}
Similar to the calculation procedure of \eqref{eq: 3.3expre} in Theorem 3, we denote $\delta_Z = Z - \hat{Z}$, and 
\begin{equation*}
	\tilde{\beta} =\left(\begin{array}{c}
		\tilde{\beta}_{X}\\ 
		\tilde{\beta}_Z
	\end{array}\right)
	=\left(\begin{array}{c}
		(X^\top (I_N - P_{\hat{Z}})X)^{-1} X^\top  (I_N - P_{\hat{Z}}) y\\ 
		(\hat{Z}^\top (I_N - P_{X})\hat{Z})^{-1}\hat{Z}^\top (I_N - P_{X}) y
	\end{array}\right).
\end{equation*}
From the regression model $y = X\beta_{X} + Z\beta_Z +\varepsilon$, we have 
\begin{equation*}
	\tilde{\beta} =\left(\begin{array}{c}
		\tilde{\beta}_X\\ 
		\tilde{\beta}_Z
	\end{array}\right)
	=\beta + \left(\begin{array}{c}(X^\top (I_N - P_{\hat{Z}})X)^{-1}X^\top (I_N - P_{\hat{Z}}) [\delta_Z\beta_Z + \varepsilon]\\ 
		(\hat{Z}^\top(I_N - P_{X}) \hat{Z})^{-1}\hat{Z}^\top  (I_N - P_{X})[\delta_Z\beta_Z + \varepsilon]
	\end{array}\right).
\end{equation*}

First, we show the consistency of $\tilde{\beta}_X$. Since 
\begin{align*}
	\tilde{\beta}_X - \beta_X = (X^\top (I_N - P_{\hat{Z}})X)^{-1}X^\top (I_N - P_{\hat{Z}}) [\delta_Z\beta_Z + \varepsilon],
\end{align*}
then similar to the proof of \eqref{eq: part2_1} and \eqref{eq: thm2_part2}, we have 
\begin{align}\label{eq: thm3_part1}
	(\frac{1}{N}X^\top (I_N - P_{\hat{Z}})X)^{-1}\stackrel{P}{\longrightarrow} V_X^{-1},
\end{align}
and 
\begin{align}\label{eq: thm3_part2}
	\frac{1}{N}X^\top (I_N - P_{\hat{Z}})  \varepsilon \stackrel{P}{\longrightarrow} 0.
\end{align}
Thus, it suffices to prove 
\begin{align}\label{eq: thm3_part3}
	\frac{1}{N}X^\top (I_N - P_{\hat{Z}}) \delta_Z\beta_Z \stackrel{P}{\longrightarrow} 0.
\end{align}
Here we prove \eqref{eq: thm3_part3} by computing the $\ell_2$-norm of $\frac{1}{N}X^\top (I_N - P_{\hat{Z}}) \delta_Z\beta_Z$:
\begin{align}
	\mathbb{E}\left[\left\|\frac{1}{N}X^\top (I_N - P_{\hat{Z}}) \delta_Z\beta_Z\right\|_2^2\right] &=\frac{1}{N^2}\mathbb{E}\left[\beta_Z^\top \delta_Z^\top (I_N - P_{\hat{Z}})XX^\top (I_N - P_{\hat{Z}})\delta_Z \beta_Z\right] \nonumber\\
	=& \beta_Z^2\frac{1}{N^2} \mathbb{E}\left[\text{tr}(\delta_Z^\top (I_N - P_{\hat{Z}})XX^\top (I_N - P_{\hat{Z}})\delta_Z) \right]\nonumber\\
	=& \beta_Z^2\frac{1}{N^2} \mathbb{E}\left[\text{tr}(\delta_Z \delta_Z^\top (I_N - P_{\hat{Z}})XX^\top (I_N - P_{\hat{Z}}))\right],\nonumber\\
	\intertext{and by Lemma \ref{lemma:trace}, we have the upper bound}
	\le & \beta_Z^2\frac{1}{N^2} \mathbb{E}\left[\text{tr}(\delta_Z \delta_Z^\top) \text{tr}((I_N - P_{\hat{Z}})XX^\top (I_N - P_{\hat{Z}}))\right] \nonumber\\
	=&  \beta_Z^2 \frac{1}{N^2}\mathbb{E}\left[\text{tr}(\delta_Z \delta_Z^\top) \text{tr}((I_N - P_{\hat{Z}})XX^\top )\right]\nonumber\\
	=& \beta_Z^2\frac{1}{N^2} \mathbb{E}\left[\|\delta_Z\|_F^2 \text{tr}(X^\top(I_N - P_{\hat{Z}})X )\right]\nonumber\\
	=& \beta_Z^2\frac{1}{N^2} (\mathbb{E}\left[\|\delta_Z\|_F^2 \text{tr}(X^\top X)\right] - \mathbb{E}\left[\|\delta_Z\|_F^2 \text{tr}(X^\top P_{\hat{Z}}X)\right]).\label{eq: l2}
\end{align}
Now we first calculate $\mathbb{E}\left[\|\delta_Z\|_F^2 \right]$. From the definition of $\hat{Z}$, we see that 
\begin{align*}
	\|\delta_Z\|_F^2  &=  \frac{1}{L^2}\|(\hat{U} -U) \operatorname{1}_{L} \|_2^2 \le \frac{1}{L} \|\hat{U} -U\|_F^2 .
\end{align*}
Similar to \eqref{eq:Z_upper}, replacing $U$ and $C$ with $\hat{U} -U$, and $\hat{C} -C$ respectively yields
\begin{align}
	\|\delta_Z\|_F^2  & \le \frac{1}{L}\|\hat{U} -U\|_F^2 \nonumber\\
	& \le \frac{1}{L} \left\|\left((\hat{C} - C)^\top S \odot S\right)\right\|_F^2 \left\|H \right\|_2^2 \nonumber\\
	& \le \frac{1}{L}\sum_{i = 1}^R N_i^2 \left\|\hat{C} - C\right\|_F^2 \sum_{i = 1}^R \frac{1}{N_i^2} \nonumber\\
	& = \frac{1}{L}  \left\|\hat{C} - C\right\|_F^2 \sum_{i = 1}^R N_i^2 \sum_{i = 1}^R \frac{1}{N_i^2} \nonumber \\
	& = \frac{1}{L}  \left\|\hat{C} - C\right\|_F^2 \sum_{i = 1}^R \frac{N_i^2}{N^2} \sum_{i = 1}^R \frac{N^2}{N_i^2} \le \frac{R^2}{L \epsilon^2}  \left\|\hat{C} - C\right\|_F^2
	.\label{eq: thm3_deltaZ}
\end{align}
Then by Lemma~\ref{lem:rate_C}, taking expectations on both sides of \eqref{eq: thm3_deltaZ} gives
\begin{align}
	\mathbb{E}\left[\|\delta_Z\|_F^2 \right] & \le \frac{R^2}{L \epsilon^2} \mathbb{E}\left[ \left\|\hat{C} - C\right\|_F^2 \right]\nonumber\\
	&= O(\frac{a_N^2 N}{\delta^2}).\label{eq: thm3_part1_3}
\end{align}

Given the upper bound in \eqref{eq: thm3_part1_3}, we return to the $\ell_2$-norm of $\frac{1}{N}X^\top (I_N - P_{\hat{Z}}) \delta_Z\beta_Z$ as in \eqref{eq: l2}. Since $Z$ and $\hat{Z}$ are independent of $X$, we have
\begin{align*}
	\frac{1}{N^2} \mathbb{E}\left[\|\delta_Z\|_F^2 \text{tr}(X^\top X)\right] 
	&= \mathbb{E}\left[\frac{1}{N} \|\delta_Z\|_F^2 \text{tr}\left(\frac{1}{N} X^\top X\right)\right]\\
	&= \mathbb{E}\left[\frac{1}{N} \|\delta_Z\|_F^2 \mathbb{E}^{Z, \hat{Z}}\left[\text{tr}\left(\frac{1}{N} X^\top X\right)\right]\right]\\
	&= \mathbb{E}\left[\frac{1}{N} \|\delta_Z\|_F^2 \mathbb{E}^{C, \hat{C}}\left[\text{tr}\left(\frac{1}{N} X^\top X\right)\right]\right]
\end{align*}
Since $\mathbb{E}^{C, E_0}\left[\frac{1}{N} X_i^\top X_i\right] < \infty$, then
\[
\frac{1}{N^2} \mathbb{E}\left[\|\delta_Z\|_F^2 \text{tr}(X^\top X)\right]= O\left(\frac{a_N^2}{\delta^2}\right).
\]
In addition, we see that 
\begin{align*}
	\frac{1}{N^2} \mathbb{E}\left[\|\delta_Z\|_F^2 \text{tr}(X^\top P_{\hat{Z}} X)\right] &= \frac{1}{N} \mathbb{E}\left[\|\delta_Z\|_F^2 \mathbb{E}^{\hat{Z},Z}\left[\text{tr}\left(\frac{1}{N}X^\top P_{\hat{Z}} X\right)\right] \right]\\
	&= \frac{1}{N} \mathbb{E}\left[\|\delta_Z\|_F^2 \text{tr}\left(\mathbb{E}^{Z, \hat{Z}}\left[\frac{1}{N}X^\top P_{\hat{Z}} X\right]\right) \right]\\
	&= \frac{1}{N} \mathbb{E}\left[\|\delta_Z\|_F^2 \sum_{i = 1}^P\left(\mathbb{E}^{Z,\hat{Z}}\left[\frac{1}{N}X_i^\top P_{\hat{Z}} X_i\right]\right) \right]\\
	\intertext{and $\mathbb{E}^{Z, \hat{Z}}\left[\frac{1}{N}X_i^\top P_{\hat{Z}} X_j\right] = O(\frac{1}{N})$ gives}
	&= \frac{1}{N} \mathbb{E}\left[\|\delta_Z\|_F^2 O(\frac{1}{N}) \right]\\
	&= \frac{1}{N} O(\frac{a_N^2 N}{\delta^2}\frac{1}{N}) = O(\frac{a_N^2}{N \delta^2}).
\end{align*}
Hence, when Assumption 4 holds, 
\begin{align*}
	&\mathbb{E}\left[\left\|\frac{1}{N}X^\top (I_N - P_{\hat{Z}}) \delta_Z\beta_Z\right\|_2^2\right]\\
	\le& \beta_Z^2 \frac{1}{N^2}\left(\mathbb{E}\left[\|\delta_Z\|_F^2 \text{tr}(X^\top X)\right] - \mathbb{E}\left[\|\delta_Z\|_F^2 \text{tr}(X^\top P_{\hat{Z}} X)\right]\right)\\
	=& O(\frac{a_N^2}{\delta^2}) - O(\frac{a_N^2}{N\delta^2}) = O(\frac{a_N^2}{\delta^2})\to 0,
\end{align*}
which shows $\frac{1}{N}X^\top (I_N - P_{\hat{Z}}) \delta_Z\beta_Z \stackrel{L_2}{\longrightarrow} 0$ and then $\frac{1}{N}X^\top (I_N - P_{\hat{Z}}) \delta_Z\beta_Z \stackrel{P}{\longrightarrow} 0$, completing the proof of \eqref{eq: thm3_part3}.  Finally, combining \eqref{eq: thm3_part1}, \eqref{eq: thm3_part2} and \eqref{eq: thm3_part3}, we obtain
\begin{align*}
	\tilde{\beta}_X - \beta_X &= (X^\top (I_N - P_{\hat{Z}})X)^{-1}X^\top (I_N - P_{\hat{Z}}) [\delta_Z\beta_Z + \varepsilon]\\
	&= (\frac{1}{N}X^\top (I_N - P_{\hat{Z}})X)^{-1}\frac{1}{N}X^\top (I_N - P_{\hat{Z}}) [\delta_Z\beta_Z + \varepsilon]\\
	& \stackrel{P}{\longrightarrow} V_X^{-1}(0+0) = 0,
\end{align*}
showing the
consistency of $\tilde{\beta}_X$. 

\medskip

Now we consider the consistency of $\tilde{\beta}_Z$. Note that 
\begin{align*}
	\tilde{\beta}_Z - \beta_Z = (\hat{Z}^\top (I - P_X)\hat{Z})^{-1}\hat{Z}^\top  (I_N - P_{X})[\delta_{Z}\beta_{Z} + \varepsilon],
\end{align*}
and we divide the proof of the consistency of $\tilde{\beta}_Z - \beta_Z$ to 2 steps:
\begin{itemize}
	\item \textbf{Step 1:} Prove that with $N$ sufficiently large, we have 
	\begin{align*}
		\left(\frac{1}{N}\hat{Z}^\top (I_N - P_X)\hat{Z}\right)^{-1} &\le \left(m - \frac{2}{N}\frac{1}{L} a_N^2 \left\|\tilde{u}_1 - u_1\right\|_2 \sum_{i = 1}^R N_i^2 \sum_{i = 1}^R \frac{1}{N_i^2} - \frac{1}{N}\hat{Z}^\top P_X\hat{Z}\right)^{-1}\\
		&\stackrel{P}{\longrightarrow} m^{-1}
	\end{align*}
	where $m$ is a positive constant. 
	\item \textbf{Step 2:} Prove that $\frac{1}{N}\hat{Z}^\top  (I_N - P_{X})[\delta_{Z}\beta_{Z} + \varepsilon] \stackrel{P}{\longrightarrow} 0$.
\end{itemize}
Assembling the results of \textbf{Step 1} and \textbf{Step 2}, we obtain that as $N \to \infty$,
\begin{align*}
	&\left|\tilde{\beta}_Z - \beta_Z \right|\\
	=& \left|(\hat{Z}^\top (I - P_X)\hat{Z})^{-1}\hat{Z}^\top  (I_N - P_{X})[\delta_{Z}\beta_{Z} + \varepsilon]\right|\\
	=& \left(\frac{1}{N}\hat{Z}^\top (I - P_X)\hat{Z}\right)^{-1} \left|\frac{1}{N}\hat{Z}^\top  (I_N - P_{X})[\delta_{Z}\beta_{Z} + \varepsilon]\right|\\
	\le& \left(m - \frac{2}{N}\frac{1}{L} a_N^2 \left\|\tilde{u}_1 - u_1\right\|_2 \sum_{i = 1}^R N_i^2 \sum_{i = 1}^R \frac{1}{N_i^2} - \frac{1}{N}\hat{Z}^\top P_X\hat{Z}\right)^{-1}\left|\frac{1}{N}\hat{Z}^\top  (I_N - P_{X})[\delta_{Z}\beta_{Z} + \varepsilon]\right|\\
	&\stackrel{P}{\longrightarrow} m^{-1} \cdot 0 = 0,
\end{align*}
which gives the consistency of $\tilde{\beta}_Z$.

\textbf{Step 1: }Given the analogous properties of $Z$ and $\hat{Z}$, similar to \eqref{eq: thm3_X_Z} and \eqref{eq: thm3_31}, we have 
\begin{align}
	\frac{1}{N}\hat{Z}^\top X \stackrel{P}{\longrightarrow} 0. \label{eq: thm3_52}
\end{align}
and
\begin{align}
	\frac{1}{N}\hat{Z}^\top P_X \hat{Z} \stackrel{P}{\longrightarrow} 0. \label{eq: thm3_53}
\end{align}
Then we focus on $\frac{1}{N}\hat{Z}^\top \hat{Z}$. Since $\frac{1}{N}\delta_Z^\top \delta_Z \ge 0$, we see that 
\begin{align}
	\frac{1}{N}\hat{Z}^\top \hat{Z} = \frac{1}{N}Z^\top Z -\frac{2}{N} Z^\top \delta_Z +\frac{1}{N}\delta_Z^\top \delta_Z \ge \frac{1}{N}Z^\top Z -\frac{2}{N} Z^\top \delta_Z.  \label{eq: thm3_deltaZ_hat}
\end{align}
Then with the upper bound of $\|Z\|_2^2 $ and $\|\delta_Z\|_2^2$ derived in \eqref{eq:Z_upper} and \eqref{eq: thm3_deltaZ}, we have
\begin{align}
	\frac{2}{N} Z^\top \delta_Z \le \frac{2}{N} \|Z\|_2 \|\delta_Z\|_2 &\le \frac{2}{N} \frac{1}{L} \|C\|_F \left\|\hat{C} - C\right\|_F \sum_{i = 1}^R N_i^2 \sum_{i = 1}^R \frac{1}{N_i^2} \nonumber \\
	& \le \frac{2}{N} \frac{1}{L} \|C\|_F \left\|\hat{C} - C\right\|_F \frac{R^2}{\epsilon^2} \nonumber \\
	& = \frac{2}{N}\frac{1}{L} a_N^2 \left\|\tilde{u}_1 - u_1\right\|_2 \frac{R^2}{\epsilon^2}. \label{eq: thm3_cross_term_2}
\end{align}
Plugging \eqref{eq: thm3_Z^2_low} and \eqref{eq: thm3_cross_term_2} into \eqref{eq: thm3_deltaZ_hat}, we see that 
\begin{align}
	\frac{1}{N}\hat{Z}^\top \hat{Z} &\ge m - \frac{2}{N}\frac{1}{L} \|C\|_F \left\|\hat{C} - C\right\|_F \frac{R^2}{\epsilon^2} \nonumber\\
	&= m - \frac{2}{N}\frac{1}{L} a_N \left\|\hat{C} - C\right\|_F \frac{R^2}{\epsilon^2}\nonumber\\
	&= m - \frac{2}{N}\frac{1}{L} a_N^2 \left\|\tilde{u}_1 - u_1\right\|_2 \frac{R^2}{\epsilon^2},\label{eq: thm3_46}
\end{align}
where $\tilde{u}_1$ and $u_1$ are as defined in Lemma~\ref{lem:davis_Kahan}. To derive a positive lower bound for$\frac{1}{N}\hat{Z}^\top \hat{Z}$, i.e. show that the RHS of \eqref{eq: thm3_46} is greater than zero for $N$  sufficiently large, we need to show that 
$$\frac{2}{N}\frac{1}{L} a_N^2 \left\|\tilde{u}_1 - u_1\right\|_2 \frac{R^2}{\epsilon^2} \stackrel{P}{\longrightarrow} 0.$$
Since $\mathbb{E}\left[ \frac{2}{N}\frac{1}{L} a_N^2 \left\|\tilde{u}_1 - u_1\right\|_2 \frac{R^2}{\epsilon^2}\right] = O(\frac{a_N^2 \sqrt{N}}{N \delta}) = O(\frac{\sqrt{N}}{\delta}) \to 0$, we have $$ \frac{2}{N}\frac{1}{L} a_N^2 \left\|\tilde{u}_1 - u_1\right\|_2 \frac{R^2}{\epsilon^2} \stackrel{L_1}{\longrightarrow} 0$$ and 
\begin{align}
	\frac{2}{N}\frac{1}{L} a_N^2 \left\|\tilde{u}_1 - u_1\right\|_2 \frac{R^2}{\epsilon^2} \stackrel{P}{\longrightarrow} 0. \label{eq: thm3_74}
\end{align}
Then with \eqref{eq: thm3_53}, \eqref{eq: thm3_46} and \eqref{eq: thm3_74}, for $N$ large enough, we have 
\begin{align*}
	\frac{1}{N}\hat{Z}^\top (I_N - P_X)\hat{Z} &= \frac{1}{N}\hat{Z}^\top \hat{Z} - \frac{1}{N}\hat{Z}^\top P_X\hat{Z} \\
	&\ge m - \frac{2}{N}\frac{1}{L} a_N^2 \left\|\tilde{u}_1 - u_1\right\|_2 \sum_{i = 1}^R N_i^2 \sum_{i = 1}^R \frac{1}{N_i^2} - \frac{1}{N}\hat{Z}^\top P_X\hat{Z} > 0,
\end{align*}
which implies
\begin{align}
	\left(\frac{1}{N}\hat{Z}^\top (I_N - P_X)\hat{Z}\right)^{-1} &\le \left(m - \frac{2}{N}\frac{1}{L} a_N^2 \left\|\tilde{u}_1 - u_1\right\|_2 \sum_{i = 1}^R N_i^2 \sum_{i = 1}^R \frac{1}{N_i^2} - \frac{1}{N}\hat{Z}^\top P_X\hat{Z}\right)^{-1} \nonumber\\
	&\stackrel{P}{\longrightarrow} m^{-1}. \label{eq: thm3_81}
\end{align}

\textbf{Step 2: }Now we consider 
\begin{align}
	\frac{1}{N}\hat{Z}^\top  (I_N - P_{X})[\delta_{Z}\beta_{Z} + \varepsilon] = \frac{1}{N}\hat{Z}^\top \delta_{Z}\beta_{Z} - \frac{1}{N}\hat{Z}^\top P_{X}\delta_{Z}\beta_{Z} + \frac{1}{N}\hat{Z}^\top \varepsilon - \frac{1}{N}\hat{Z}^\top P_{X} \varepsilon.\label{eq: thm3_78}
\end{align}

For the first part of RHS of \eqref{eq: thm3_78}, the Cauchy-Schwarz inequality gives the upper bound that
\begin{align*}
	\mathbb{E}\left[\left| \frac{1}{N}\hat{Z}^\top \delta_{Z}\beta_{Z} \right|\right] &= \frac{\beta_{Z} }{N}\mathbb{E}\left[\left| \hat{Z}^\top \delta_{Z}\right|\right]\\
	&\le \frac{\beta_{Z} }{N}\left(\mathbb{E}\left[\left\| \hat{Z} \right\|_2^2\right]\right)^{\frac{1}{2}}\left(\mathbb{E}\left[\left\|\delta_{Z} \right\|_2^2\right]\right)^{\frac{1}{2}}.
\end{align*}
Similar to the calculation leading to \eqref{eq:Z_upper}, replacing $Z$ and $C$ with $\hat{Z}$ and $\hat{C}$ respectively, we have
\begin{align}
	\|\hat{Z}\|_2^2 \le \frac{1}{L} \|\hat{C}\|_F^2 \sum_{i = 1}^R N_i^2 \sum_{i = 1}^R \frac{1}{N_i^2} \le \frac{R^2}{L \epsilon^2} \|\hat{C}\|_F^2, \label{eq: thm3_upper_bound_Z_hat}
\end{align}
and
\begin{align}
	\mathbb{E}\left[\|\hat{Z}\|_2^2\right] \le \mathbb{E}\left[\frac{R^2}{L \epsilon^2} \|\hat{C}\|_F^2 \right] = O(a_N^2). \label{eq: thm3_80}
\end{align}
Then combining \eqref{eq: thm3_part1_3} and \eqref{eq: thm3_80}, provided Assumption 4 holds, we have as $N \to 0$, 
\begin{align}
	\mathbb{E}\left[\left| \frac{1}{N}\hat{Z}^\top \delta_{Z}\beta_{Z} \right|\right] &\le  \frac{\beta_{Z} }{N} O(a_N) O(\frac{a_N \sqrt{N}}{\delta}) = O(\frac{\sqrt{N}}{\delta}) \to 0, \label{eq: thm3_part2_1}
\end{align}
i.e. $\frac{1}{N}\hat{Z}^\top \delta_{Z}\beta_{Z} \stackrel{L_1}{\longrightarrow} 0$.\\
For the second part in the RHS of \eqref{eq: thm3_78}, also using the Cauchy-Schwarz inequality and Lemma \ref{lemma:trace}, we have the following upper bound:
\begin{align*}
	\mathbb{E}\left[\left| \frac{1}{N}\hat{Z}^\top P_{X}\delta_{Z}\beta_{Z} \right|\right] &= \frac{\beta_{Z} }{N}\mathbb{E}\left[\left| \hat{Z}^\top P_{X} \delta_{Z}\right|\right]\\
	&\le \frac{\beta_{Z} }{N}\left(\mathbb{E}\left[\left\| P_{X}\hat{Z} \right\|_2^2\right]\right)^{\frac{1}{2}}\left(\mathbb{E}\left[\left\|\delta_{Z} \right\|_2^2\right]\right)^{\frac{1}{2}}.
\end{align*}
From \eqref{eq: thm3_80}, we see that 
\begin{align*}
	\mathbb{E}\left[\left\| P_{X}\hat{Z} \right\|_2^2\right]&= \mathbb{E}\left[\hat{Z}^\top  P_{X}\hat{Z} \right]
	= \mathbb{E}\left[\text{tr}\left(\hat{Z}^\top  P_{X}\hat{Z}\right) \right]
	= \mathbb{E}\left[\text{tr}\left(\hat{Z}\hat{Z}^\top  P_{X}\right) \right]\\
	&\le \mathbb{E}\left[\text{tr}\left(\hat{Z}\hat{Z}^\top\right)\text{tr}\left(P_{X}\right) \right] = P\mathbb{E}\left[\|\hat{Z}\|_2^2\right] \le O(a_N^2),
\end{align*}
and as $N \to \infty$, 
\begin{align}
	\mathbb{E}\left[\left| \frac{1}{N}\hat{Z}^\top P_{X}\delta_{Z}\beta_{Z} \right|\right] &= \frac{\beta_{Z} }{N} O(a_N)O(\frac{a_N \sqrt{N}}{\delta}) = O\left(\frac{\sqrt{N}}{\delta}\right) \to 0.\label{eq: thm3_88}
\end{align}

For the third part in the RHS of \eqref{eq: thm3_78}, also from Lemma \ref{lemma:trace} and \eqref{eq: thm3_80} we have as $N \to \infty$,
\begin{align}
	\mathbb{E}\left[\left\| \frac{1}{N}\hat{Z}^\top \varepsilon\right\|_2^2\right]&= \frac{1}{N^2}\mathbb{E}\left[\varepsilon^\top \hat{Z} \hat{Z}^\top \varepsilon\right]\nonumber\\
	&= \frac{\sigma_y^2}{N^2}\mathbb{E}\left[\|\hat{Z} \|_2^2\right] =O(\frac{a_N^2}{N^2}) \to 0.\label{eq: thm3_90}
\end{align}
Finally,
for the fourth part in the RHS of \eqref{eq: thm3_78}, with \eqref{eq: thm3_52} and the law of large numbers, we have
\begin{align}
	\frac{1}{N}\hat{Z}^\top P_{X} \varepsilon = \frac{1}{N}\hat{Z}^\top X \left(\frac{1}{N}X^\top X\right)^{-1} \frac{1}{N} X^\top \varepsilon \stackrel{P}{\longrightarrow} 0 \cdot V_X^{-1} \cdot 0 = 0.\label{eq: thm3_91}
\end{align}
By synthesizing equations \eqref{eq: thm3_part2_1}, \eqref{eq: thm3_88}, \eqref{eq: thm3_90} and \eqref{eq: thm3_91}, we can demonstrate the convergence of \eqref{eq: thm3_78} which completes the proof of Step 2.

\hfill $\square$

\subsection{Proof of Theorem 5}

\textbf{1. Noiseless Network Case}

The OLS estimator is:  
\[
\tilde{\beta}_Z^{(ols)} - \beta_Z = \left(Z^\top (I_N - P_X) Z\right)^{-1} Z^\top (I_N - P_X) \varepsilon.
\]  
Normalizing by \(\sqrt{Z^\top Z / \sigma_y^2}\), we have
\begin{align*}
   \sqrt{\frac{Z^\top Z}{\sigma_y^2}} \left( \tilde{\beta}_Z^{(ols)} - \beta_Z \right) = &\underbrace{\frac{Z^\top Z}{N} \left( \frac{Z^\top (I_N - P_X) Z}{N} \right)^{-1}}_{(A)} \cdot \underbrace{\frac{Z^\top \varepsilon}{\sqrt{Z^\top Z \sigma_y^2}}}_{(B)} \\
    &\quad - \frac{Z^\top Z}{N} \left( \frac{Z^\top (I_N - P_X) Z}{N} \right)^{-1} \cdot \underbrace{\frac{Z^\top P_X \varepsilon}{\sqrt{Z^\top Z \sigma_y^2}}}_{(C)}.
\end{align*}
We only need to prove that (A) $\stackrel{P}{\longrightarrow} 1$, (B) $\stackrel{d}{\longrightarrow} \mathcal{N}(0,1)$ and (C) $\stackrel{d}{\longrightarrow} 0$.
For the term (A), with \eqref{eq: thm3_31}, we obtain 
\[
(A) = \frac{Z^\top (I_N - P_X) Z}{N} \left( \frac{Z^\top Z}{N} \right)^{-1} \stackrel{P}{\longrightarrow} 1.
\]

The term (B) is specified by
\[
(B) = \frac{Z^\top \varepsilon}{\sqrt{Z^\top Z \sigma_y^2}} = \sum_{i=1}^N \frac{Z_i \varepsilon_i}{\sqrt{Z^\top Z \sigma_y^2}}.
\]
Define \(W_{N,i} = \frac{Z_i \varepsilon_i}{\sqrt{Z^\top Z \sigma_y^2}}\), then:
\[
\mathbb{E}\left[ W_{N,i} \,|\, Z \right] = 0, \quad \sum_{i=1}^N \mathbb{E}\left[ W_{N,i}^2 \,|\, Z \right] = \frac{\sum_{i=1}^N Z_i^2 \sigma_y^2}{Z^\top Z \sigma_y^2} = 1.
\]

To apply the conditional central limit theorem \citep{yuan2014conditional}, we 
need to show
\[
\forall \epsilon > 0, \quad \sum_{i=1}^N \mathbb{E}\left[ W_{N,i}^2 \cdot 1_{\{|W_{N,i}| > \epsilon\}} \,|\, Z \right] \stackrel{a.s.}{\longrightarrow} 0.
\]

For any \(\epsilon_0 > 0\),  
\[
1_{\{|W_{N,i}| > \epsilon_0\}} \leq \frac{W_{N,i}^2}{\epsilon_0^2}.
\]
Thus,
\[
\mathbb{E}\left[ W_{N,i}^2 \cdot 1_{\{|W_{N,i}| > \epsilon_0\}} \,|\, Z \right] \leq \frac{\mathbb{E}\left[ W_{N,i}^4 \,|\, Z \right]}{\epsilon_0^2}.
\]
From Assumption 1, \(\mathbb{E}[\varepsilon_i^4] \leq k_0\), then
\[
\mathbb{E}\left[ W_{N,i}^4 \,|\, Z \right] = \frac{Z_i^4 \mathbb{E}[\varepsilon_i^4]}{(Z^\top Z \sigma_y^2)^2} \leq \frac{k_0 Z_i^4}{(Z^\top Z \sigma_y^2)^2}.
\]
Sum over \(i\):
\[
\sum_{i=1}^N \mathbb{E}\left[ W_{N,i}^4 \,|\, Z \right] \leq \frac{k_0 \sum_{i=1}^N Z_i^4}{(Z^\top Z \sigma_y^2)^2}.
\]
Since
\[
\sum_{i=1}^N Z_i^4 \leq \left(\max_{1 \le i \le N} Z_i^2\right) \sum_{i=1}^N Z_i^2 = \left(\max_{1 \le i \le N} Z_i^2\right) Z^\top Z,
\]
we have
\[
\frac{k_0 \sum_{i=1}^N Z_i^4}{(Z^\top Z \sigma_y^2)^2} \le  \frac{k_0 \left(\max_{1 \le i \le N} Z_i^2\right) }{Z^\top Z \sigma_y^4} = \frac{k_0}{\sigma_y^4} \cdot \frac{\left(\max_{1 \le i \le N} Z_i^2\right) }{Z^\top Z}.
\]
From Assumption 4, for each community \( r \), \( N_r \geq \epsilon N \). The condition \( Z^\top Z > N_r Z_i^2 \) implies:
\[
Z_i^2 < \frac{Z^\top Z}{N_r} \quad \forall i \in \text{community } r.
\]
Thus, there exists \(r_0\) such that \(\max_{1 \le i \le N} Z_i^2 < \frac{Z^\top Z}{N_{r_0}}\). Further,
\[
\frac{k_0 \sum_{i=1}^N Z_i^4}{(Z^\top Z \sigma_y^2)^2} \le \frac{k_0}{\sigma_y^4} \cdot \frac{\left(\max_{1 \le i \le N} Z_i^2\right) }{Z^\top Z} < \frac{k_0}{\sigma_y^4} \frac{1}{N_{r_0}} \le \frac{k_0}{\sigma_y^4} \frac{1}{\epsilon N} \stackrel{a.s.}{\longrightarrow} 0.
\]
Finally, we have 
\[
\sum_{i=1}^N \mathbb{E}\left[ W_{N,i}^2 \cdot 1_{\{|W_{N,i}| > \epsilon_0\}} \,|\, Z \right] \leq \frac{k_0}{\sigma_y^4} \frac{1}{\epsilon N} \stackrel{a.s.}{\longrightarrow} 0,
\]
which implies that the Lindeberg condition holds almost surely given \(Z\). By the conditional Lindeberg-Feller CLT, we obtain the convergence of the conditional distribution over \(Z\), i.e.,  
\[
\forall t \in \mathbb{R}, \quad \mathbb{P}\left( \frac{Z^\top \varepsilon}{\sqrt{Z^\top Z \sigma_y^2}} \leq t \, \bigg| \, Z \right) \stackrel{P}{\longrightarrow} \Phi(t).
\]  
Here, convergence in probability of the random probability measures means convergence in probability in the space $PM(\mathbb{R})$ of probability measures on $\mathbb{R}$ metrized by weak convergence. This implies that for almost every realization of \(Z\), the conditional distribution converges to the standard normal distribution as \(N \to \infty\). 

To extend this to the unconditional distribution, we integrate over \(Z\):  
\[
\mathbb{P}\left( \frac{Z^\top \varepsilon}{\sqrt{Z^\top Z \sigma_y^2}} \in A \right) = \mathbb{E} \left[ \mathbb{P}\left( \frac{Z^\top \varepsilon}{\sqrt{Z^\top Z \sigma_y^2}} \in A \, \bigg| \, Z \right) \right],
\]  
for any measurable set \(A\). Since the conditional probability \( \mathbb{P}(\cdot | Z) \) is bounded by \(1\) (i.e., \(0 \leq \mathbb{P}(\cdot | Z) \leq 1\) for all \(Z\)), the Dominated Convergence Theorem (DCT) justifies interchanging the limit and expectation:  
\[
\lim_{N \to \infty} \mathbb{P}\left( \frac{Z^\top \varepsilon}{\sqrt{Z^\top Z \sigma_y^2}} \in A \right) = \mathbb{E} \left[ \lim_{N \to \infty} \mathbb{P}\left( \frac{Z^\top \varepsilon}{\sqrt{Z^\top Z \sigma_y^2}} \in A \, \bigg| \, Z \right) \right] = \Phi(A).
\]  
Thus, the unconditional distribution converges weakly:  
\begin{align}
    \frac{Z^\top \varepsilon}{\sqrt{Z^\top Z \sigma_y^2}} \stackrel{d}{\longrightarrow} \mathcal{N}(0,1).
\label{eq: conv_weak}
\end{align}

For the term (C), we have 
\begin{align*}
\frac{Z^\top P_X \varepsilon}{\sqrt{Z^\top Z \sigma_y^2}} = \frac{Z^\top X (X^\top X)^{-1} X^\top \varepsilon}{\sqrt{Z^\top Z \sigma_y^2}} = \frac{Z^\top X}{\sqrt{Z^\top Z N}} \cdot \sqrt{\frac{N}{\sigma_y^2}} (X^\top X)^{-1} X^\top \varepsilon
\end{align*}
where combining \eqref{eq: thm3_Z^2_low} and \(\frac{Z^\top X}{N} \stackrel{P}{\longrightarrow} 0\), we have
\[
\frac{Z^\top X}{\sqrt{Z^\top Z N}} = \sqrt{\frac{N}{Z^\top Z}} \cdot \frac{Z^\top X}{N} \stackrel{P}{\longrightarrow} 0.
\]
Moreover, from the central limit theorem, we obtain
\[
\sqrt{\frac{N}{\sigma_y^2}} (X^\top X)^{-1} X^\top \varepsilon \stackrel{d}{\longrightarrow} \mathcal{N}(0, V_X^{-1}).
\]
By Slutsky’s Theorem, we finally obtain term (C) converge to 0 in distribution.

Combine the results of terms (A), (B) and (C) above, we get
\[
\sqrt{\frac{Z^\top Z}{\sigma_y^2}} \left( \tilde{\beta}_Z^{(ols)} - \beta_Z \right) \stackrel{d}{\longrightarrow} \mathcal{N}(0, 1).
\]

\textbf{2. Noisy Network Case}

Let \(\hat{Z} = Z - \delta_Z\). The OLS estimator is:  
\[
\tilde{\beta}_Z - \beta_Z = \left( \hat{Z}^\top (I_N - P_X) \hat{Z} \right)^{-1} \hat{Z}^\top (I_N - P_X)(\varepsilon + \delta_Z \beta_Z).
\]  
Normalizing by \(\sqrt{Z^\top Z/ \sigma_y^2}\), we have 
\begin{align*}
    \sqrt{\frac{Z^\top Z}{\sigma_y^2}} \left( \tilde{\beta}_Z - \beta_Z \right) &= \underbrace{\frac{Z^\top Z}{N} \left( \frac{\hat{Z}^\top (I_N - P_X) \hat{Z}}{N} \right)^{-1}}_{(D)} \cdot \underbrace{\frac{\hat{Z}^\top(I_N - P_X) \varepsilon}{\sqrt{Z^\top Z \sigma_y^2}}}_{(E)} \\
    &\quad + \frac{Z^\top Z}{N} \left( \frac{\hat{Z}^\top (I_N - P_X) \hat{Z}}{N} \right)^{-1} \cdot\underbrace{\frac{\hat{Z}^\top(I_N - P_X) \delta_Z \beta_Z}{\sqrt{Z^\top Z \sigma_y^2}}}_{(F)}.
\end{align*}
Then we only need to prove that (D) $\stackrel{P}{\longrightarrow} 1$, (E) $\stackrel{d}{\longrightarrow} \mathcal{N}(0, 1)$ and (F) $\stackrel{P}{\longrightarrow} 0$.

Notice that if \(a_N \to \infty\) and \(a_N^2 /\delta \to 0\) as \(N \to \infty\), Assumption 5 is satisfied.

For the term (D), we notice that 
\begin{align*}
	\frac{\hat{Z}^\top \hat{Z}}{N}(\frac{1}{N}\hat{Z}^\top (I - P_X)\hat{Z})^{-1} &=  \left(\frac{\hat{Z}^\top \hat{Z}}{N} - \frac{\hat{Z}^\top P_X \hat{Z}}{N}\right)(\frac{1}{N}\hat{Z}^\top (I - P_X)\hat{Z})^{-1} \\
	&\quad + \frac{\hat{Z}^\top P_X \hat{Z}}{N}(\frac{1}{N}\hat{Z}^\top (I - P_X)\hat{Z})^{-1}\\
	&= 1 + \frac{\hat{Z}^\top P_X \hat{Z}}{N}(\frac{1}{N}\hat{Z}^\top (I - P_X)\hat{Z})^{-1}.
\end{align*}
From \eqref{eq: thm3_53} and Assumption 5, we know $\frac{1}{N}\hat{Z}^\top P_X \hat{Z} \stackrel{P}{\longrightarrow} 0$. Also from \eqref{eq: thm3_81}, there exists $m > 0$ such that for $N$ large enough, $\left(\frac{1}{N}\hat{Z}^\top (I_N - P_X)\hat{Z}\right)^{-1} \le m^{-1}$ with probability 1. Thus we have $\frac{\hat{Z}^\top P_X \hat{Z}}{N}(\frac{1}{N}\hat{Z}^\top (I - P_X)\hat{Z})^{-1} \stackrel{P}{\longrightarrow} 0$ and $\frac{\hat{Z}^\top \hat{Z}}{N}(\frac{1}{N}\hat{Z}^\top (I - P_X)\hat{Z})^{-1} \stackrel{P}{\longrightarrow} 1$ is proved.

Then we show that $\frac{Z^\top Z}{\hat{Z}^\top \hat{Z}} \stackrel{P}{\longrightarrow} 1$. Notice that 
\begin{align*}
	\frac{\hat{Z}^\top \hat{Z}}{Z^\top Z} &= \frac{Z^\top Z - 2\delta_Z^\top Z + \|\delta_Z\|_2^2}{Z^\top Z}\\
	&= 1 + \frac{\|\delta_Z\|_2^2  - 2\delta_Z^\top Z }{\|Z\|_2^2}\\
	&= 1 + \frac{(\|\delta_Z\|_2^2  - 2\delta_Z^\top Z)/N }{\|Z\|_2^2/N}.
\end{align*}
From \eqref{eq: thm3_part1_3}, $\|\delta_Z\|_2^2/N \stackrel{P}{\longrightarrow} 0$. With \eqref{eq: thm3_46} and \eqref{eq: thm3_74}, $\delta_Z^\top Z/N \stackrel{P}{\longrightarrow} 0$. From \eqref{eq: thm3_Z^2_low}, we know that $\|Z\|_2^2/N$ is lower bounded by a positive constant $m$. Thus we have $\frac{(\|\delta_Z\|_2^2  - 2\delta_Z^\top Z)/N }{\|Z\|_2^2/N} \stackrel{P}{\longrightarrow} 0$ and $\frac{\hat{Z}^\top \hat{Z}}{Z^\top Z} \stackrel{P}{\longrightarrow} 1$. Therefore, the term (D) converges to $1$ in probability.

Then we turn to analyze term (E). Notice that 
\begin{align}
    \frac{\hat{Z}^\top(I_N - P_X) \varepsilon}{\sqrt{Z^\top Z \sigma_y^2}} &= \frac{(Z - \delta_Z)^\top(I_N - P_X) \varepsilon}{\sqrt{Z^\top Z \sigma_y^2}} \nonumber\\
    &= \frac{Z^\top(I_N - P_X) \varepsilon}{\sqrt{Z^\top Z \sigma_y^2}} - \frac{\delta_Z^\top(I_N - P_X) \varepsilon}{\sqrt{Z^\top Z \sigma_y^2}}\nonumber\\
    &= \frac{Z^\top \varepsilon}{\sqrt{Z^\top Z \sigma_y^2}} - \frac{Z^\top P_X \varepsilon}{\sqrt{Z^\top Z \sigma_y^2}} - \frac{\delta_Z^\top(I_N - P_X) \varepsilon}{\sqrt{Z^\top Z \sigma_y^2}}.\label{eq: new_prop3}
\end{align}
From the arguments of the convergence of terms (B) and (C) in the noiseless case, we know that
\[
\frac{Z^\top \varepsilon}{\sqrt{Z^\top Z \sigma_y^2}} - \frac{Z^\top P_X \varepsilon}{\sqrt{Z^\top Z \sigma_y^2}} \stackrel{d}{\longrightarrow} \mathcal{N}(0, 1).
\] 
Consider
\[
\frac{\delta_Z^\top(I_N - P_X) \varepsilon}{\sqrt{Z^\top Z \sigma_y^2}} = \frac{\delta_Z^\top \varepsilon}{\sqrt{Z^\top Z \sigma_y^2}} - \frac{\delta_Z^\top P_X \varepsilon}{\sqrt{Z^\top Z \sigma_y^2}}
\]
where 
\begin{align*}
	\mathbb{E}\left[\left\|\frac{1}{\sqrt{N}}\delta_Z^\top\varepsilon\right\|^2\right] \le \frac{\sigma_y^2}{N} \mathbb{E}\left[\|\delta_Z\|_F^2 \right] = O(\frac{a_N^2}{\delta^2}) \to 0
\end{align*}
and
\begin{align*}
	\mathbb{E}\left[\left\|\frac{1}{\sqrt{N}}\delta_Z^\top P_{X}\varepsilon\right\|^2\right] \le \frac{P \sigma_y^2}{N} \mathbb{E}\left[\|\delta_Z\|_F^2 \right] = O(\frac{a_N^2}{\delta^2}) \to 0,
\end{align*}
with \eqref{eq: thm3_Z^2_low}, we have 
\[
\frac{\delta_Z^\top(I_N - P_X) \varepsilon}{\sqrt{Z^\top Z \sigma_y^2}}  = \frac{\sqrt{N}}{\sqrt{Z^\top Z \sigma_y^2}}\cdot \frac{\delta_Z^\top(I_N - P_X) \varepsilon}{\sqrt{N}} \stackrel{P}{\longrightarrow} 0.
\]
Thus, the term (E) converges to $\mathcal{N}(0, 1)$ in distribution.

For the term (F), with Cauchy-Schwarz inequality, 
\[
\frac{\hat{Z}^\top(I_N - P_X) \delta_Z}{\sqrt{Z^\top Z \sigma_y^2}} = \frac{\sqrt{N}}{\sqrt{Z^\top Z \sigma_y^2}} \cdot \frac{\hat{Z}^\top(I_N - P_X) \delta_Z}{\sqrt{N}},
\]
where 
\[
\left|\frac{\hat{Z}^\top(I_N - P_X) \delta_Z}{\sqrt{N}}\right| \le \left|\frac{\hat{Z}^\top \delta_Z}{\sqrt{N}}\right| + \left|\frac{\hat{Z}^\top P_X \delta_Z}{\sqrt{N}}\right|.
\]
Consider the expectation below
\[
\mathbb{E}\left[ \left|\frac{\hat{Z}^\top \delta_Z}{\sqrt{N}}\right| \right] \le 
\frac{1}{\sqrt{N}} \mathbb{E}\left[ \|\hat{Z}\| \cdot \|\delta_Z\| \right] 
\le \frac{1}{\sqrt{N}} \sqrt{ \mathbb{E}[\|\hat{Z}\|^2] \cdot \mathbb{E}[\|\delta_Z\|^2] }.
\]
Combining \eqref{eq: thm3_part1_3} and \eqref{eq: thm3_80},
\[
\frac{1}{\sqrt{N}} \sqrt{ \mathbb{E}[\|\hat{Z}\|^2] \cdot \mathbb{E}[\|\delta_Z\|^2] } \le \frac{1}{\sqrt{N}} O(a_N) O(\frac{a_N \sqrt{N}}{\delta}) = O(\frac{a_N^2}{\delta}) \to 0.
\]
Also, 
\[
\left|\frac{\hat{Z}^\top P_X \delta_Z}{\sqrt{N}}\right| = \left|\frac{\hat{Z}^\top X}{\sqrt{N}}(\frac{1}{N}X^\top X)^{-1}\frac{1}{N}X^\top  \delta_Z\right|.
\]
Similar to the arguments of \eqref{eq: conv_weak}, we have \(\frac{\hat{Z}^\top X}{\sqrt{N}}\) converges to a normal distribution in distribution.  With \eqref{eq: thm3_part1_3}, 
\[\mathbb{E}\left[\left\|\frac{1}{N}X^\top  \delta_Z\right\|_2^2\right] \le \frac{1}{N^2}P \mathbb{E}\left[\left\|\delta_Z\right\|_2^2\right] = O(\frac{1}{N^2}\cdot \frac{a_N^2 N}{\delta^2}) = O(\frac{1}{N}\cdot \frac{a_N^2}{\delta^2}) \to 0
\]
Together with \eqref{eq: prop_X}, we have \(\frac{\hat{Z}^\top P_X \delta_Z}{\sqrt{N}} \stackrel{P}{\longrightarrow} 0\). Finally, we obtain the term (F) converges to 0 in probability.

Combine the results above that term (D) converges to 1 in probability, term (E) converge to the standard normal distribution in distribution, and term (F) converges to 0 in probability, we have
\[
\sqrt{\frac{Z^\top Z}{\sigma_y^2}} \left( \tilde{\beta}_Z - \beta_Z \right) \stackrel{d}{\longrightarrow} \mathcal{N}(0, 1). 
\]  
\hfill $\square$

\section{Simulation results of comparison between RCEF and C-MNetR/CC-MNetR:}
\label{sec: Comp}
In this section, we show the simulation results of both Regression with Community Fixed Effects (RCEF) and C-MNetR/CC-MNetR, which demonstrate that RCEF does not exhibit strong consistency properties under the same conditions when C-MNetR performs great, let alone compared to the much better-performing CC-MNetR.
\begin{itemize}
	\item \textbf{Compared to C-MNetR:} First, we have conducted a detailed comparison in the absence of measurement error between our method, C-MNetR:
	\begin{align*}
		y = X\beta_X + C\beta_C + \varepsilon
	\end{align*}
	and method RCFE: 
	\begin{align*}
		y = X\beta_X + C\beta_C + S\beta_S + \varepsilon
	\end{align*}
	where $S$ is the community label matrix. Results in both Table 4.1 (c) in the manuscript and Figure \ref{Figure: CMNetR} show the consistency of $\hat{\beta}_C^{(ols)}$ with $a_N = N$ in the absence of measurement error. But in Figure \ref{Figure: CFE} with the same condition, $\hat{\beta}_S$ lacks consistency. 
	\item \textbf{Compared to CC-MNetR:} Then, we have conducted a detailed comparison in the absence of measurement error between our method, CC-MNetR:
	\begin{align*}
		y = X\beta_X + Z\beta_Z + \varepsilon
	\end{align*}
	and method RCFE: 
	\begin{align*}
		y = X\beta_X + C\beta_C + S\beta_S + \varepsilon
	\end{align*}
	where $S$ is the community label matrix. Our findings indicate that directly regressing on community labels to obtain community fixed effects does not achieve consistency.
	\begin{figure}
		\centering
		\includegraphics[scale=0.4]{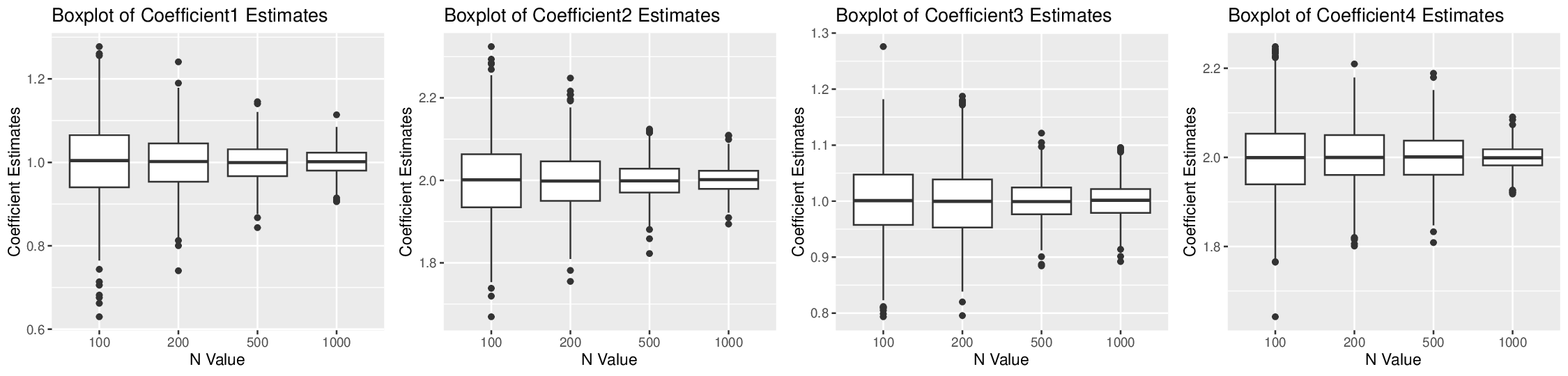}
		\caption{Boxplot of $n = 1000$ Estimates of 4 Coefficients $\beta = (\beta_{X_1}, \beta_{X_2}, \beta_{C_1}, \beta_{C_2})$ in \textbf{C-MNetR} when $a_N = N$ without measurement error. $\hat{\beta}_C$ shows consistency in this case. }
		\label{Figure: CMNetR}
	\end{figure}
	\begin{figure}
		\centering
		\includegraphics[scale=0.4]{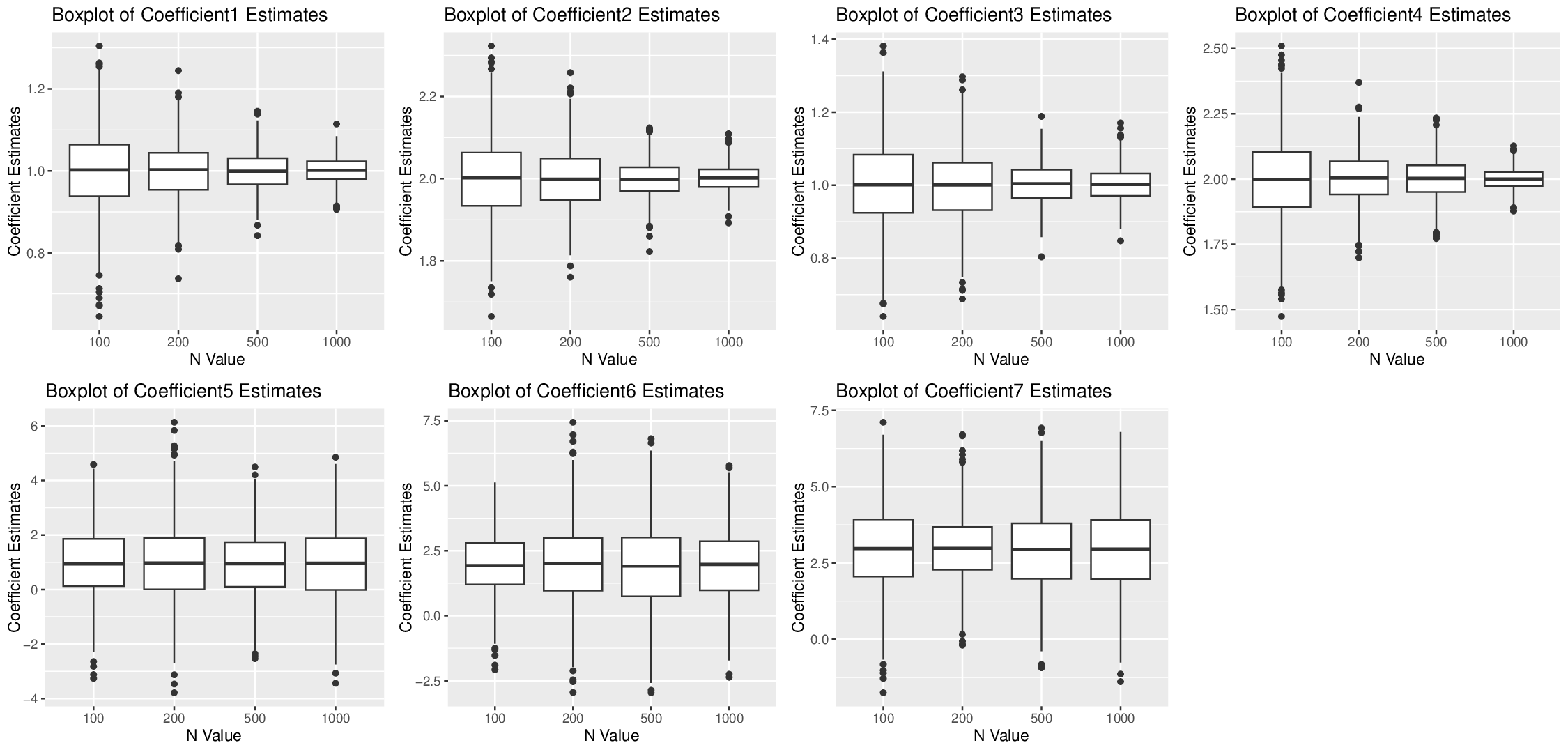}
		\caption{Boxplot of $n = 1000$ Estimates of 7 Coefficients $\beta = (\beta_{X_1}, \beta_{X_2}, \beta_{C_1}, \beta_{C_2}, \beta_{S_1}$, $\beta_{S_2}$, $\beta_{S_3})$ in \textbf{RCFE} when $a_N = N$ without measurement error. Even though the order of $a_N$ is already large, $\hat{\beta}_S$  still lacks consistency. }
		\label{Figure: CFE}
	\end{figure}
	
	\begin{figure}
		\centering
		\includegraphics[scale=0.4]{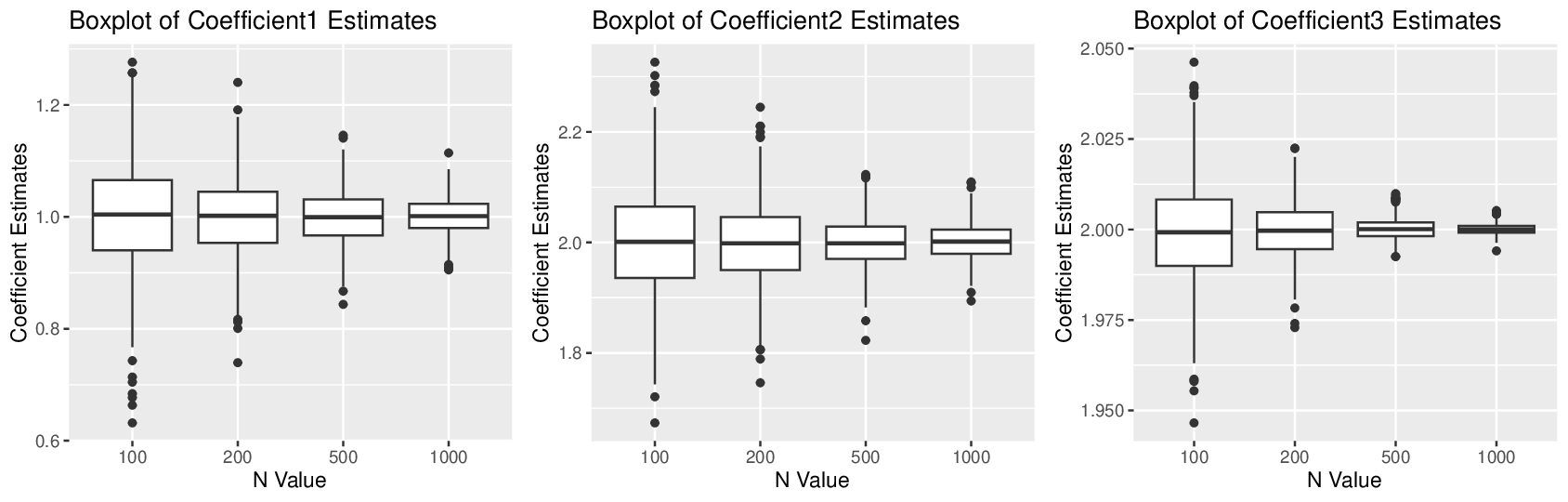}
		\caption{Boxplot of $n = 1000$ Estimates of 3 Coefficients $\beta = (\beta_{X_1}, \beta_{X_2}, \beta_Z)$ in \textbf{CC-MNetR} when $a_N = N$ without measurement error. Compared to $\hat{\beta}_S$, $\hat{\beta}_Z$ shows better consistency. }
		\label{Figure: CCMNetR}
	\end{figure}	
	Specifically, as is shown in Figure \ref{Figure: CFE}, the standard deviation of the estimators does not decrease with increasing $N$. In contrast, as shown in Figure \ref{Figure: CCMNetR}, simulation results of $\hat{\beta}_Z^{(ols)}$ show consistency properties since our CC-MNetR method addresses the inconsistency of $\hat{\beta}_C^{(ols)}$ by incorporating restricted community structure into the centrality measure. 
\end{itemize}
\newpage
\section{QQ-plots and MSE plots of coefficients of C-MNetR/CC-MNetR in Simulation part:}
\label{sec: mse}
In this section, we present the QQ-plots of $\hat{\beta}^{(ols)}$, $\tilde{\beta}^{(ols)}$ and mean-squared error plots of $\hat{\beta}^{(ols)}$, $\tilde{\beta}^{(ols)}$, $\hat{\beta}$, and $\tilde{\beta}$.
\subsection{QQ-plots}
\label{sec: qqplots}
Figure~\ref{fig:qqplot_thm1} and Figure~\ref{fig:qqplot_thm2} shows the QQ-plots of corresponding estimators.
\begin{figure}[h]
	\centering
	\includegraphics[width = \textwidth]{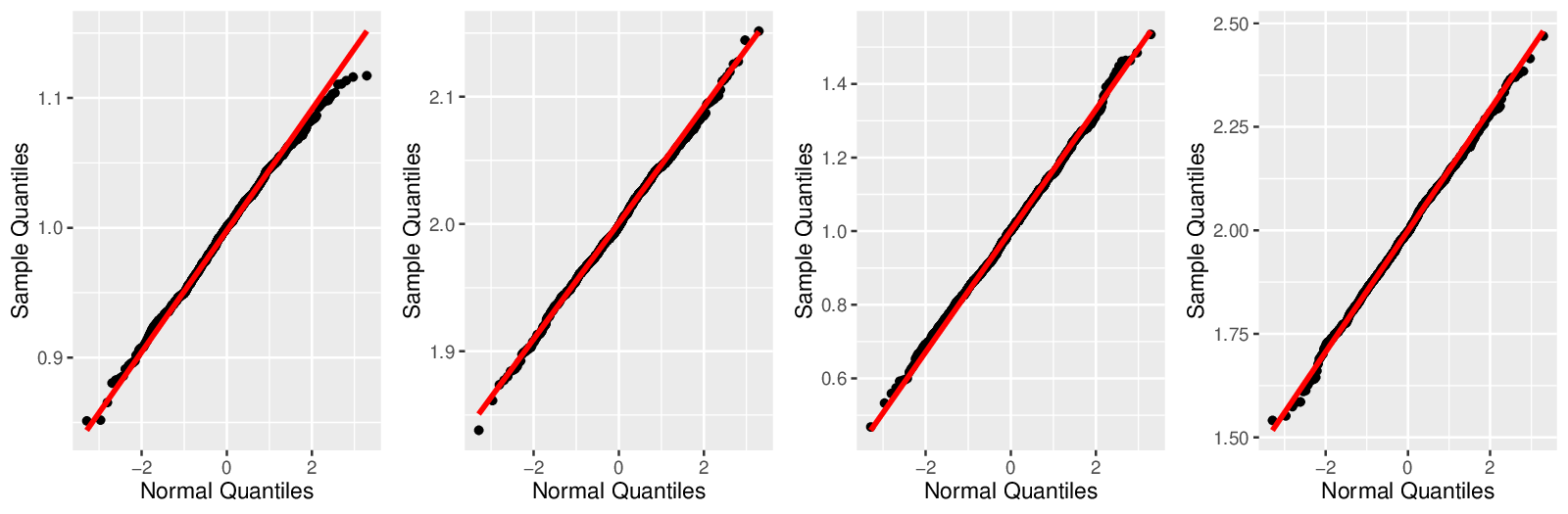}
	\caption{QQ-plots of $\hat{\beta}^{(ols)} = (\hat{\beta}_{X_1}^{(ols)},\hat{\beta}_{X_2}^{(ols)},\hat{\beta}_{C_1}^{(ols)},\hat{\beta}_{C_2}^{(ols)})$ (left to right) for $N = 500$, $a_N = N^{0.8}$, and $\beta = (1,2,1,2)^\top$.}
	\label{fig:qqplot_thm1}
\end{figure}

\begin{figure}[h]
	\centering
	\includegraphics[width = \textwidth]{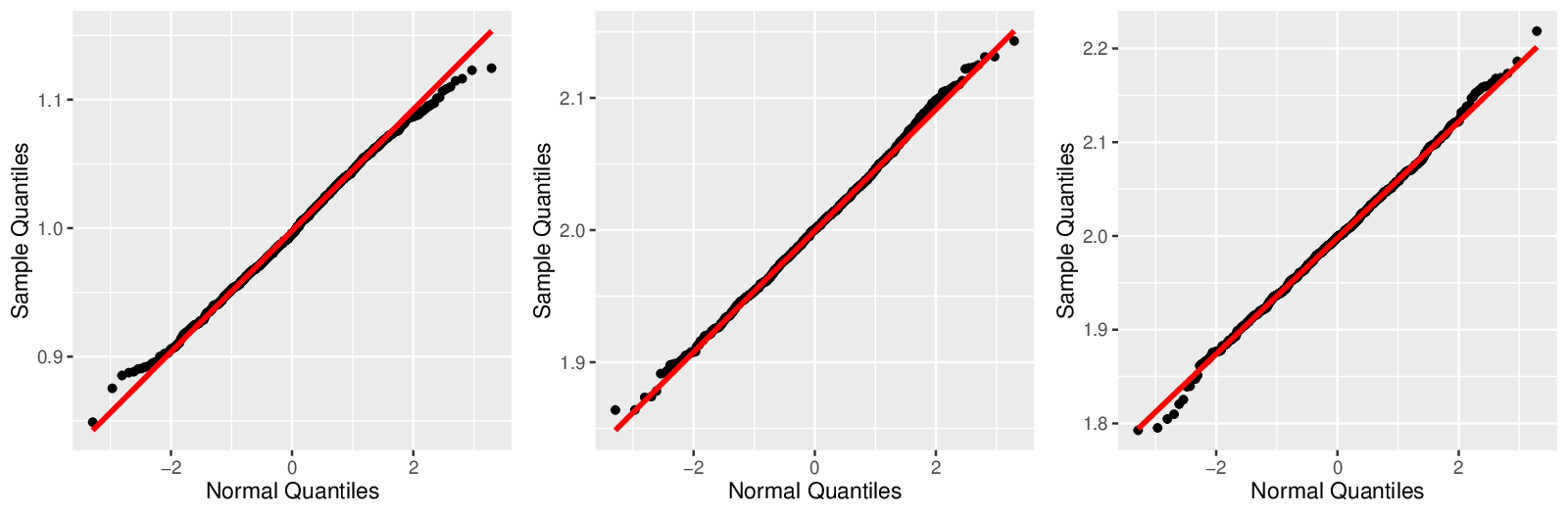}
	\caption{QQ-plots of $\tilde{\beta}^{(ols)} = (\tilde{\beta}_{X_1}^{(ols)},\tilde{\beta}_{X_2}^{(ols)},\tilde{\beta}_{Z}^{(ols)})$ (left to right), for $N = 500$, $a_N = \sqrt{N}$, and $\beta = (1,2,2)^\top$.}
	\label{fig:qqplot_thm2}
\end{figure}

\subsection{The order of smallest singular values of $(I- P_X)V$}
\label{sec: order}

Figure~\ref{Figure: order} inspects the smallest singular values for different configurations of the connection matrices in a SBM. The light blue line represents the scenario where the connection probability matrices of the two layers are identical as $P = 
\left[\begin{array}{ccc}
	0.8 & 0.1 & 0.1  \\
	0.1 & 0.8 & 0.1  \\
	0.1 & 0.1 & 0.8  
\end{array}\right]$. The blue line corresponds to the scenario where the connection probability matrices of the two layers are different as $P_1 = 
\left[\begin{array}{ccc}
	0.8 & 0.1 & 0.1  \\
	0.1 & 0.8 & 0.1  \\
	0.1 & 0.1 & 0.8  
\end{array}\right]$ for Layer 1 and $P_2 = 
\left[\begin{array}{ccc}
	0.5 & 0.25 & 0.25  \\
	0.25 & 0.5 & 0.25  \\
	0.25 & 0.25 & 0.5  
\end{array}\right]$ for Layer 2. The red line indicates line $y = N^{-1/2}$.
From Figure~\ref{Figure: order}, when $a_N = \sqrt{N}$, the multiplex network configuration we used results in $V$ satisfying $l_N = O(N^{-1/2})$ and $a_N l_N = O(1)$. Thus Condition~(3.20) is not fulfilled and the consistency of $\hat{\beta}_C^{(ols)}$ cannot be guaranteed.
\begin{figure}[h]
	\centering
	\includegraphics[width = \textwidth]{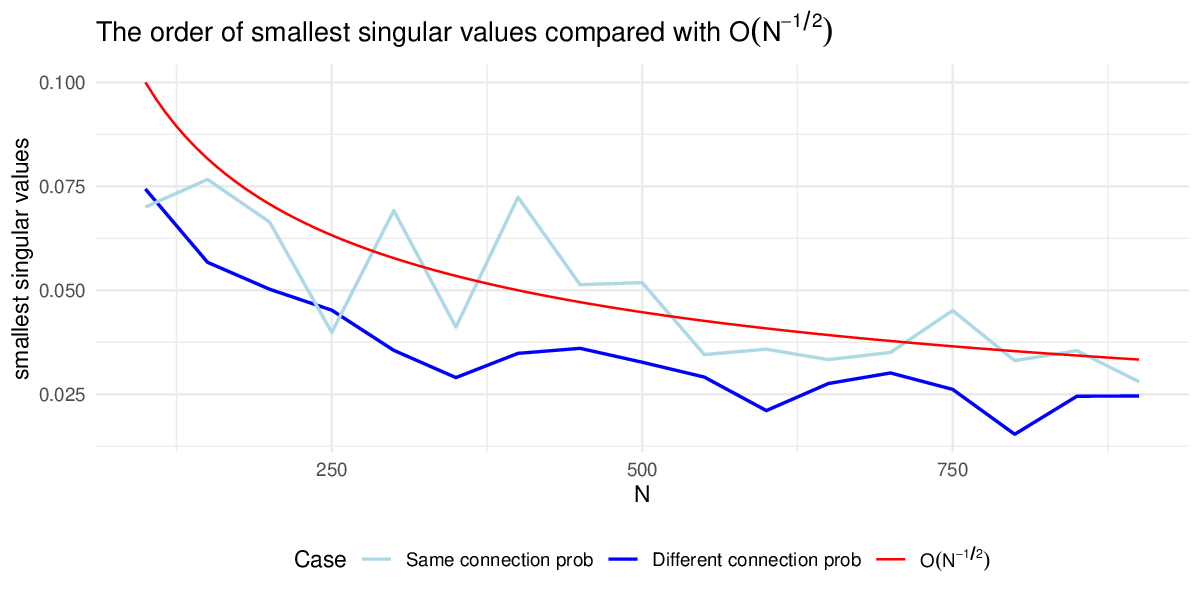}
	\caption{Smallest Singular Values for Connection Probability Matrices in Stochastic Block Model (SBM) Settings.}
	\label{Figure: order}
\end{figure}
\subsection{MSE plots}
\label{sec: sub_MSE}
This subsection presents the Mean Squared Error (MSE) of coefficient estimates for the C-MNetR and CC-MNetR models under different scenarios. The results are divided into two cases: (1) in the absence of measurement error and (2) with measurement error. For each case, the impact of varying the parameter \(a_N\) on the consistency and bias of the estimators is analyzed.  

In the absence of measurement error, Figure~\ref{Figure: CMNetR_no_ME} shows the MSE of the C-MNetR model for different values of \(a_N\). When \(a_N = \sqrt{N}\), the MSE of \(\hat{\beta}_C^{(ols)}\) does not decrease as \(N\) increases, but consistency is achieved with higher orders of \(a_N\). In contrast, Figure~\ref{Figure: CCMNetR_no_ME} demonstrates that the CC-MNetR model achieves consistency regardless of the order of \(a_N\).  

When measurement error is introduced, Figure~\ref{Figure: CMNetR_ME} reveals that the C-MNetR estimator \(\hat{\beta}_C\) remains biased for all values of \(a_N\). However, as shown in Figure~\ref{Figure: CCMNetR_ME}, the CC-MNetR estimator \(\tilde{\beta}\) maintains consistency even in the presence of measurement error, regardless of the order of \(a_N\).  

These results highlight the robustness of the CC-MNetR model compared to the C-MNetR model, particularly in scenarios involving measurement error.

\begin{figure}[h]
	\centering
	\begin{subfigure}[b]{0.48\textwidth}
		\includegraphics[width=\textwidth]{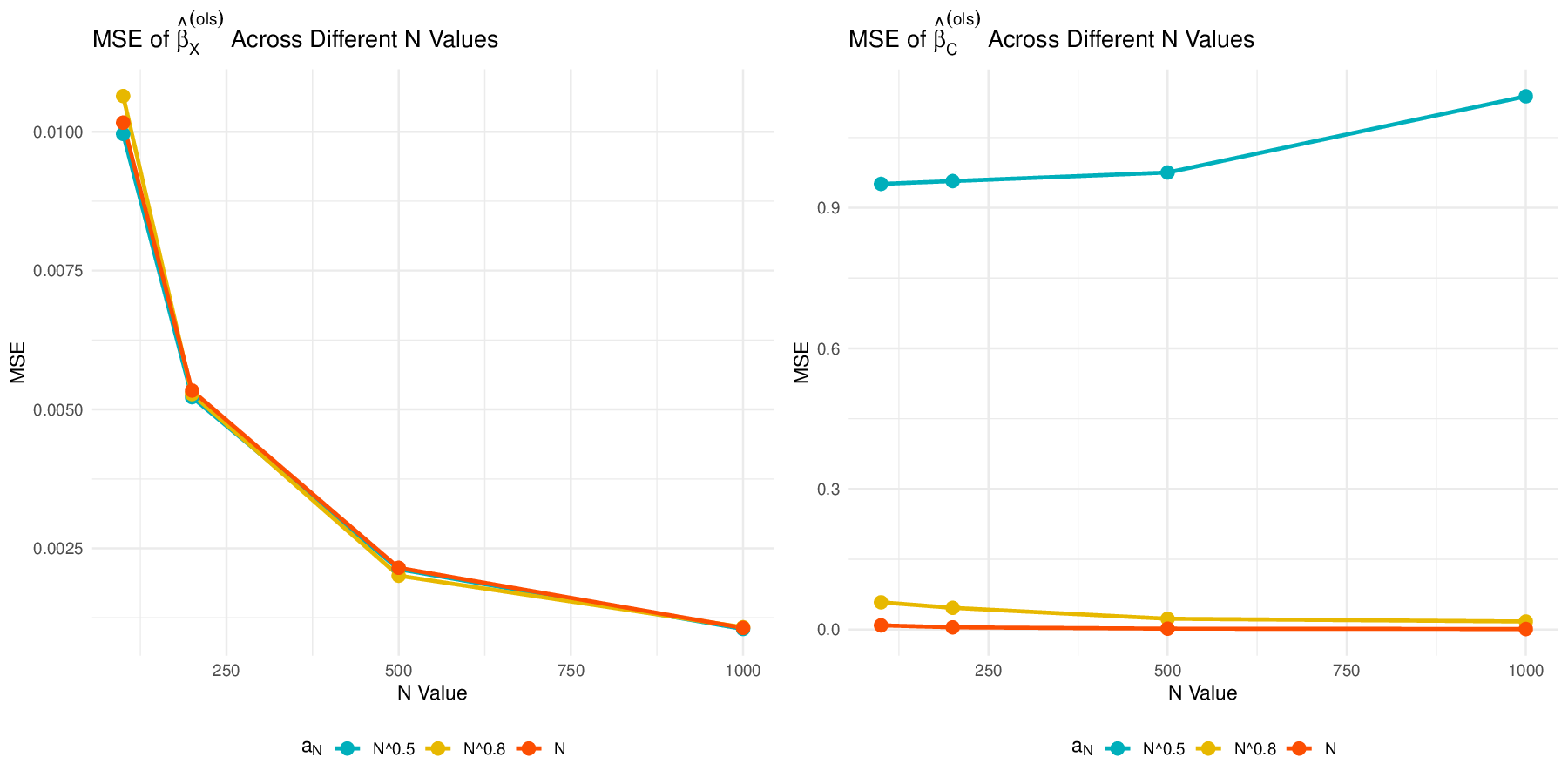}
		\caption{MSE of 1000 estimates of coefficients of C-MNetR in the absence of measurement error under different values of $a_N$. When $a_N = \sqrt{N}$, the MSE of $\hat{\beta}_C^{(ols)}$ does not decrease as $N$ increases, but consistency is achieved with higher orders of $a_N$.}
		\label{Figure: CMNetR_no_ME}
	\end{subfigure}
	\hfill
	\begin{subfigure}[b]{0.48\textwidth}
		\includegraphics[width=\textwidth]{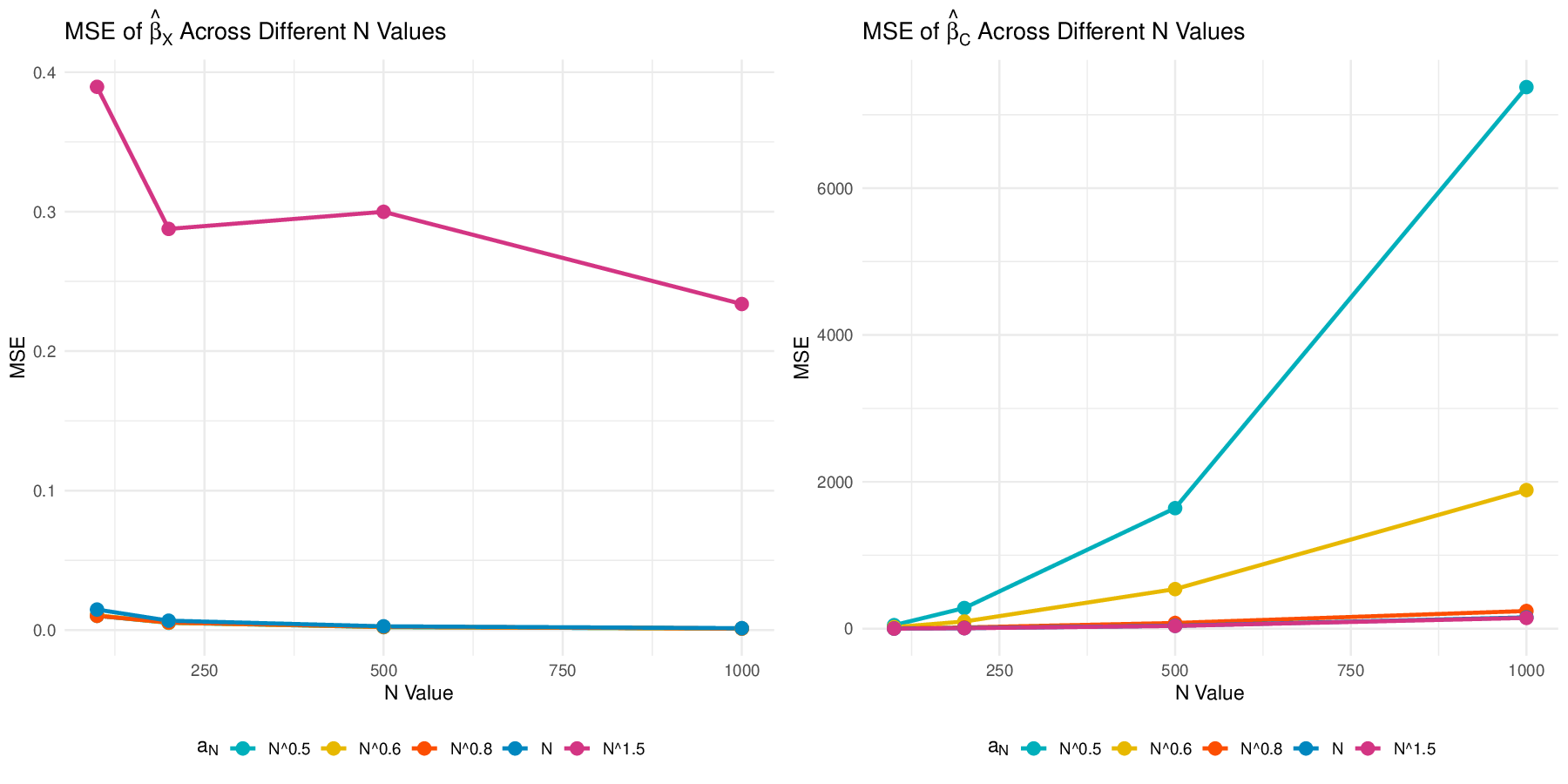}
		\caption{MSE of 1000 estimates of coefficients of C-MNetR with measurement error under different values of $a_N$. Regardless of $a_N$'s order, $\hat{\beta}_C$ remains biased.}
		\label{Figure: CMNetR_ME}
	\end{subfigure}
	\vskip\baselineskip
	\begin{subfigure}[b]{0.48\textwidth}
		\includegraphics[width=\textwidth]{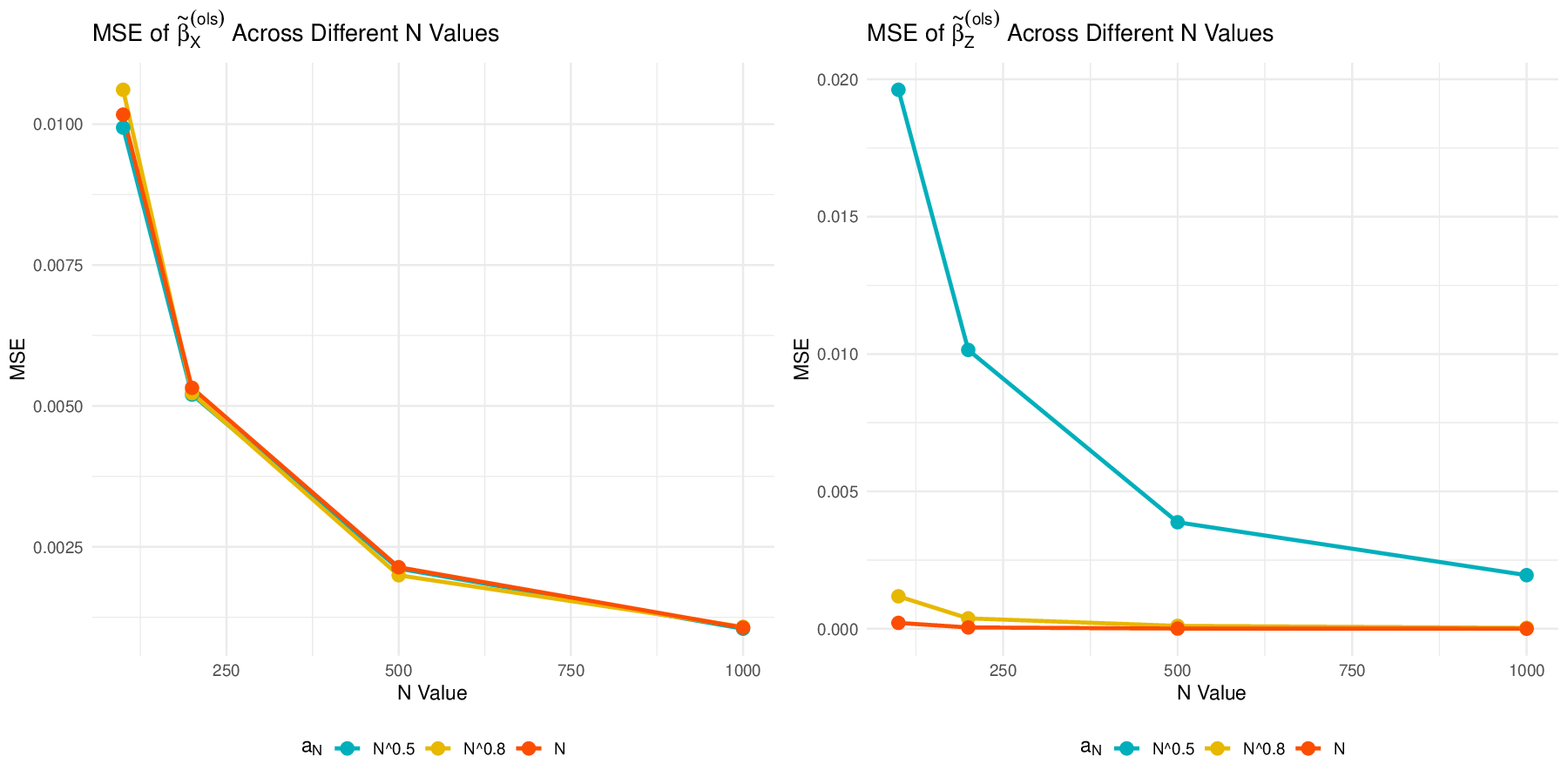}
		\caption{MSE of 1000 estimates of coefficients of CC-MNetR in the absence of measurement error under different values of $a_N$. $\tilde{\beta}^{(ols)}$ shows consistency regardless of $a_N$'s order.}
		\label{Figure: CCMNetR_no_ME}
	\end{subfigure}
	\hfill
	\begin{subfigure}[b]{0.48\textwidth}
		\includegraphics[width=\textwidth]{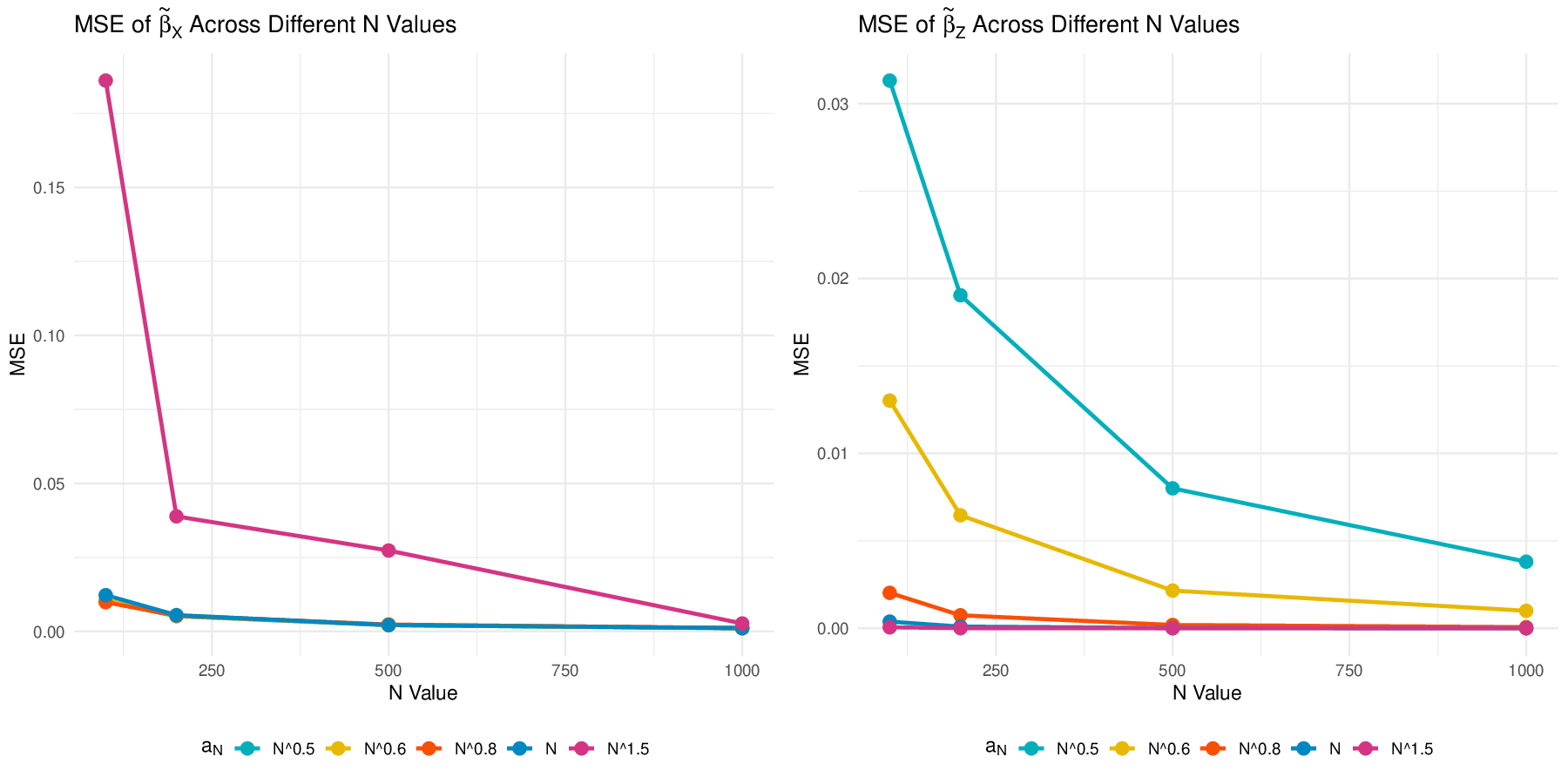}
		\caption{MSE of 1000 estimates of coefficients of CC-MNetR with measurement error under different values of $a_N$. $\tilde{\beta}$ shows consistency regardless of $a_N$'s order.}
		\label{Figure: CCMNetR_ME}
	\end{subfigure}
	\caption{Comparison of MSE for C-MNetR and CC-MNetR models with and without measurement error under different values of $a_N$.}
	\label{Figure: MSE_Comparison}
\end{figure}

\section{WIOD data description and analysis result}
\subsection{Centrality comparison}
Figure~\ref{fig: ctr_07_14} shows the difference of community-based centrality $Z$ in 2007 and 2014.
\begin{figure}[H]
	\centering
	\includegraphics[width = \textwidth]{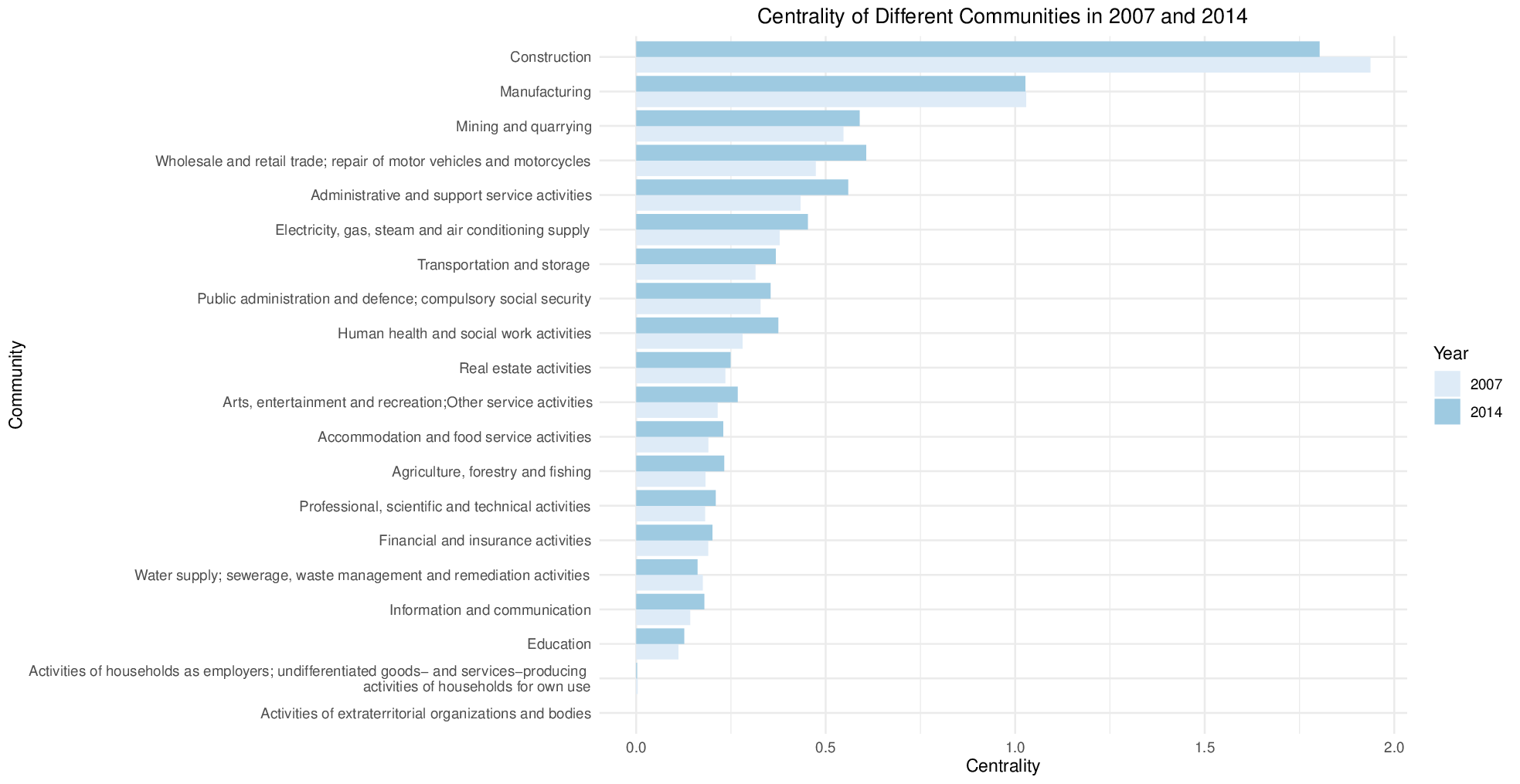}
	\caption{Changes in the centrality of different communities corresponding to industries from 2007 to 2014.}
	\label{fig: ctr_07_14}
\end{figure}
\subsection{Details of variables in SEA dataset:}
The variable details of the SEA dataset are summarized in Table~\ref{tab: sea}, and their relationships are visualized in the scatterplot (Figure~\ref{fig: scatterplot}).
\label{sec: var}
\begin{table}[h!]
	\centering
	\begin{tabular}{l|l}
		\textbf{Values} &	\textbf{Description}\\
		\hline GO&Gross output by industry at current basic prices (in millions of national currency)\\
		II&Intermediate inputs at current purchasers' prices (in millions of national currency)\\
		VA&Gross value added at current basic prices (in millions of national currency)\\
		EMP&Number of persons engaged (thousands)\\
		EMPE&Number of employees (thousands)\\
		H\_EMPE&Total hours worked by employees (millions)\\
		COMP&Compensation of employees (in millions of national currency)\\
		LAB&Labour compensation (in millions of national currency)\\
		CAP&Capital compensation (in millions of national currency)\\
		K&Nominal capital stock (in millions of national currency)\\
	\end{tabular}
	\caption{Descriptions of 10 variables contained in SEA.}
	\label{tab: sea}
\end{table}

\begin{figure}[h]
	\centering
	\includegraphics[width = \textwidth]{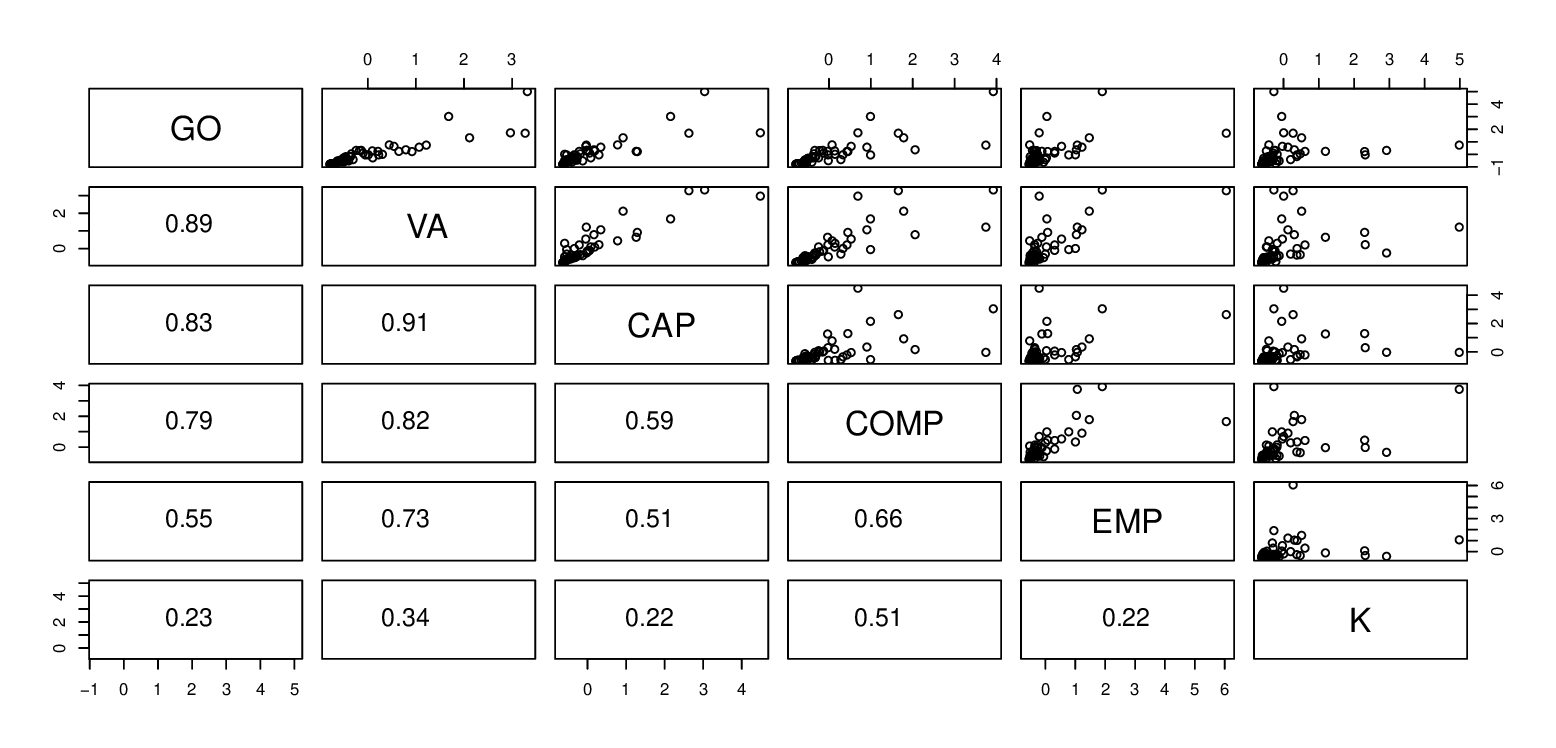}
	\caption{Scatterplot matrix of 6 variables and the lower panel of this matrix denotes the correlation coefficient between variables.}
	\label{fig: scatterplot}
\end{figure}

\subsection{Details of sectors:}

According to ISIC Rev.4, industries in WIOD release 2016 are shown in Table~\ref{tab:sector_wiod2016}.
\begin{longtable}{p{0.03\textwidth}p{0.11\textwidth}p{0.4\textwidth}p{0.4\textwidth}}
		\textbf{No.} & \textbf{Industry} & \textbf{Description}& \textbf{Community}\\
		\hline 
		\endhead                                                                                                                                          
		1 & A01 & Crop and animal production, hunting and related service activities & Agriculture, forestry and fishing \\
		2 & A02 & Forestry and logging & Agriculture, forestry and fishing\\
		3 & A03 & Fishing and aquaculture & Agriculture, forestry and fishing \\
		4 & B & Mining and quarrying & Mining and quarrying\\
		5 & C10-C12 & Manufacture of food products, beverages and tobacco products & Manufacturing\\
		6 & C13-C15 & Manufacture of textiles, wearing apparel and leather products & Manufacturing\\
		7 & C16 & Manufacture of wood and of products of wood and cork, except furniture; etc. & Manufacturing\\
		8 & C17 & Manufacture of paper and paper products & Manufacturing\\
		9 & C18 & Printing and reproduction of recorded media & Manufacturing\\
		10 & C19 & Manufacture of coke and refined petroleum products & Manufacturing\\
		11 & C20 & Manufacture of chemicals and chemical products & Manufacturing\\
		12 & C21 & Manufacture of basic pharmaceutical products and pharmaceutical preparations & Manufacturing\\
		13 & C22 & Manufacture of rubber and plastic products & Manufacturing\\
		14 & C23 & Manufacture of other non-metallic mineral products & Manufacturing\\
		15 & C24 & Manufacture of basic metals & Manufacturing\\
		16 & C25 & Manufacture of fabricated metal products, except machinery and equipment & Manufacturing\\
		17 & C26 & Manufacture of computer, electronic and optical products & Manufacturing\\
		18 & C27 & Manufacture of electrical equipment & Manufacturing\\
		19 & C28 & Manufacture of machinery and equipment n.e.c. & Manufacturing\\
		20 & C29 & Manufacture of motor vehicles, trailers and semi-trailers & Manufacturing\\
		21 & C30 & Manufacture of other transport equipment & Manufacturing\\
		22 & C31-C32 & Manufacture of furniture; other manufacturing & Manufacturing\\
		23 & C33 & Repair and installation of machinery and equipment & Manufacturing\\
		24 & D & Electricity, gas, steam and air conditioning supply & Electricity, gas, steam and air conditioning supply\\
		25 & E36 & Water collection, treatment and supply & Water supply; sewerage, waste management and remediation activities\\
		26 & E37-E39 & Sewerage; waste collection, treatment and disposal activities; materials recovery; etc. & Water supply; sewerage, waste management and remediation activities\\
		27 & F & Construction & Construction\\
		28 & G45 & Wholesale and retail trade and repair of motor vehicles and motorcycles & Wholesale and retail trade; repair of motor vehicles and motorcycles\\
		29 & G46 & Wholesale trade, except of motor vehicles and motorcycles & Wholesale and retail trade; repair of motor vehicles and motorcycles \\
		30 & G47 & Retail trade, except of motor vehicles and motorcycles & Wholesale and retail trade; repair of motor vehicles and motorcycles\\
		31 & H49 & Land transport and transport via pipelines & Transportation and storage\\
		32 & H50 & Water transport & Transportation and storage\\
		33 & H51 & Air transport & Transportation and storage\\
		34 & H52 & Warehousing and support activities for transportation & Transportation and storage\\
		35 & H53 & Postal and courier activities & Transportation and storage\\
		36 & I & Accommodation and food service activities & Accommodation and food service activities\\
		37 & J58 & Publishing activities & Information and communication\\
		38 & J59-J60 & Motion picture, video and television program production, sound recording and music publishing activities; etc. & Information and communication\\
		39 & J61 & Telecommunications & Information and communication\\
		40 & J62-J63 & Computer programming, consultancy and related activities; information service activities & Information and communication\\
		41 & K64 & Financial service activities, except insurance and pension funding & Financial and insurance activities\\
		42 & K65 & Insurance, reinsurance and pension funding, except compulsory social security & Financial and insurance activities\\
		43 & K66 & Activities auxiliary to financial services and insurance activities & Financial and insurance activities\\
		44 & L & Real estate activities & Real estate activities\\
		45 & M69-M70 & Legal and accounting activities; activities of head offices; management consultancy activities & Professional, scientific and technical activities\\
		46 & M71 & Architectural and engineering activities; technical testing and analysis & Professional, scientific and technical activities\\
		47 & M72 & Scientific research and development & Professional, scientific and technical activities\\
		48 & M73 & Advertising and market research & Professional, scientific and technical activities\\
		49 & M74-M75 & Other professional, scientific and technical activities; veterinary activities & Professional, scientific and technical activities\\
		50 & N & Rental and leasing activities, Employment activities, Travel services, security and services to buildings & Administrative and support service activities\\
		51 & O & Public administration and defence; compulsory social security & Public administration and defence; compulsory social security\\
		52 & P & Education & Education\\
		53 & Q & Human health and social work activities & Human health and social work activities\\
		54 & R-S & Creative, Arts, Sports, Recreation and entertainment activities and all other personal service activities & Arts, entertainment and recreation; Other service activities\\
		55 & T & Activities of households as employers; undifferentiated goods- and services-producing activities of households for own use & Activities of households as employers; undifferentiated goods- and services-producing activities of households for own use\\
		56 & U & Activities of extra-territorial organizations and bodies & Activities of extraterritorial organizations and bodies\\
		\hline
		\caption{56 sectors and their corresponding communities in WIOD release 2016}
		\label{tab:sector_wiod2016}
	\end{longtable}

\subsection{Regression results}
Table~\ref{tab:tab5.3} shows the estimated results of regression models for 2007 and 2014 WIOD tables.
\begin{table}[h]
	\centering
	\begin{subtable}[t]{\textwidth}
		\centering
		\begin{tabular}{lccccc}
			\hline
			Variable & Estimate & Std Error & F value & p-value & Significance  \\
			\hline
			Z & 0.7018 & 0.1136 & 84.68 & 2.03e-12 & \textbf{***} \\
			VA & 0.8905 & 0.0690 & 334.04 & $<$2.2e-16 & \textbf{***} \\
			EMP & -0.0648 & 0.0625 & 1.14 & 0.2915 &  \\
			K & 0.0057 & 0.0528 & 0.01 & 0.9141 &  \\
			Intercept & -0.3778 & 0.0763 & & & \\
			\hline
		\end{tabular}
		\caption{Results for 2007}
		\label{tab:5.3a}
	\end{subtable}
        \bigskip
	\begin{subtable}[t]{\textwidth}
		\centering
		\begin{tabular}{lccccc}
			\hline
			Variable & Estimate & Std Error & F value & p-value & Significance  \\
			\hline
			Z & 0.6112 & 0.1319 & 68.08 & 5.90e-11 & \textbf{**} \\
			VA & 0.9517 & 0.0811 & 269.99 & $<$2.2e-16 & \textbf{***} \\
			EMP & -0.1385 & 0.0759 & 3.26 & 0.0769 & \textbf{$\cdot$} \\
			K & -0.0155 & 0.0555 & 0.08 & 0.7809 &  \\
			Intercept & -0.3446 & 0.0896 & & & \\
			\hline
		\end{tabular}
		\caption{Results for 2014}
		\label{tab:5.3b}
	\end{subtable}
	\caption{Estimated coefficients, standard errors, and ANOVA results for 2007 and 2014. Significance levels: \textbf{***} (p $<$ 0.001), \textbf{**} (p $<$ 0.01), \textbf{*} (p $<$ 0.05), \textbf{$\cdot$} (p $<$ 0.1).}
	\label{tab:tab5.3}
\end{table}

  \bibliographystyle{chicago}      
\bibliography{Netreg}   